\newcommand{\jewel}{\text{\faDiamond}}
\newcommand{\cuboid}{\text{\faCube}}
\renewcommand{\epsilon}{\varepsilon}
\renewcommand{\phi}{\varphi}
\newtheorem*{rep@theorem}{\rep@title}
\newcommand{\newreptheorem}[2]{
\newenvironment{rep#1}[1]{
 \def\rep@title{#2 \ref{##1}}
 \begin{rep@theorem}}
 {\end{rep@theorem}}}
\newtheorem{thm}{Theorem}[section]
\newtheorem*{thm*}{Theorem}
\newtheorem{lem}[thm]{Lemma}
\newtheorem{cor}[thm]{Corollary}
\newtheorem{ex}[thm]{Example}
\newtheorem{defi}[thm]{Definition}
\newtheorem{prop}[thm]{Proposition}
\newtheorem{remark}[thm]{Remark}
\title[Polytope compatibility]{\Large{Polytope compatibility}\\\vspace{.3cm}\small{--- from quantum measurements to magic squares ---}}
\author{Andreas Bluhm$^{1,\lowercase{a}}$, Ion Nechita$^{2,\lowercase{b}}$ \and Simon Schmidt$^{3,\lowercase{c}}$}
\email{$^a$andreas.bluhm@univ-grenoble-alpes.fr}
\address{$^1$Univ. Grenoble Alpes, CNRS, Grenoble INP, LIG, 38000 Grenoble, France}
\email{$^b$ion.nechita@univ-tlse3.fr}
\address{$^2$Laboratoire de Physique Th\'eorique, Universit\'e de Toulouse, CNRS, UPS, France}
\email{$^c$s.schmidt@rub.de}
\address{$^3$QMATH, Department of Mathematical Sciences, University of Copenhagen, Universitetsparken 5, 2100 Copenhagen, Denmark}
\address{$^3$Faculty of Computer Science, Ruhr University Bochum, Universitätsstra{\ss}e 150, 44801 Bochum, Germany}
\begin{document}

\maketitle

\date{\today}

\begin{abstract}
Several central problems in quantum information theory (such as measurement compatibility and quantum steering) can be rephrased as membership in the minimal matrix convex set corresponding to special polytopes (such as the hypercube or its dual). In this article, we generalize this idea and introduce the notion of polytope compatibility, by considering arbitrary polytopes. We find that semiclassical magic squares correspond to Birkhoff polytope compatibility. In general, we prove that polytope compatibility is in one-to-one correspondence with measurement compatibility, when the measurements have some elements in common and the post-processing of the joint measurement is restricted. Finally, we consider how much tuples of operators with appropriate joint numerical range have to be scaled in the worst case in order to become polytope compatible and give both analytical sufficient conditions and numerical ones based on linear programming.
\end{abstract}

\section{Introduction}

A polytope $\mathcal P$ containing the origin can be characterized in two different but equivalent ways: 
\begin{itemize}
    \item by its facets, as an intersection of half-spaces (the ``H'' representation):
    \begin{equation}\label{eq:polytope-facets}
        \mathcal P = \bigcap_{i=1}^f \{x \in \mathbb R^g \, : \, \langle x, h_i \rangle \leq 1 \},
    \end{equation}
    \item by its extreme points, as a convex hull (the ``V'' representation):
    \begin{equation}\label{eq:polytope-extremal-points}
        \mathcal P = \operatorname{conv}\{v_i\}_{i=1}^k.
    \end{equation}
\end{itemize}
These two different points of view are graphically represented in Figure \ref{fig:polytope}. 

\begin{figure}
    \centering
    \includegraphics{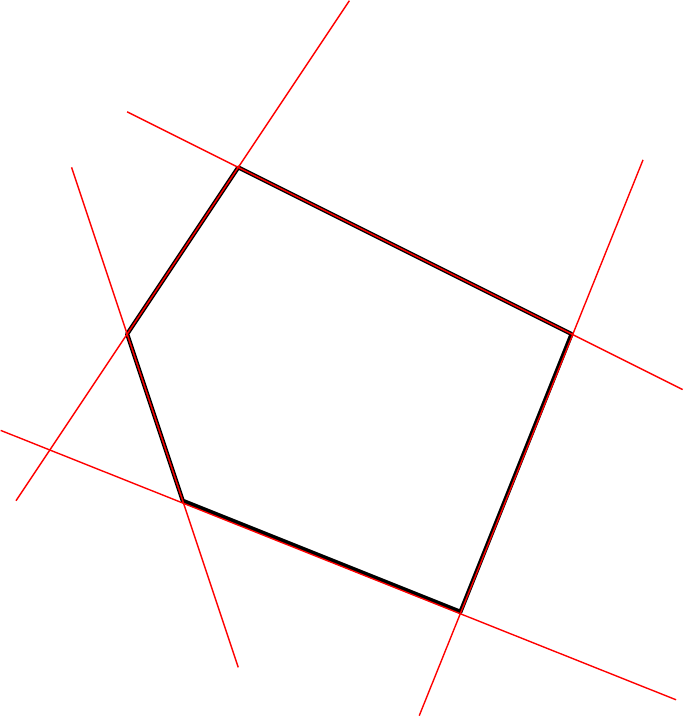} \qquad\qquad\qquad \includegraphics{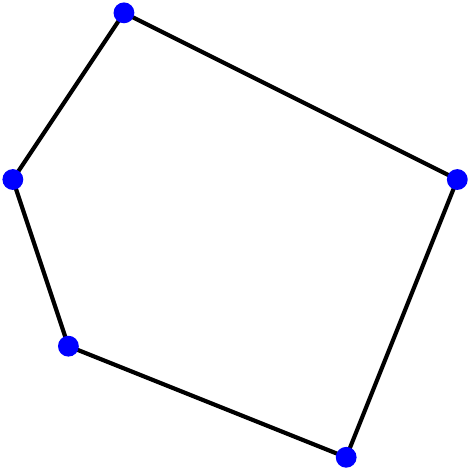}
    \caption{A polytope $\mathcal P$ described by its {facets} as an intersection of \textcolor{red}{half-spaces} (left) and as the convex hull of its \textcolor{blue}{extreme points} (right).}
    \label{fig:polytope}
\end{figure}

If we want to allow the elements of the polytope to be tuples of matrices instead of tuples of scalars, these two conditions give rise to two different and \emph{inequivalent} (in general) such matricization, which are both special cases of so-called \emph{matrix convex sets}
\begin{itemize}
    \item the facet description from Eq.~\eqref{eq:polytope-facets} generalizes to the set
    \begin{equation}
        \mathcal P_{\max}(d):= \{(A_1, \ldots, A_g) \in \mathcal M_d^{\mathrm{sa}}(\mathbb C)^g \, : \, \langle A, h_i \otimes \rho \rangle \leq 1 \quad \forall i \in [f], \, \forall \rho \in \mathcal M_d^{1,+}(\mathbb C)\},
    \end{equation}
    \item and the extreme points description from Eq.~\eqref{eq:polytope-extremal-points} generalizes to the set
    \begin{equation}
        \mathcal P_{\min}(d):= \left\{(A_1, \ldots, A_g) \in \mathcal M_d^{\mathrm{sa}}(\mathbb C)^g \, : \, \exists \,\text{POVM} \,C \, \text{s.t.} \, A_x = \sum_{i=1}^k v_i(x)C_i, \, \forall x \in [g]\right\}.
    \end{equation}    
\end{itemize}
We refer the reader to Section \ref{sec:preliminaries} for the definition of the set of density matrices $\mathcal M_d^{1,+}(\mathbb C)$, which describe quantum states, and that of a positive operator valued measure (POVM), which describe quantum measurements. As our notation suggests, $\mathcal P_{\min}$ is the smallest matrix convex set arising from $\mathcal P$ and $\mathcal P_{\max}$ is the largest.

The appearance of density matrices and POVMs in the definition of the sets $\mathcal P_{\min}$, $\mathcal P_{\max}$ suggest that there might be a link between such matrix convex sets and quantum information theory. Indeed, in the articles \cite{bluhm2018joint, bluhm2020compatibility, Bluhm2022norms}, some of the present authors realized that if one takes $\mathcal P$ to be the hypercube $[-1,1]^g$, then the following correspondence holds:
\begin{equation*}
    (2E_1-I, \ldots, 2E_g-I) \in ([-1,1]^g)_{\max} \iff \{E_i, I-E_i\}~\mathrm{POVMs}~\forall i \, .
\end{equation*}
What about $([-1,1]^g)_{\min}$? One of the defining properties that distinguish quantum mechanics from our everyday experience based on classical mechanics is the existence of \emph{incompatible measurements}, i.e., measurements that cannot be performed at the same time \cite{Heisenberg1927, Bohr1928}. Such measurements are indispensable for detecting quantum non-locality \cite{Fine1982} and can therefore be seen as a resource for many quantum information theoretic tasks similar to entanglement \cite{Brunner2014, Heinosaari2015}. For the measurements that are compatible, a joint measurement exists such that their outcomes can be recovered by classically post-processing the outcomes of the joint measurement. It turns out that membership in $([-1,1]^g)_{\min}$ is related to measurement compatibility
\begin{equation*}
    (2E_1-I, \ldots, 2E_g-I) \in ([-1,1]^g)_{\min} \iff \{E_i, I-E_i\}~\mathrm{compatible~POVMs}~\forall i \, .
\end{equation*}
The reformulation of measurement compatibility as minimal and matrix convex sets has been instrumental in finding new bounds on the maximal amount of incompatibility available in different situations. 

The success of the study of minimal and maximal matrix convex sets for the hypercube suggests the natural question: \emph{What tasks in quantum information theory can be formulated as membership in $\mathcal P_{\min}$, $\mathcal P_{\max}$ for polytopes $\mathcal P$?} This is the task this paper sets out to solve.

Motivated by the example of measurement compatibility, we define a notion of \emph{polytope operators} and \emph{polytope compatibility}. A tuple of matrices is a $\mathcal P$-operator if it is in $\mathcal P_{\max}$ and it is $\mathcal P$-compatible if it is in $\mathcal P_{\min}$. We study equivalent formulations and implications of polytope compatibility in Section \ref{sec:polytope-compatibility}. In particular, we characterize the elements which are $\mathcal P$-compatible if and only if they are $\mathcal P$-operators in Theorem \ref{thm:P-compatible-iff-P-operator} and Corollary \ref{cor:nicer-form}. An informal version is the following:
\begin{thm*}
    Let $A$ be a $g$-tuple of self-adjoint operators. Then, $A$ is $\mathcal P$-compatible for \emph{all} polytopes $\mathcal P$ such that $A$ are $\mathcal P$-operators if and only if the operators $A$ are unitarily equivalent to a direct sum of a commuting part $B$ and a non-commuting part $E$. Here, the spectrum of $B$ generates the numerical range of $A$  and the matrix convex set generated by $E$ is contained in the matrix convex set generated by the numerical range of $A$.
\end{thm*}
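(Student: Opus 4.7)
The plan is to reformulate the hypothesis and then treat the two directions of the equivalence separately, since they are of very different flavor. First, I would observe that $A$ is a $\mathcal P$-operator iff $\mathrm{conv}(W(A)) \subseteq \mathcal P$ (with $W(A)$ the joint numerical range of $A$), so the hypothesis says that $A \in \mathcal P_{\min}$ for every polytope $\mathcal P$ containing $\mathrm{conv}(W(A))$. For the easy ``if'' direction, assume $A = B \oplus E$ with the stated properties and fix such a polytope $\mathcal P$ with vertices $\{v_i\}$. For $B$, I would simultaneously diagonalize $B_x = \sum_\lambda \lambda_x \Pi_\lambda$, write each joint eigenvalue $\lambda \in \mathrm{conv}(\mathrm{spec}(B)) = \mathrm{conv}(W(A)) \subseteq \mathcal P$ as a convex combination $\lambda = \sum_i \mu_i^\lambda v_i$ of vertices of $\mathcal P$, and set $C_i^B := \sum_\lambda \mu_i^\lambda \Pi_\lambda$; this produces a POVM satisfying $B_x = \sum_i v_i(x) C_i^B$. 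For $E$, monotonicity of the $\min$-construction in its generating set gives $W(A)_{\min} \subseteq \mathcal P_{\min}$, and the hypothesis on $E$ then yields $E \in W(A)_{\min} \subseteq \mathcal P_{\min}$. Direct-summing the two POVMs realizes $A \in \mathcal P_{\min}$.

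For the harder ``only if'' direction, the strategy is to identify $B$ as the restriction of $A$ to the direct sum of its joint eigenspaces at the extreme points of $\mathrm{conv}(W(A))$, with $E$ being the complementary block. The key step is to prove that every extreme point $v$ of $\mathrm{conv}(W(A))$ is actually a joint eigenvalue of $A$ with non-trivial eigenspace $\mathcal H_v$. To this end, I would apply the hypothesis to a family of polytopes $\mathcal P^{(k)}$ that are tight at $v$ (having $v$ as a vertex, with all remaining vertices pushed to a safe distance away from $v$) and shrink to $\mathrm{conv}(W(A))$. Each such polytope provides a POVM $\{C_i^{(k)}\}$ with $A_x = \sum_i v_i^{(k)}(x) C_i^{(k)}$, and the element $C_v^{(k)}$ associated to the vertex $v$ is forced to be non-trivial in the limit. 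Combining this with the classical fact that unit vectors achieving an extreme value of the joint numerical range are joint eigenvectors at that value, together with a compactness argument on POVMs, yields a non-zero joint eigenspace $\mathcal H_v$. One then sets $\mathcal H_B := \bigoplus_{v \in \mathrm{ext}(\mathrm{conv}(W(A)))} \mathcal H_v$, $\mathcal H_E := \mathcal H_B^{\perp}$, and defines $B := A|_{\mathcal H_B}$, $E := A|_{\mathcal H_E}$.

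To finish, one verifies the two conclusions on $B$ and $E$: the first is immediate since $\mathrm{spec}(B)$ is by construction the set of extreme points of $\mathrm{conv}(W(A))$, whose convex hull is $\mathrm{conv}(W(A))$ itself. For the second, I would show that $E$ inherits the compatibility assumption by compressing the POVMs witnessing $A \in \mathcal Q_{\min}$ (for polytopes $\mathcal Q \supseteq W(A)$) to the block $\mathcal H_E$, and then passing to the limit as $\mathcal Q$ shrinks to $\mathrm{conv}(W(A))$ to conclude $E \in W(A)_{\min}$. The main obstacle is the extraction step above: controlling the non-uniqueness of the POVMs witnessing $A \in \mathcal P^{(k)}_{\min}$ as $k \to \infty$, and extracting a convergent subsequence whose limiting ``vertex mass'' at $v$ is genuinely supported on the joint eigenspace of $A$ at $v$. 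An additional subtlety appears when $\mathrm{conv}(W(A))$ is not itself a polytope, in which case the hypothesis only ever applies to strictly larger polytopes and one needs the appropriate closure of $W(A)_{\min}$ to make the limiting argument go through.
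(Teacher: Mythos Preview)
Your ``if'' direction is essentially the paper's argument and is fine.

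For the ``only if'' direction, the claimed ``classical fact that unit vectors achieving an extreme value of the joint numerical range are joint eigenvectors at that value'' is \emph{false}. Take $A=(\sigma_X,\sigma_Z)$: the point $(0,1)$ is an extreme point of $\operatorname{conv}\mathcal W(A)$ (the unit disk) and is achieved by $\psi=e_1$, yet $e_1$ is not an eigenvector of $\sigma_X$. So this step cannot be used as a black box; it is precisely the thing you must prove from the hypothesis.

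Your POVM-at-a-tight-vertex idea does contain the germ of a correct argument: if $v$ is a vertex of a polytope $\mathcal P\supseteq\operatorname{conv}\mathcal W(A)$ and $A=\sum_i v_i\otimes C_i$, then for any unit vector $\psi$ with $(\langle\psi,A_x\psi\rangle)_x=v$ one gets $\langle\psi,C_v\psi\rangle=1$ by extremality, hence $C_i\psi=0$ for $i\neq v$ and $A_x\psi=v(x)\psi$. But this requires $v$ to be a \emph{vertex} of some containing polytope, which fails for non-exposed extreme points of $\operatorname{conv}\mathcal W(A)$; your proposed limit over shrinking polytopes $\mathcal P^{(k)}$ then has to carry real weight, and you correctly flag this as the main obstacle without resolving it.

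The paper avoids both issues by separating the two steps cleanly. First, a polytope-approximation/compactness argument (Lemma~\ref{lem:A-in-min}) is used once and for all to establish $A\in(\operatorname{conv}\mathcal W(A))_{\min}$; this handles all the limiting. Second, a short Arveson-extreme-point argument (Lemma~\ref{lem:arveson-EP-min}) shows that \emph{every} extreme point $\lambda$ of a compact convex set $\mathcal C$ is an Arveson extreme point of $\mathcal C_{\min}$: write $A=\sum_i z_i\otimes Q_i$ with $z_i\in\operatorname{conv}\mathcal W(A)$, compress to a unit vector achieving $\lambda$, and use extremality in $\mathcal C$ (not in a polytope) to force the splitting $A\cong\lambda\oplus Y$. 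This works uniformly for exposed and non-exposed extreme points, and no further limits are needed. The inclusion $\operatorname{MR}(E)\subseteq(\operatorname{conv}\mathcal W(A))_{\min}$ then drops out immediately since $E$ is a compression of $A\in(\operatorname{conv}\mathcal W(A))_{\min}$, without the second limiting procedure you propose.
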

Then we discuss some examples before we treat the general case. In Section \ref{sec:meas-compatibility-polytope}, we review in detail which implications arise for measurement compatibility from our results on polytope compatibility. 

In Section \ref{sec:magic}, we show that another well-known problem from quantum information theory can be formulated as polytope compatibility, namely the study of \emph{quantum magic squares}. An $N \times N$ block matrix $(A_{ij})_{i,j \in [N]}$ with $d$-dimensional matrices $A_{ij}$ is a quantum magic square if both its rows $\{A_{ij}\}_{j \in [N]}$ and columns  $\{A_{ij}\}_{i \in [N]}$ form POVMs. This can be expressed with the help of the Birkhoff polytope (the set of bistochastic matrices), projected onto its supporting affine subspace. Calling this polytope $\mathcal B_N$, we arrive at the following equivalence:
\begin{equation*}
    A \in (\mathcal B_N)_{\max} \iff A \mathrm{~is~a~quantum~magic~square}.
\end{equation*}
A quantum magic square is especially simple if it has a hidden structure in terms of a tensor product of permutation matrices and a POVM. In \cite{cuevas2020quantum}, such a quantum magic square is called \emph{semiclassical}, whereas \cite{guerini2018joint} uses the term doubly normalised tensor of positive
semi-definite operators. The interest in such objects comes from the Birkhoff-von Neumann theorem, which states that the bistochastic matrices are the convex hull of the permutation matrices. The semiclassical magic squares can be seen as a matricization of this idea. It is known that not all quantum magic squares are semiclassical, but we can characterize the ones that are:
\begin{equation*}
    A \in (\mathcal B_N)_{\min} \iff A \mathrm{~is~a~semiclassical~quantum~magic~square}.
\end{equation*}
One might be tempted to conjecture that semiclassical quantum magic squares are the ones in which the row and column POVMs are compatible. However, we give an explicit example in Section \ref{sec:magic} of a quantum magic square with compatible POVMs which is \emph{not} semiclassical. Using one of our reformulations of $\mathcal P$-compatibility as factorization of an associated map through a simplex, Proposition \ref{prop:factoring}, we recover being a semiclassical quantum magic square does not only require the POVMs to be compatible, but also that the post-processing via which they arise from a joint measurement is symmetric (previously observed in \cite{guerini2018joint}).

In Section \ref{sec:POVM-common-elements} we find that polytope compatibility corresponds in general to the compatibility of POVMs with common elements under restricted post-processing. Any collection of POVMs which share elements can be represented as a hypergraph, where each POVM element is a vertex and which POVM elements belong to the same POVM is represented by hyperedges. Such hypergraphs are in one-to-one correspondence with polytopes $\mathcal P$ having vertices with rational coordinates.  Being a $\mathcal P$-operator then corresponds to being a collection of POVMs with the desired common elements. These POVMs are $\mathcal P$-compatible if and only if the POVMs are compatible and have a joint measurement from which they arise under restricted post-processing. We illustrate this in Proposition \ref{prop:3-outcomes-with-shared-effect} where we consider two POVMs with a common element $A$, such that the POVMs become
\begin{equation} \label{eq:2-3-outcome-POVMs}
    (A, B, I-A-B) \qquad (A, C, I-A-C).
\end{equation}
The polytope to which this compatibility structure corresponds is a pyramid with square basis. $(A,B,C)$ is in the minimal matrix convex set corresponding to the pyramid if and only if the two POVMs above are compatible and have a joint POVM $Q$ with five elements from which they arise as 
\begin{align*}
    A &= Q_1 \\
    B &= Q_4 + Q_5 \\
    C &= Q_3 + Q_5.
\end{align*}
We conclude the section with an explicit counterexample that not all compatible POVMs as in Eq.~\eqref{eq:2-3-outcome-POVMs} have a joint POVM of the form above, which shows that the restricted post-processing is indeed necessary.

Finally, we consider in Section \ref{sec:inclusion-constants} the question of how much  we need to shrink level-$d$ of the maximal matrix convex set for a polytope to fit it into the minimal matrix convex set for the same polytope. This leads us to the set of \emph{inclusion constants}:
\begin{equation*}
    \Delta_{\mathcal P}(d):=\Set{s \in \mathbb R^g:  s \cdot \mathcal P_{\max}(d) \subseteq  \mathcal P_{\min}(d)}.
\end{equation*}
It has been shown in \cite{bluhm2018joint} that 
\begin{equation*}
       s \in \Delta_{\mathcal [-1,1]^g}(d) \cap [0,1]^g \iff \{\widetilde E_i, I-\widetilde E_i\}~\mathrm{compatible}~\forall d-\text{dimensional~POVMs~} \{E_i, I- E_i\}\, .
\end{equation*}
Here, the $\widetilde E_i$ are the noisy versions of the $E_i$ under added white noise according to $s$, i.e.,
\begin{equation*}
   \widetilde E_i := s_i E_i + (1-s_i) \frac{I_d}{2}.
\end{equation*}
The smaller $s_i$ is, the noisier the measurement $\{\widetilde E_i, I-\widetilde E_i\}$. Thus, the set $\Delta_{\mathcal [-1,1]^g}(d)\cap [0,1]^g$ is quantifying the maximal incompatibility in $d$-dimensional dichotomic measurements.  

We give several sufficient conditions in this article for $s$ to be in $\Delta_{\mathcal P}(d)$. In Section \ref{sec:different-polytopes}, we show how to obtain inclusion constant from comparing the polytope of interest to another polytope for which the set of inclusion constants is known. In Section \ref{sec:symmetrization}, we prove sufficient conditions based on the symmetrization of the polytope inspired by previous work \cite{bluhm2020compatibility}. In particular, Proposition \ref{prop:symmetric-inclusion} generalizes findings from \cite{bluhm2018joint, helton2019dilations} and could be of independent interest. In particular, the proposition implies for polytopes such that $\mathcal P = -\mathcal P$
\begin{equation*}
    \frac{1}{2d-1} \in  \Delta_{\mathcal P}(d) \text{~for~} d \text{~even~}, \qquad   \frac{1}{2d+1} \in  \Delta_{\mathcal P}(d) \text{~for~} d \text{~odd} \, . 
\end{equation*}
We put forward a linear program in Section \ref{sec:linear-programming} and show that we can use it to compute vectors $s \in  \Delta_{\mathcal P}(d)$ for \emph{all} $d$ efficiently. Informally, the linear program is the following feasibility problem (see Theorem \ref{thm:bound-from-LP}):
\begin{thm*}
    Given $s \in \mathbb R^g$, if there exists an entrywise non-negative matrix $T$ such that 
    $$\operatorname{diag}(s_1, s_2, \ldots, s_g, 1)  =\hat V T \hat H ,$$
    then $s \in \Delta_{\mathcal P}(d)$, for all $d \geq 1$. Here, $\hat V$ is a matrix containing the vertices and $\hat H$ a matrix containing the facets of the polytope $\mathcal P$. 
\end{thm*}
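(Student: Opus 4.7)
The plan is to read the LP certificate $T \geq 0$ as a mixing matrix that turns the facet slacks of $A$ into an explicit POVM realizing $s\cdot A$, and to observe that this construction is $d$-independent, so the same $T$ certifies $s \in \Delta_{\mathcal P}(d)$ at every level of the hierarchy.

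Fix $A = (A_1, \ldots, A_g) \in \mathcal P_{\max}(d)$. Unfolding the definition by letting $\rho$ range over all density matrices, membership is equivalent to the operator inequalities $\sum_x h_j(x) A_x \leq I$ for every facet $j \in [f]$, so that the \emph{facet slacks}
$$S_j := I - \sum_x h_j(x)\, A_x, \qquad j \in [f],$$
are positive semidefinite; this is the only consequence of $A \in \mathcal P_{\max}(d)$ that the argument will use. Think of $\hat V$ as having columns $\binom{v_i}{1}$ and of $\hat H$ as having rows $(-h_j, 1)$, the extra coordinate homogenizing both the affine inequality $\langle h_j, \cdot\rangle \leq 1$ and the POVM normalization. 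Given an entrywise non-negative $T$ with $\hat V T \hat H = \operatorname{diag}(s_1, \ldots, s_g, 1)$, define
$$C_i := \sum_{j=1}^f T_{ij}\, S_j, \qquad i \in [k].$$
Non-negativity $C_i \geq 0$ is immediate from $T_{ij} \geq 0$ and $S_j \geq 0$.

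It then remains to verify that $\{C_i\}_i$ is a POVM and that $s_x A_x = \sum_i v_i(x) C_i$ for each $x$. The matrix identity $\hat V T \hat H = \operatorname{diag}(s,1)$ splits, block by block, into four scalar identities: the upper-left $g\times g$ block gives $\sum_{ij} v_i(x) T_{ij} h_j(y) = -s_x \delta_{xy}$; the last row and last column give $\sum_{ij} T_{ij} h_j(y) = 0$ and $\sum_{ij} v_i(x) T_{ij} = 0$, respectively; and the bottom-right entry gives $\sum_{ij} T_{ij} = 1$. Substituting $S_j = I - \sum_y h_j(y) A_y$ into $\sum_i C_i$ and $\sum_i v_i(x) C_i$ and applying these four identities collapses the first sum to $I$ (so $\{C_i\}$ is a POVM) and the second to $s_x A_x$, yielding $s \cdot A \in \mathcal P_{\min}(d)$ as required. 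Because $T_{ij}$, $v_i(x)$, and $h_j(y)$ are $d$-independent scalars, the certificate works uniformly in $d$, which is precisely why a single LP controls the entire hierarchy. The only real obstacle I anticipate is bookkeeping: pinning down the sign and homogenization convention for $\hat V$ and $\hat H$ so that the four block identities come out with exactly the signs the POVM verification needs—everything past that reduces to the single observation that $\mathcal P_{\max}$-membership produces positive slacks, which an entrywise non-negative mixing matrix can then freely combine into a valid POVM.
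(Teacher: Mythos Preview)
Your proof is correct and is essentially the same as the paper's argument, just with the matrix algebra unpacked: the paper writes your slacks as the entries of $\hat H \hat A$, sets $C := T \hat H \hat A$, and invokes Lemmas~\ref{lem:P-operators-hat-H} and~\ref{lem:P-compatible-hat-V} in place of your explicit four-block verification. The content is identical.
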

We showcase the usefulness of our methods by applying them to the Birkhoff polytope and the pyramid corresponding to POVMs with shared element as in Eq.~\eqref{eq:2-3-outcome-POVMs}.

Thus, in summary, this article generalized the correspondence between measurement incompatibility and the minimal and maximal matrix convex sets of the hypercube. We find that polytope compatibility is in one-to-one correspondence with measurement compatibility with common elements and restricted post-processing. As an example, we find that being a semiclassical magic square corresponds to being Birkhoff-polytope compatible.

\section{Preliminaries}\label{sec:preliminaries}

\subsection{Notation}
For simplicity, we use the notation $[n]:= \{1, 2, \ldots, n\}$ for $n \in \mathbb N$. By $\mathcal S_n$, we denote the group of permutations on $n$ elements. For a convex set $\mathcal P \subset \mathbb R^n$, we will write $\mathcal P^\circ := \{h \in \mathbb R^n: \langle h, x \rangle \leq 1~\forall x \in \mathcal P\}$ for its polar dual.

For complex $n \times n$ matrices, we write $\mathcal M_n(\mathbb C)$. If we restrict to Hermitian matrices, we will write $\mathcal M^{\mathrm{sa}}_n(\mathbb C)$. To indicate that such a matrix $A$ is positive semidefinite, we will write $A \geq 0$, and we will denote its trace by $\mathrm{Tr}[A]$. If we restrict to density matrices, we will write  $\mathcal M_n^{1,+}(\mathbb C):= \{\rho \in \mathcal M^{\mathrm{sa}}_n(\mathbb C): \rho \geq 0,~\mathrm{Tr}[\rho]=1\}$. We write $I_n$ for the identity matrix, but we will sometimes drop the subscript. We will write $\mathcal B(\mathcal H)$ for the bounded operators on a Hilbert space $\mathcal H$.

Given a $g$-tuple $A = (A_1, \ldots, A_g)$ of self-adjoint operators in $\mathcal M_d^{\mathrm{sa}}(\mathbb C)$, their \emph{joint numerical range} is defined by
\begin{equation}
    \mathcal W(A) := \{(\langle x, A_1 x \rangle, \ldots, \langle x, A_g x \rangle) \, : \, x \in \mathbb C^d, \, \|x\|_2=1\} \subseteq \mathbb R^g,
\end{equation}
where $\|\cdot\|_2$ is the Euclidean norm. For $g=1$, one recovers the usual numerical range of a matrix, which is a convex set by the celebrated Toeplitz–Hausdorff theorem; however, for $g \geq 4$, the joint numerical range is, in general, not convex \cite{li2000convexity,gutkin2004convexity}. The convex hull of the joint numerical range is obtained by going from unit vectors (i.e.~pure quantum states) to density matrices (i.e.~mixed quantum states): 
$$ \operatorname{conv} \mathcal W(A) = \{( \operatorname{Tr}[A_1\rho], \ldots \operatorname{Tr}[A_g\rho]) \, : \, \rho \in \mathcal M_d^{1,+}(\mathbb C)\}.$$

 \subsection{Convexity}\label{sec:convexity}
 A well known way to combine two polytopes $\mathcal P_1 \subseteq \mathbb R^{k_1}$, $\mathcal P_2 \subseteq \mathbb R^{k_2}$ into another one is taking their Cartesian product:
 \begin{equation*}
\mathcal P_1 \times \mathcal P_2 := \Set{(x,y) \in \mathbb R^{k_1 + k_2}: x \in \mathcal P_1, y \in \mathcal P_2}.
\end{equation*} 

 Another one is the direct sum:
\begin{equation*}
\mathcal P_1 \oplus \mathcal P_2 := \mathrm{conv}\left(\Set{(x,0) \in \mathbb R^{k_1 + k_2}: x \in \mathcal P_1} \cup \Set{(0,y) \in \mathbb R^{k_1 + k_2}: y \in \mathcal P_2}\right),
\end{equation*} 
which is again a polytope. If both polytopes contain $0$, it holds that \cite[Lemma 2.4]{Bremner1997}
\begin{equation*}
\mathcal P_1 \oplus \mathcal P_2 = (\mathcal P_1^\circ \times \mathcal P_2^\circ)^\circ.
\end{equation*}
See also \cite[Lemma 3.6]{bluhm2020compatibility} for a short proof. Moreover, if both polytopes contain $0$, clearly
\begin{equation}\label{eq:direct-sum-smaller-Cartesian-product}
    \mathcal P_1 \oplus \mathcal P_2 \subseteq \mathcal P_1 \times \mathcal P_2,
\end{equation}
 as $0 \in \mathcal P_i$ for $i \in [2]$ implies $(p_1, 0)$, $(0, p_2) \in \mathcal P_1 \times \mathcal P_2$ for all $p_i \in \mathcal P_i$. 

\subsection{Maximal and minimal matrix convex sets}
We will now review some basic results about matrix convex sets, with a focus on minimal and maximal matrix convex sets. For more details, we refer the reader to \cite{davidson2016dilations}.

\begin{defi}
Let $g \in \mathbb N$. Moreover, let $\mathcal F(n) \subseteq \mathcal M_n^{\mathrm{sa}}(\mathbb C)^g$ for all $n \in \mathbb N$. Then, we call $\mathcal F = \bigsqcup_{n \in \mathbb N} \mathcal F(n)$ a \emph{free set}. Moreover, $\mathcal F$ is a \emph{matrix convex set} if it satisfies the following two properties for any $m$, $n \in \mathbb N$:
\begin{enumerate}
\item If $X = (X_1, \ldots, X_g) \in \mathcal F(m)$, $Y = (Y_1, \ldots, Y_g)\in \mathcal F(n)$, then $X \oplus Y := (X_1 \oplus Y_1, \ldots, X_g \oplus Y_g) \in \mathcal F(m+n)$.
\item If $X = (X_1, \ldots, X_g) \in \mathcal F(m)$ and $\Psi: \mathcal M_m(\mathbb C) \to \mathcal M_n(\mathbb C)$ is a unital completely positive (UCP) map, then $(\Psi(X_1), \ldots, \Psi(X_g)) \in \mathcal F(n)$.
\end{enumerate}
That is, a matrix convex set is a free set that is closed under direct sums and UCP maps.
\end{defi}
In particular, it follows from the definition that all sets $\mathcal F(n)$ are convex. A matrix convex set $\mathcal F$ is open/closed/bounded/compact if all levels $\mathcal F(n)$ have this property. 

Let $\mathcal C \subseteq \mathbb R^g$ be a closed convex set. Fixing $\mathcal F(1) = \mathcal C$, in most cases there are infinitely many matrix convex sets $\mathcal F$ with the same $\mathcal F(1)$. However, we can find a maximal and a minimal matrix convex set which have $\mathcal C$ as their first level. First, we consider the maximal matrix convex set for $\mathcal C$ \cite[Definition 4.1]{davidson2016dilations}:
\begin{align*}
\label{eq:Wmax}&\mathcal C_{\max}(n) :=\\
& \Set{X \in \mathcal M_n^{\mathrm{sa}}(\mathbb C)^g: \sum_{i = 1}^g c_i X_i \leq \alpha I_n, \quad \forall \, c \in \mathbb R^g, \forall \alpha \in \mathbb R~\mathrm{\ s.t.\ }~\mathcal C \subseteq \Set{x \in \mathbb R^g: \langle c, x\rangle \leq \alpha}}.
\end{align*}
Note that if $\mathcal C$ is a polyhedron, only finitely many hyperplanes need to be considered.

The minimal matrix convex set associated with $\mathcal C$ is defined as \cite[Eq.~(1.4)]{passer2018minimal}:
\begin{equation}\label{eq:definition-min}
    \mathcal C_{\min}(n) := \Set{X \in \mathcal M_n^{\mathrm{sa}}(\mathbb C)^g: X= \sum_{j} z_j \otimes Q_j,  z_j \in \mathcal C~ \forall j, Q_j \geq 0~\forall j, \sum_{j} Q_j =I_n}.
\end{equation}
Note that if $\mathcal C$ is a polytope, i.e.~it has finitely many extreme points, the number of terms in the decomposition above can be taken to be the number of
extreme points of $\mathcal C$. 

An equivalent definition of the minimal matrix convex set is the one used in \cite{davidson2016dilations} as
\begin{equation}\label{eq:alternative-definition-min}
    \mathcal C_{\min}(n) :=\left\{ X \in \mathcal M_n^\mathrm{sa}(\mathbb C)^g: \exists \text{~pairwise-commuting~normal~dilation~}N\text{~of~}X{~s.t.~}\sigma(N) \subseteq \mathcal C \right\},
\end{equation}
where $\sigma(N)$ is the joint spectrum of the pairwise-commuting normal dilation $N$. We recall that $N \in \mathcal B(\mathcal H)^g$ is a dilation of $X\in \mathcal M_n^\mathrm{sa}(\mathbb C)^g$ if there exists an isometry $V:\mathbb C^n \hookrightarrow \mathcal H$ such that $X_i = V^\ast N_i V$ for all $i \in [g]$.

\begin{remark}\label{rem:finite-dilations-and-POVMs-in-min}
    To go from Eq.~\eqref{eq:alternative-definition-min} to Eq.~\eqref{eq:definition-min}, we can use Naimark's dilation theorem. In order to go from Eq.~\eqref{eq:definition-min} to Eq.~\eqref{eq:alternative-definition-min}, we can use the construction used in the proof of Theorem 7.1 in \cite{davidson2016dilations}. Theorem 7.1 in \cite{davidson2016dilations} also implies that we can restrict the dilation $N$ to be a tuple of self-adjoint $nm$-dimensional matrices and the POVM in Eq.~\eqref{eq:alternative-definition-min} to have $m$ outcomes, where $m = 2n^2(g+1)+1$.
\end{remark}
 
We will also use the notion of \emph{inclusion constants}, i.e., constants for which the inclusion
\begin{equation*}
 s\cdot \mathcal C_{\max} \subseteq \mathcal C_{\min}
\end{equation*}
holds. Allowing for different scalings in each direction, the \emph{(asymmetrically) scaled} matrix convex set is 
\begin{equation*}
s \cdot \mathcal C_{\max}:= \Set{(s_1 X_1, \ldots, s_g X_g): X \in \mathcal C_{\max}}.
\end{equation*}
\begin{defi}\label{def:Delta-MCS}
Let $d$, $g \in \mathbb N$ and $\mathcal C \subset \mathbb R^g$. The \emph{inclusion set} is defined as 
\begin{equation*}
\Delta_{\mathcal C}(d):=\Set{s \in \mathbb R^g:  s \cdot \mathcal C_{\max}(d) \subseteq  \mathcal C_{\min}(d)}.
\end{equation*}
\end{defi}
Note that $\Delta_{\mathcal C}(d)$ is a convex set, because both $\mathcal C_{\min}(d)$ and $\mathcal C_{\max}(d)$ are.

Finally, we will show that if $\mathcal C$ is compact, then its corresponding minimal matrix convex set is also compact. This has already been pointed out in \cite{Passer2019} and we include the proof here for convenience. 
\begin{lem}\label{lem:min-closed}
    Let $g \in \mathbb N$ and let  $\mathcal C \subset \mathbb R^g$ be a compact convex set. Then, $\mathcal C_{\min}$ is compact (i.e., $\mathcal C_{\min}(n)$ is compact for all $n \in \mathbb N$).
\end{lem}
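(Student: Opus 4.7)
The plan is to realize $\mathcal{C}_{\min}(n)$ as the continuous image of a compact set, exploiting the \emph{finite} version of Eq.~\eqref{eq:definition-min} guaranteed by Remark \ref{rem:finite-dilations-and-POVMs-in-min}. Set $m := 2n^2(g+1)+1$. By that remark, every $X \in \mathcal{C}_{\min}(n)$ admits a representation
\[
X_i = \sum_{j=1}^m (z_j)_i \, Q_j, \qquad i \in [g],
\]
with $z_j \in \mathcal{C}$ and $(Q_j)_{j=1}^m$ an $m$-outcome POVM on $\mathbb{C}^n$. Introducing
\[
\mathrm{POVM}_m(n) := \Bigl\{ (Q_1,\ldots,Q_m) \in \mathcal{M}_n^{\mathrm{sa}}(\mathbb{C})^m : Q_j \geq 0 \ \forall j, \ \sum_{j=1}^m Q_j = I_n \Bigr\},
\]
the set $\mathcal{C}_{\min}(n)$ is exactly the image of
\[
\phi : \mathcal{C}^m \times \mathrm{POVM}_m(n) \longrightarrow \mathcal{M}_n^{\mathrm{sa}}(\mathbb{C})^g, \qquad \bigl((z_j)_j, (Q_j)_j\bigr) \longmapsto \Bigl( \textstyle\sum_{j=1}^m (z_j)_i Q_j \Bigr)_{i \in [g]}.
\]

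The next step is to verify that the domain of $\phi$ is compact and $\phi$ is continuous. Continuity is immediate since $\phi$ is bilinear in its arguments. The factor $\mathcal{C}^m$ is compact as the finite product of copies of the compact set $\mathcal{C}$. For $\mathrm{POVM}_m(n)$, the constraint $0 \leq Q_j \leq I_n$ (which follows from positivity of all $Q_k$ together with $\sum_k Q_k = I_n$) gives the operator-norm bound $\|Q_j\|_{\mathrm{op}} \leq 1$, so the set is bounded; closedness is clear because the conditions $Q_j = Q_j^*$, $Q_j \geq 0$, and $\sum_j Q_j = I_n$ are all preserved under entrywise (equivalently, operator-norm) limits. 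Hence $\mathrm{POVM}_m(n)$ is compact, and so is $\mathcal{C}^m \times \mathrm{POVM}_m(n)$.

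Finally, the image of a compact set under a continuous map is compact, so $\mathcal{C}_{\min}(n)$ is compact. The main potential obstacle is whether one is entitled to restrict to a uniform, dimension-dependent bound $m$ on the number of POVM outcomes: without this, a sequence $X^{(k)} \to X$ in $\mathcal{C}_{\min}(n)$ could in principle only be represented with growing numbers of terms, preventing a direct diagonal extraction of a convergent subsequence of decompositions. Remark \ref{rem:finite-dilations-and-POVMs-in-min}, which is based on Theorem~7.1 of \cite{davidson2016dilations}, removes precisely this difficulty, so the argument above goes through cleanly for any compact convex $\mathcal{C}$, not just for polytopes.
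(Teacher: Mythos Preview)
Your proof is correct and follows essentially the same approach as the paper: both use Remark~\ref{rem:finite-dilations-and-POVMs-in-min} to bound the number of POVM outcomes by $m = 2n^2(g+1)+1$, then realize $\mathcal{C}_{\min}(n)$ as the image of the compact set $\mathcal{C}^m \times \mathrm{POVM}_m(n)$ under a continuous (bilinear) map. Your version simply spells out the compactness of the domain and the role of the uniform bound $m$ in more detail than the paper's one-line argument.
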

\begin{proof}
For $X \in \mathcal C_{\min}(n)$, Remark \ref{rem:finite-dilations-and-POVMs-in-min} implies that we can write 
    \begin{equation*}
        X= \sum_{i=1}^{2n^2(g+1)+1} z_i \otimes Q_i
    \end{equation*}
    for a POVM $Q$ and $z_i \in \mathcal C$ for all $i \in [2n^2(g+1)+1]$.
    As $\mathcal C$ and the set of POVMs with a fixed number of outcomes are both compact, $\mathcal C_{\min}(n)$ is compact as the image of a compact set under a continuous function.
\end{proof}

\subsection{Incompatible measurements in quantum mechanics}

As this will be our guiding example in this work, we will give a short introduction to measurement incompatibility in quantum mechanics. For background concerning the mathematics of quantum mechanics, see \cite{Heinosaari2011, watrous2018theory}. A quantum mechanical measurement with $k$ outcomes on a quantum system of dimension $d$ is described by a collection of positive operators $\Set{E_j}_{j \in [k]} \subset \mathcal M_d^{\mathrm{sa}}$, $E_j \geq 0$ for all $j \in [k]$, such that
\begin{equation} \label{eq:POVM-normalization}
\sum_{j = 1}^k E_j = I_d.
\end{equation}
The set $\Set{E_j}_{j \in [k]}$ is called a \emph{positive operator-valued measure (POVM)} and its elements $E_j$ are referred to as \emph{effects}. If the condition Eq.~\eqref{eq:POVM-normalization} is replaced by the sub-normalization
\begin{equation*}
    \sum_{j = 1}^k E_j \leq I_d,
\end{equation*}
then $\Set{E_j}_{j \in [k]}$ is called a \emph{sub-POVM}.

An important characteristic distinguishing quantum mechanics from classical mechanics is that measurements can be \emph{incompatible} (see \cite{Heinosaari2016} for an introduction). A collection of POVMs is compatible if they arise as marginals from a joint POVM: 
\begin{defi}[Compatible POVMs] \label{def:jointPOVM}
Let $\Set{E_j^{(i)}}_{j \in [k_i]}$ be a collection of $d$-dimensional POVMs, where $k_i \in \mathbb N$ for all $i \in [g]$, $d$, $g \in \mathbb N$. The POVMs are \emph{compatible} if there is a $d$-dimensional joint POVM $\Set{R_{j_1, \ldots, j_g}}$ with $j_i \in [k_i]$ such that for all $u \in [g]$ and $v \in [k_u]$,
\begin{equation*}
E_v^{(u)} = \sum_{\substack{j_i \in [k_i] \\ i \in [g] \setminus \Set{u}}} R_{j_1, \ldots, j_{u-1},v,j_{u+1}, \ldots j_g}.
\end{equation*}
\end{defi}

Not all measurements in quantum mechanics are compatible. For projective measurements (POVMs in which all effects are orthogonal projections), compatibility is equivalent to their effects pairwise commuting. There is an equivalent definition of joint measurability \cite[Eq.~16]{Heinosaari2016}, formulated in terms of classical post-processing, which we will also use in this work. Measurements are compatible if and only if there is a joint measurement from which their outcomes can be obtained with the help of classical randomness, i.e., \emph{classical post-processing}. 
\begin{lem} \label{lem:post-processing}
Let $E^{(i)} \in \mathcal M_d^{\mathrm{sa}}(\mathbb C)^{k_i}$, $i \in [g]$, be a collection of POVMs, where $k_i \in \mathbb N$ for all $i \in [g]$, $d$, $g \in \mathbb N$. These POVMs are compatible if and only if there is some $m \in \mathbb N$ and a POVM $M \in \mathcal M_d^{\mathrm{sa}}(\mathbb C)^{m}$ such that
\begin{equation*}
E^{(i)}_{j} = \sum_{\lambda=1}^m p_\lambda(j|i)M_\lambda
\end{equation*}
for all $j \in [k_i]$, $i \in [g]$ and some conditional probabilities $p_\lambda(j|i)$.
\end{lem}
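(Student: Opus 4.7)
The plan is to prove the two directions by direct, explicit construction, mapping a joint POVM to a post-processing and vice versa.

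For the forward direction, suppose the POVMs $\{E^{(i)}\}_{i \in [g]}$ are compatible in the sense of Definition \ref{def:jointPOVM}, with joint POVM $\{R_{j_1,\ldots,j_g}\}$. I would take $m = k_1 k_2 \cdots k_g$, relabel the joint outcomes $\lambda = (j_1, \ldots, j_g)$, and set $M_\lambda := R_{j_1,\ldots,j_g}$. The deterministic conditional probabilities $p_\lambda(j \vert i) := \delta_{j_i, j}$ then immediately give
\begin{equation*}
    \sum_\lambda p_\lambda(j \vert i) M_\lambda = \sum_{\substack{j_s \in [k_s]\\ s \neq i}} R_{j_1,\ldots,j_{i-1},j,j_{i+1},\ldots,j_g} = E^{(i)}_j,
\end{equation*}
by the defining marginal property.

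For the backward direction, assume $E^{(i)}_j = \sum_{\lambda=1}^m p_\lambda(j\vert i) M_\lambda$ with $M$ a POVM and $p_\lambda(\cdot\vert i)$ probability distributions on $[k_i]$. Define the candidate joint POVM
\begin{equation*}
    R_{j_1,\ldots,j_g} := \sum_{\lambda=1}^m \left(\prod_{i=1}^g p_\lambda(j_i \vert i)\right) M_\lambda.
\end{equation*}
Positivity is immediate since it is a non-negative combination of positive operators. Normalization follows by exchanging sums and using $\sum_{j_i} p_\lambda(j_i\vert i) = 1$ for each $i$, giving $\sum_{j_1,\ldots,j_g} R_{j_1,\ldots,j_g} = \sum_\lambda M_\lambda = I_d$. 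For the marginal on coordinate $u$, the same factorization argument collapses every factor except the one indexed by $u$, yielding $\sum_{j_i : i\neq u} R_{j_1,\ldots,j_g} = \sum_\lambda p_\lambda(j_u \vert u) M_\lambda = E^{(u)}_{j_u}$.

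There is no real obstacle here; both constructions are routine and the lemma is essentially a restatement of the marginal definition in the language of classical post-processing. The only thing to keep an eye on is the clean bookkeeping between the multi-index $(j_1,\ldots,j_g)$ and the single post-processing label $\lambda$, and the fact that in the forward direction the deterministic probabilities $\delta_{j_i,j}$ suffice, so no randomness is needed to recover the joint POVM itself as a post-processing.
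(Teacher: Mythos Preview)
Your proof is correct and is the standard argument for this equivalence. Note, however, that the paper does not actually supply a proof of this lemma: it is stated as a known reformulation of compatibility and attributed to \cite[Eq.~16]{Heinosaari2016}. Your forward direction (deterministic post-processing $p_\lambda(j\vert i)=\delta_{j_i,j}$ from the joint POVM) and backward direction (building the joint POVM as $R_{j_1,\ldots,j_g}=\sum_\lambda \prod_i p_\lambda(j_i\vert i)M_\lambda$) are exactly the constructions one finds in the cited reference, so there is nothing to compare beyond observing that you have filled in what the paper leaves as a citation.
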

For a collection of dichotomic measurements $\{E_i, I - E_i\}$, $i \in [g]$, we will for simplicity say that the effects $\{E_i\}_{i \in [g]}$ are compatible, since they completely determine the corresponding measurements due to normalization.

\subsection{General probabilistic theories}
In this section, we briefly introduce the formalism to describe a class of generalization of quantum mechanics, the \emph{general probabilistic theories} (GPTs). For more background, see \cite{lami2018non}.

Any GPT corresponds to a triple $(V,V^+,\mathds 1)$, where $V$ is a vector space with  a proper cone $V^+$
and $\mathds 1$
is an order unit in the dual cone $A^+=(V^+)^*\subset V^*=A$. We assume here that $V$ is finite dimensional.  The set of states of the system is identified as the  subset 
\[
K:=\{v\in V^+,\ \langle\mathds 1,v\rangle=1\}.
\]
It holds that $K$ is compact and  convex and is a base of the cone $V^+$. It is also possible to start with any compact convex set $K$ as a state space and construct the corresponding GPT $(V(K), V(K)^+, \mathds{1}_K)$ from there.  Indeed, let $K$ be such a set and let $A(K)$ be the set of affine functions $K\to \mathbb R$. Then $A(K)$ is a finite dimensional vector space and the subset of affine functions which are positive on $K$, $A(K)^+$, is a proper cone in $A(K)$. Let $\mathds 1_K$ be the constant function, then $\mathds 1_K$ is an order unit. We put $V(K)=A(K)^*$, $V(K)^+=(A(K)^+)^*$. Then $V(K)^+$ is a proper cone in $V(K)$  and $K$ is affinely isomorphic to the base of $V^+$, determined by $\mathds 1_K$. 

\begin{ex}\label{ex:CM} Any \emph{classical system} is described by the triple $\mathrm{CM}_d := (\mathbb R^d,\mathbb R^d_+,1_d)$, $d \in \mathbb N$, where $\mathbb R^d_+$
denotes the set of elements with non-negative coordinates and $1_d=(1,1,\dots,1)\in \mathbb R^d$.
Then $(\mathbb R^d)^*=\mathbb R^d$ with duality given by the standard inner product and the simplicial cone
$\mathbb R^d_+$ is self-dual.  The classical state space is the probability simplex
\begin{align*}
\Delta_d=\left\{x=(x_1,\dots,x_d)\in \mathbb R^d,\ x_i\ge 0,\ \sum_{i=1}^d x_i=1\right\}= \{x\in \mathbb R^d_+,\ \langle x, 1_d\rangle =1\}.
\end{align*}
\end{ex}

\begin{ex}\label{ex:QM} \emph{Quantum mechanics} corresponds to the triple $\mathrm{QM}_d:=(\mathcal M_d^{\mathrm{sa}}(\mathbb C), \mathrm{PSD}_d ,\operatorname{Tr})$, $d \in \mathbb N$, where  $\mathrm{PSD}_d$ is the cone of $d \times d$ positive semidefinite complex, self-adjoint matrices, and $\operatorname{Tr}$ is the usual, un-normalized, trace. As in the case of classical systems described above, the $\mathrm{PSD}_d$ cone is self-dual.  The quantum state space is the set of density matrices $\mathcal M_n^{1,+}(\mathbb C)$.
\end{ex}

A \emph{channel} between GPTs $(V_i,V_i^+,\mathds 1_i)$ with state spaces $K_i$, $i \in [2]$, is a linear map $\Phi: V_1 \to V_2$ such that $\Phi(V_1^+) \subseteq V_2^+$ (i.e., the map is a positive map between the corresponding ordered vector spaces) and such that it maps states of one GPT to states of the other: $\Phi(K_1) \subseteq K_2$. Note that $\Phi(K_1) \subseteq K_2$ implies $\Phi(V_1^+) \subseteq V_2^+$, since $K_i$ is a base of $V_i^+$ and the map $\Phi$ is linear.

\section{Polytope compatibility} \label{sec:polytope-compatibility}

  In this section, we introduce the notion of polytope compatibility, or $\mathcal P$-compatibility for a fixed polytope $\mathcal P$. 

\subsection{Equivalent characterizations}

We start with the definition of tuples of matrices being $\mathcal P$-operators and $\mathcal P$-compatible. 

\begin{defi} \label{defi:P-operators-P-compatible}
    Let  $d$, $g$, $k \in \mathbb N$ and let $\mathcal P$ be a polytope with $k$ extreme points $v_1, \ldots, v_k \in \mathbb R^g$ such that $0 \in \operatorname{int} \mathcal P$. Let $A = (A_1, \ldots, A_g) \in \mathcal M_d^{\mathrm{sa}}(\mathbb C)^g \cong \mathbb R^g \otimes \mathcal M_d^{sa}(\mathbb C)$ be a $g$-tuple of Hermitian matrices. We say that 
    \begin{itemize}
        \item $A$ are \emph{$\mathcal P$-operators} if, for any dual hyperplane $h \in \mathcal{P}^\circ$ and any density matrix $\rho \in \mathcal M_d^{1,+}(\mathbb C)$, 
        $$\langle A, h \otimes \rho \rangle \leq 1.$$
        \item $A$ are \emph{$\mathcal P$-compatible} if there exists a POVM $C = (C_1, \ldots C_k)$ in $\mathcal M_d(\mathbb C)^k$ such that
        $$A = \sum_{i=1}^k v_i \otimes C_i.$$
    \end{itemize}
\end{defi}

We can express these definitions naturally in the language of minimal and maximal matrix convex sets arising from $\mathcal P$.

\begin{prop} \label{prop:operator_compatible_max_min}
 Let  $d$, $g$, $k \in \mathbb N$  and let $\mathcal P$ be a polytope with $k$ extreme points $v_1, \ldots, v_k \in \mathbb R^g$ such that $0 \in \operatorname{int} \mathcal P$. Let $A = (A_1, \ldots, A_g) \in \mathcal M_d^{\mathrm{sa}}(\mathbb C)^g \cong \mathbb R^g \otimes \mathcal M_d^{sa}(\mathbb C)$ be a $g$-tuple of Hermitian matrices. Then, $A$ are \emph{$\mathcal P$-operators} if and only if $A \in \mathcal{P}_{\max}(d)$. Moreover,  $A$ are \emph{$\mathcal P$-compatible} if and only if $A \in \mathcal{P}_{\min}(d)$.
\end{prop}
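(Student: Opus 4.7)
The proposition is essentially a translation exercise between the two definitions in Definition 3.1 and the definitions of $\mathcal P_{\max}$, $\mathcal P_{\min}$ from Section 2. Nothing deep is going on, but the first equivalence requires passing through the polar dual carefully, and the second requires the remark that when $\mathcal P$ is a polytope one may restrict the decomposition in \eqref{eq:definition-min} to a convex combination indexed by the extreme points of $\mathcal P$.

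For the first equivalence, my plan is to rewrite both conditions as inequalities of the form $\sum_i h_i A_i \leq I_d$ for $h$ ranging over $\mathcal P^\circ$. Starting from $A \in \mathcal P_{\max}(d)$: because $0 \in \operatorname{int}\mathcal P$, any half-space $\{x : \langle c,x\rangle \leq \alpha\}$ containing $\mathcal P$ has $\alpha > 0$ (the case $\alpha = 0$ forces $c = 0$ by interiority, and $\alpha < 0$ is incompatible with $0 \in \mathcal P$), so after rescaling $c \mapsto c/\alpha$ the defining conditions become $\sum_i h_i A_i \leq I_d$ for all $h$ with $\langle h, x\rangle \leq 1$ on $\mathcal P$, i.e.~for all $h \in \mathcal P^\circ$. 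On the $\mathcal P$-operator side, the condition $\langle A, h \otimes \rho\rangle \leq 1$ for all density matrices $\rho$ is exactly $\operatorname{Tr}[(\sum_i h_i A_i)\rho] \leq 1$ for all $\rho \in \mathcal M_d^{1,+}(\mathbb C)$, which is equivalent to $\sum_i h_i A_i \leq I_d$ because the density matrices separate points in the ordering on self-adjoint matrices. This matches, so the two conditions coincide.

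For the second equivalence, one direction is immediate: if $A$ is $\mathcal P$-compatible with witness POVM $C = (C_1, \ldots, C_k)$, then $A = \sum_i v_i \otimes C_i$ is a decomposition of the form required by \eqref{eq:definition-min} with $z_i = v_i \in \mathcal P$, so $A \in \mathcal P_{\min}(d)$. For the converse, start from a decomposition $A = \sum_j z_j \otimes Q_j$ with $z_j \in \mathcal P$ and $(Q_j)_j$ a POVM. Since the $v_i$ are the extreme points of $\mathcal P$, write each $z_j = \sum_{i=1}^k \lambda_{ij} v_i$ with $\lambda_{ij} \geq 0$ and $\sum_i \lambda_{ij} = 1$, and set
\begin{equation*}
C_i := \sum_j \lambda_{ij} Q_j.
\end{equation*}
Each $C_i \geq 0$ as a non-negative combination of positive operators, and $\sum_i C_i = \sum_j \bigl(\sum_i \lambda_{ij}\bigr) Q_j = \sum_j Q_j = I_d$, so $C$ is a POVM. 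Substituting gives $\sum_i v_i \otimes C_i = \sum_{i,j}\lambda_{ij} v_i \otimes Q_j = \sum_j z_j \otimes Q_j = A$, which is exactly $\mathcal P$-compatibility.

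The only mild obstacle is making sure that in the first equivalence we correctly handle the edge cases for the half-space parameter $\alpha$, which is where the assumption $0 \in \operatorname{int}\mathcal P$ is used; once that is out of the way, both equivalences are direct rewrites of the definitions.
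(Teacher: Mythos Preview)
Your proposal is correct and follows the same approach as the paper. For the second equivalence you do exactly what the paper does (rewrite each $z_j$ as a convex combination of the extreme points and regroup to obtain a POVM indexed by the $v_i$), and for the first equivalence you actually supply more detail than the paper, which simply declares it ``straightforward''; your reduction of the $\mathcal P_{\max}$ constraints to $h \in \mathcal P^\circ$ via the case analysis on $\alpha$ is the right way to spell this out.
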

\begin{proof}
The proof of the first assertion is straightforward. For the second assertion, we note the following: Let $(C_1^\prime, \ldots, C_n^\prime)$ be a POVM for $n \in \mathbb N$ and let $z_j \in \mathcal P$ for all $j \in [n]$. Then, every $z_j$ is a convex combination of extreme points, i.e., $z_j  = \sum_{i=1}^k w_i^{(j)}v_i$ for all $j \in [n]$, where $w_i^{(j)} \geq 0$ for all $i \in [k]$ and $\sum_{i = 1}^k w_i^{(j)} = 1$ for all $j \in [n]$. Thus, we can always write 
\begin{equation*}
    \sum_{j = 1}^n z_j \otimes C^\prime_j = \sum_{j = 1}^n \left( \sum_{i = 1}^k w_i^{(j)}v_i \right) \otimes C^\prime_j = A = \sum_{i=1}^k v_i \otimes C_i,
\end{equation*}
where $C_i = \sum_{j = 1}^n w_i^{(j)} C^\prime_j$. It can readily be verified that $(C_1, \ldots, C_k)$ is also a POVM.
\end{proof}

\begin{remark}
    In the scalar case $d=1$, the tuple $A$ being \emph{$\mathcal P$-operators} and $A$ being \emph{$\mathcal P$-compatible} coincide and the two notions are equivalent to $A \in \mathcal P$. They correspond, respectively, to the hyperplane (Eq.~\eqref{eq:polytope-facets}) and to the extreme point (Eq.~\eqref{eq:polytope-extremal-points}) definition of the polytope $\mathcal P$. They can therefore be thought of as two different ways of ``quantizing'' what it means to be a polytope. 
\end{remark}

A polytope $\mathcal P$ containing $0$ in its interior can be characterized (in the hyperplane representation) as the intersection of finitely many halfspaces
$$\mathcal P = \bigcap_{j=1}^r \{ x \in \mathbb R^g \, : \, \langle h_j, x \rangle \leq 1\},$$
where the vectors $h_1, \ldots, h_r \in \mathbb R^g$ define the facets of $\mathcal P$. In this picture, one can state the following nice description of the set $\mathcal P_{\max}$.
\begin{prop}\label{prop:P-hj}
 Let  $d$, $g$, $r \in \mathbb N$ and let $\mathcal P$ be a polytope in $\mathbb R^g$, containing $0$ in its interior, defined by the facet vectors $h_1, \ldots, h_r \in \mathbb R^g$ as above. Then, for all $d \geq 1$,
$$\mathcal P_{\max}(d) = \left\{A \in \mathcal M_d^{\mathrm{sa}}(\mathbb C)^g \, : \, \forall j \in [r], \, \sum_{x=1}^g h_j(x) A_x \leq I_d\right\}.$$
\end{prop}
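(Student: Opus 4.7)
The plan is to prove the two set-theoretic inclusions separately, starting from the definition of $\mathcal P_{\max}(d)$ given in Section~\ref{sec:preliminaries}, which quantifies over \emph{all} pairs $(c,\alpha)$ defining a half-space containing $\mathcal P$. The facet version should follow by reducing this infinite family of constraints to the finite family indexed by the $h_j$'s, via a convex combination argument, together with the standard polar-duality fact that $\mathcal P^\circ = \operatorname{conv}\{h_1,\ldots,h_r\}$ when $0 \in \operatorname{int}\mathcal P$.

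For the inclusion $\mathcal P_{\max}(d) \subseteq \{A : \sum_{x=1}^g h_j(x)A_x \leq I_d\ \forall j \in [r]\}$, I would simply observe that each facet vector $h_j$ satisfies $\mathcal P \subseteq \{x : \langle h_j, x\rangle \leq 1\}$, so applying the definition of $\mathcal P_{\max}(d)$ with the specific pair $(c,\alpha) = (h_j, 1)$ yields the desired inequality immediately.

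For the reverse inclusion, suppose $A$ satisfies $\sum_x h_j(x)A_x \leq I_d$ for every $j \in [r]$, and fix an arbitrary pair $(c,\alpha)$ with $\mathcal P \subseteq \{x : \langle c,x\rangle \leq \alpha\}$. Because $0 \in \mathcal P$, one has $\alpha \geq 0$. I would dispose of the edge case $\alpha = 0$ first: since $0 \in \operatorname{int}\mathcal P$, a small ball around $0$ lies in $\mathcal P$, forcing $c = 0$ and making $\sum_x c_x A_x \leq 0 \cdot I_d$ trivial. For $\alpha > 0$, I set $h := c/\alpha$, so that $h \in \mathcal P^\circ$. Then, invoking $\mathcal P^\circ = \operatorname{conv}\{h_1,\ldots,h_r\}$, I would write $h = \sum_{j=1}^r \lambda_j h_j$ with $\lambda_j \geq 0$ and $\sum_j \lambda_j = 1$, and take the corresponding convex combination of the $r$ hypothesized operator inequalities to conclude
$$\sum_{x=1}^g h(x)A_x = \sum_{j=1}^r \lambda_j \sum_{x=1}^g h_j(x)A_x \leq \Bigl(\sum_{j=1}^r \lambda_j\Bigr) I_d = I_d,$$
and multiplying through by $\alpha$ gives $\sum_x c_x A_x \leq \alpha I_d$, as required.

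The only mildly non-routine ingredient is the polar-duality identification $\mathcal P^\circ = \operatorname{conv}\{h_1,\ldots,h_r\}$; this is standard polyhedral convex analysis (facets of $\mathcal P$ correspond to vertices of $\mathcal P^\circ$, and boundedness of $\mathcal P$ together with $0 \in \operatorname{int}\mathcal P$ ensures the polar dual is itself a polytope whose extreme points are exactly the $h_j$'s), so I expect the entire argument to reduce essentially to this convex-combination manipulation, with no real technical obstacle.
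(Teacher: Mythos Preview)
Your proof is correct and follows essentially the same route as the paper's: both reduce the infinite family of half-space constraints to the finite family $\{h_j\}_{j\in[r]}$ via the polar-duality fact that $\operatorname{ext}(\mathcal P^\circ)\subseteq\{h_1,\ldots,h_r\}$, with the key step being a convex combination of the $r$ operator inequalities. The paper compresses this into a single sentence, while you spell out the edge case $\alpha=0$ and the normalization $h=c/\alpha$; one small imprecision is your claim that the extreme points of $\mathcal P^\circ$ are \emph{exactly} the $h_j$'s (redundant inequalities would not yield extreme points), but since you only use the inclusion $\mathcal P^\circ\subseteq\operatorname{conv}\{h_1,\ldots,h_r\}$, this does not affect the argument.
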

\begin{proof}
    The proof follows immediately from the fact that the extreme points of the dual polytope $\mathcal P^\circ$ are a subset of the $h_j$'s. 
\end{proof}

There are ways to rewrite the condition that $A \in \mathcal P_{\min}(d)$ which will be useful later on.

\begin{remark} \label{rem:factorization}
    If $A \in \mathcal P_{\min}(d)$ for a polytope $\mathcal P$ with $0 \in \operatorname{int} \mathcal P$, then for all $\rho \in \mathcal M^{1,+}_d(\mathbb C)$
    $$(\langle A_x, \rho \rangle)_{x \in [g]}  = V \cdot (\langle C_i, \rho \rangle)_{i \in [k]} \in \mathcal P,$$
    where $(C_1, \ldots, C_k)$ is a POVM and $V \in \mathcal M_{g \times k}(\mathbb R)$ is the matrix having the extreme points of $\mathcal P$ as columns: 
    $$\forall x \in [g], \, \forall i \in [k], \qquad V_{x,i} = v_i(x).$$
\end{remark}

One can relax the POVM condition in the definition of $P$-compatibility to $C$ being a sub-POVM.
\begin{prop} \label{prop:sub-POVM-compatibility}
Let  $d$, $g$, $k \in \mathbb N$ and let $\mathcal P$ be a polytope with $k$ extreme points $v_1, \ldots, v_k \in \mathbb R^g$ such that $0 \in \operatorname{int} \mathcal P$. Let $A = (A_1, \ldots, A_g) \in \mathcal M_d^{\mathrm{sa}}(\mathbb C)^g$ be a $g$-tuple of Hermitian matrices. Then, $A$ are $\mathcal P$-compatible if and only if there exists a sub-POVM $C = (C_1, \ldots C_k)$ in $\mathcal M_d^{\mathrm{sa}}(\mathbb C)^k$
such that
$$A = \sum_{i=1}^k v_i \otimes C_i.$$
\end{prop}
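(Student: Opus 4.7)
The plan is to prove the two directions separately. The forward direction (P-compatible implies existence of a sub-POVM representation) is immediate since every POVM is a sub-POVM, so all the work is in the converse.

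For the converse, suppose $A = \sum_{i=1}^k v_i \otimes C_i$ for some sub-POVM $(C_1,\ldots,C_k)$. I would define the defect operator $D := I_d - \sum_{i=1}^k C_i$, which is positive semidefinite by the sub-POVM property. The goal is to absorb $D$ into the coefficients $C_i$ to turn the sub-POVM into a POVM, while preserving the identity $A = \sum_i v_i \otimes C_i$.

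The key observation is that since $0 \in \operatorname{int} \mathcal{P}$, in particular $0 \in \mathcal{P} = \operatorname{conv}\{v_1,\ldots,v_k\}$, so there exist $\lambda_1,\ldots,\lambda_k \geq 0$ with $\sum_i \lambda_i = 1$ and $\sum_i \lambda_i v_i = 0$. I then set
\[
\tilde C_i := C_i + \lambda_i D, \qquad i \in [k].
\]
Each $\tilde C_i$ is positive semidefinite (as a sum of positive semidefinite operators), and
\[
\sum_{i=1}^k \tilde C_i = \sum_{i=1}^k C_i + \left(\sum_{i=1}^k \lambda_i\right) D = \sum_{i=1}^k C_i + D = I_d,
\]
so $(\tilde C_1,\ldots,\tilde C_k)$ is a genuine POVM. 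Finally, using $\sum_i \lambda_i v_i = 0$,
\[
\sum_{i=1}^k v_i \otimes \tilde C_i = \sum_{i=1}^k v_i \otimes C_i + \left(\sum_{i=1}^k \lambda_i v_i\right) \otimes D = A + 0 \otimes D = A,
\]
which combined with Proposition \ref{prop:operator_compatible_max_min} shows that $A \in \mathcal{P}_{\min}(d)$, i.e., $A$ is $\mathcal{P}$-compatible.

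There is no real obstacle here; the only ingredient beyond definitions is the fact that $0$ lies in the convex hull of the extreme points, which is guaranteed by the standing assumption $0 \in \operatorname{int} \mathcal{P}$. The proof illustrates the mild role played by this assumption: it lets us redistribute the slack $D = I - \sum_i C_i$ among the vertices of $\mathcal{P}$ without changing the resulting $A$.
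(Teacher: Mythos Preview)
Your proof is correct and follows essentially the same approach as the paper: both define the defect operator $I_d - \sum_i C_i$, use that $0$ is a convex combination of the extreme points to find weights summing to zero, and redistribute the defect accordingly to upgrade the sub-POVM to a POVM. The only differences are notational (your $D$, $\lambda_i$, $\tilde C_i$ correspond to the paper's $C_0$, $\pi_i$, $C_i'$), and you spell out the trivial forward direction explicitly.
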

\begin{proof}
    Assume $A = \sum_i v_i \otimes C_i$ for a sub-POVM $C$ and denote $C_0 := I_d - \sum_i C_i \geq 0$. Since $0 \in P$, there is a probability vector $\pi \in \mathbb R^k$ such that 
    $$\sum_{i=1}^k \pi_i v_i = 0.$$
    Write
    $$A = \sum_{i=1}^k v_i \otimes C_i + \left(\sum_{i=1}^k \pi_i v_i\right) \otimes C_0 = \sum_i v_i \otimes C_i',$$
    where $C_i':= C_i + \pi_i C_0$ for all $i \in [k]$ forms a POVM.
\end{proof}

\begin{prop} \label{prop:num_range_P_operators}
     Let  $d$, $g \in \mathbb N$. A tuple $A=(A_1,\ldots, A_g) \in \mathcal M^{\mathrm{sa}}_d(\mathbb C)^g$ consists of $\mathcal P$-operators for a polytope $\mathcal P$ with $0 \in \operatorname{int} \mathcal P$ if and only if for all density matrices $\rho \in \mathcal M_d^{1,+}(\mathbb C)$, 
        $$(\langle A_x, \rho \rangle)_{x \in [g]} \in  \mathcal P.$$
    In other words, $A$ are $\mathcal P$-operators if and only if the joint numerical range $\mathcal W(A)$ of $A$ is contained in $\mathcal P$:
    $$\mathcal W(A) \subseteq \mathcal P \iff \operatorname{conv} \mathcal W(A) = \left\{(\langle A_x, \rho \rangle)_{x \in [g]}\right\}_{\rho \in \mathcal M_d^{1,+}(\mathbb C)} \subseteq \mathcal P.$$
\end{prop}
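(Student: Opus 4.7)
The plan is to unfold Definition \ref{defi:P-operators-P-compatible} and reduce the matrix-valued $\mathcal P$-operator condition to a scalar membership in $\mathcal P$ via the bipolar theorem. The first step is to rewrite the pairing appearing in the definition. For any facet vector $h \in \mathcal P^\circ$ and any density matrix $\rho \in \mathcal M_d^{1,+}(\mathbb C)$,
\[
\langle A, h \otimes \rho \rangle = \sum_{x=1}^g h(x) \langle A_x, \rho \rangle = \langle h, a(\rho) \rangle,
\]
where I set $a(\rho) := (\langle A_x, \rho \rangle)_{x \in [g]} \in \mathbb R^g$. Thus $A$ consists of $\mathcal P$-operators precisely when $\langle h, a(\rho)\rangle \leq 1$ for every $h \in \mathcal P^\circ$ and every density matrix $\rho$.

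Next I would invoke the bipolar theorem: since $\mathcal P$ is a polytope with $0$ in its interior, it is closed, convex, and contains the origin, so $(\mathcal P^\circ)^\circ = \mathcal P$. Consequently, for a fixed $\rho$, the condition $\langle h, a(\rho)\rangle \leq 1$ for all $h \in \mathcal P^\circ$ is equivalent to $a(\rho) \in \mathcal P$. Combining this with the previous display gives the first equivalence of the proposition: $A$ are $\mathcal P$-operators iff $(\langle A_x, \rho\rangle)_{x \in [g]} \in \mathcal P$ for every $\rho \in \mathcal M_d^{1,+}(\mathbb C)$.

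For the second, joint-numerical-range formulation, I would use the fact already recalled in Section \ref{sec:preliminaries} that $\operatorname{conv} \mathcal W(A) = \{a(\rho) : \rho \in \mathcal M_d^{1,+}(\mathbb C)\}$, while $\mathcal W(A)$ itself is obtained by restricting $\rho$ to pure states. Since $\mathcal P$ is convex, one direction is immediate and the other follows by taking convex combinations, so $\mathcal W(A) \subseteq \mathcal P$ iff $\operatorname{conv}\mathcal W(A) \subseteq \mathcal P$. There is no real obstacle: the proposition is essentially a repackaging of the bipolar theorem. The only point worth checking is that the hypothesis $0 \in \operatorname{int} \mathcal P$ really is used only to ensure $\mathcal P^{\circ \circ} = \mathcal P$ (it guarantees in particular that $\mathcal P^\circ$ is itself a polytope whose extreme points are the facet normals $h_j$ appearing in Proposition \ref{prop:P-hj}).
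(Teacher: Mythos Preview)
Your proof is correct and follows essentially the same approach as the paper: unfold the pairing $\langle A, h\otimes\rho\rangle = \sum_x h(x)\operatorname{Tr}[A_x\rho]$ and apply the bipolar theorem to conclude $\mathcal P^{\circ\circ}=\mathcal P$. Your write-up is more detailed (in particular you spell out the joint-numerical-range reformulation via convexity of $\mathcal P$), but the route is the same.
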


\begin{proof}
This follows from the definition as we can write 
\begin{equation*}
    \langle A, h \otimes \rho \rangle = \sum_{i = 1}^g h_i \mathrm{Tr}[A_i \rho]
\end{equation*}
and use the bipolar theorem \cite[Theorem IV.1.2]{Barvinok2002} to conclude that $\mathcal P^{\circ \circ} = \mathcal P$.
\end{proof}

Another way to understand polytope compatibility is as factorization of positive maps through ordered vector spaces.

\begin{prop}\label{prop:factoring}
Let  $d$, $g$, $k \in \mathbb N$ and let $\mathcal P$ be a polytope with $k$ extreme points $v_1, \ldots, v_k \in \mathbb R^g$ such that $0 \in \operatorname{int} \mathcal P$. Let $A = (A_1, \ldots, A_g) \in \mathcal M_d^{\mathrm{sa}}(\mathbb C)^g$ be a $g$-tuple of Hermitian matrices. Let us consider the map
$\mathcal A: \mathcal M_d^{\mathrm{sa}}(\mathbb C) \to \mathbb R^g$,
\begin{equation*}
    \mathcal A(X) = (\mathrm{Tr}[A_1 X], \ldots, \mathrm{Tr}[A_g X]).
\end{equation*}
Then, 
\begin{enumerate}
    \item $A$ are $\mathcal P$-operators if and only if $\mathcal A$ is a channel between $(\mathcal M_d^{\mathrm{sa}}, \mathrm{PSD}_d, \mathrm{Tr})$ and $(V(\mathcal P), V(\mathcal P)^+, \mathds{1}_{\mathcal P})$.
    \item $A$ are $\mathcal P$-compatible if and only if in addition $\mathcal A$ factors through $\Delta_k$.
\end{enumerate}
\end{prop}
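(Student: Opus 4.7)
The plan is to unpack the GPT definition of a channel and then reduce both claims to what we already know from Proposition \ref{prop:num_range_P_operators} (for item 1) and Proposition \ref{prop:operator_compatible_max_min} (for item 2).

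For item 1, the map $\mathcal{A}$ is manifestly linear. By the definition of a channel between GPTs recalled earlier, $\mathcal{A}$ is a channel between $(\mathcal M_d^{\mathrm{sa}}, \mathrm{PSD}_d, \mathrm{Tr})$ and $(V(\mathcal P), V(\mathcal P)^+, \mathds{1}_{\mathcal P})$ precisely when $\mathcal A(\mathcal M_d^{1,+}(\mathbb C)) \subseteq K_{\mathcal P}$, where $K_{\mathcal P}$ denotes the state space of the GPT built from $\mathcal P$ and is (affinely) identified with $\mathcal P$ itself. So I would just observe
$$\mathcal A(\rho) = (\mathrm{Tr}[A_1 \rho], \ldots, \mathrm{Tr}[A_g \rho]) \in \mathcal P \quad \forall \rho \in \mathcal M_d^{1,+}(\mathbb C)$$
is exactly the condition $\operatorname{conv} \mathcal W(A) \subseteq \mathcal P$, and invoke Proposition \ref{prop:num_range_P_operators} to conclude this holds iff $A$ are $\mathcal P$-operators.

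For item 2, ``factors through $\Delta_k$'' means there exist channels $\mathcal A_1: (\mathcal M_d^{\mathrm{sa}}, \mathrm{PSD}_d, \mathrm{Tr}) \to \mathrm{CM}_k$ and $\mathcal A_2: \mathrm{CM}_k \to (V(\mathcal P), V(\mathcal P)^+, \mathds{1}_{\mathcal P})$ with $\mathcal A = \mathcal A_2 \circ \mathcal A_1$. I would identify each factor with familiar data: a channel $\mathcal A_1$ from $d$-dimensional quantum mechanics to the classical $k$-outcome system is exactly a $k$-outcome POVM $C = (C_1, \ldots, C_k) \in \mathcal M_d^{\mathrm{sa}}(\mathbb C)^k$ via $\mathcal A_1(X) = (\mathrm{Tr}[C_i X])_{i=1}^k$ (positivity and sending $\mathcal M_d^{1,+}$ into $\Delta_k$ gives exactly $C_i \geq 0$ and $\sum_i C_i = I_d$). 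Similarly, a channel $\mathcal A_2$ from $\mathrm{CM}_k$ to the $\mathcal P$-GPT is determined by where it sends the vertices of $\Delta_k$: writing $\mathcal A_2(e_i) =: w_i$, positivity plus preservation of the order unit is equivalent to $w_i \in \mathcal P$ for all $i \in [k]$.

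Assuming this factorization exists, composition yields
$$\mathrm{Tr}[A_j X] = \sum_{i=1}^k (w_i)_j \,\mathrm{Tr}[C_i X] \qquad \forall X \in \mathcal M_d^{\mathrm{sa}}(\mathbb C),$$
so $A = \sum_i w_i \otimes C_i$ with $w_i \in \mathcal P$ and $C$ a POVM; by the ``convex recombination'' argument in the proof of Proposition \ref{prop:operator_compatible_max_min} (expanding each $w_i$ over the extreme points of $\mathcal P$ and relabeling), this is the same as being $\mathcal P$-compatible. Conversely, if $A$ is $\mathcal P$-compatible with witnessing POVM $C = (C_1, \ldots, C_k)$, the canonical choice $\mathcal A_1(X) := (\mathrm{Tr}[C_i X])_i$ and $\mathcal A_2(e_i) := v_i$ defines two channels whose composition is $\mathcal A$, giving the factorization. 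The only subtle point — and the main thing to be careful about — is matching the GPT-theoretic definitions of positivity and unitality with POVM normalization on one side and with convex-combination coefficients landing in $\mathcal P$ on the other; once these dictionary entries are in place, both equivalences are essentially bookkeeping on top of Propositions \ref{prop:operator_compatible_max_min} and \ref{prop:num_range_P_operators}.
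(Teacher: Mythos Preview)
Your proposal is correct and follows essentially the same route as the paper: both arguments reduce item 1 to Proposition \ref{prop:num_range_P_operators} via the observation that the channel condition is exactly $\operatorname{conv}\mathcal W(A)\subseteq\mathcal P$, and both handle item 2 by identifying the two factors with a POVM and a map sending simplex vertices into $\mathcal P$, then invoking Proposition \ref{prop:operator_compatible_max_min}. Your write-up is slightly more explicit about the GPT-to-POVM dictionary, but there is no substantive difference.
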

\begin{proof}
    For the first point, it is clear that if $A$ are $\mathcal P$-operators, then $\mathcal A$ is a channel of the required form. Conversely, if $\mathcal A$ is a channel of the form in the assertion, then in particular $\mathcal W(A) \subseteq \mathcal P$. The first assertion then follows from Proposition \ref{prop:num_range_P_operators}.

    For the second point, if the $A$ are $\mathcal P$-compatible, the map $\mathcal Q: \mathcal M_d^{\mathrm{sa}} \to \mathbb R^k$,
    \begin{equation} \label{eq:Q_map}
    \mathcal Q(X) = (\mathrm{Tr}[Q_1 X], \ldots, \mathrm{Tr}[Q_k X]),
\end{equation}
sends $\mathcal M_d^{1,+}(\mathbb C)$ to $\Delta_k$. The matrix $V$ as defined in Remark \ref{rem:factorization} then maps $\Delta_k$ into $\mathcal P$ as required. Conversely, it can be checked that any channel mapping $\mathcal M_d^{1,+}(\mathbb C)$ to $\Delta_k$ gives rise to a POVM $(Q_1, \ldots, Q_k)$ as in Eq.~\eqref{eq:Q_map}. Any map $\nu: \Delta_k \to \mathcal P$ satisfies $\nu(\delta_i) = z_i \in \mathcal P$ for the vertices $\delta_i$ of $\Delta_k$. Thus, the factorization implies
\begin{equation*}
    (\mathrm{Tr}[A_1 \rho], \ldots, \mathrm{Tr}[A_g \rho]) = \sum_{i = 1}^k z_i \mathrm{Tr}[Q_i \rho] \qquad \forall \rho \in \mathcal M_d^{1,+}(\mathbb C),
\end{equation*}
hence $A_j = \sum_{i = 1}^k z_i(j) Q_i$. Therefore, $A \in \mathcal P_{\min}$ and the assertion follows from Proposition \ref{prop:operator_compatible_max_min}.
\end{proof}

\subsection{Tuples that are \texorpdfstring{$\mathcal P$}{}-compatible if and only if they are \texorpdfstring{$\mathcal P$}{}-operators} 
In this subsection, we give a characterization of tuples for which $\mathcal P$-compatibility is equivalent to being $\mathcal P$-operators. We first need the following lemma.

\begin{lem} \label{lem:A-in-min}
Let  $d$, $g \in \mathbb N$ and let $A \in \mathcal M_d^{\mathrm{sa}}(\mathbb C)^g$ such that $0 \in \operatorname{int} (\operatorname{conv}\mathcal W(A))$. If $A$ are $\mathcal P$-compatible for all polytopes $\mathcal P$ such that $A$ are $\mathcal P$-operators, then $A  \in (\operatorname{conv}(\mathcal W(A)))_{\min}$.
\end{lem}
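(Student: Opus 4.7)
The plan is to approximate $\operatorname{conv}\mathcal{W}(A)$ from outside by a sequence of polytopes, apply the hypothesis to each polytope, and then pass to a limit using the uniform outcome bound from Remark \ref{rem:finite-dilations-and-POVMs-in-min}.

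\textbf{Step 1 (polytope approximation).} Since $\operatorname{conv}\mathcal{W}(A)$ is a compact convex subset of $\mathbb{R}^g$ with $0$ in its interior, a standard fact from convex geometry yields a sequence of polytopes $\mathcal{P}_n$, each containing $0$ in its interior, with $\operatorname{conv}\mathcal{W}(A) \subseteq \mathcal{P}_n$ and $\mathcal{P}_n \to \operatorname{conv}\mathcal{W}(A)$ in Hausdorff distance; concretely, pick finite $\epsilon_n$-nets on the boundary of the polar dual $(\operatorname{conv}\mathcal{W}(A))^\circ$ and take polars to obtain $\mathcal{P}_n$. In particular one may assume $\mathcal{P}_n \subseteq \mathcal{P}_1$ for all $n$ and $\mathcal{P}_n \subseteq \operatorname{conv}\mathcal{W}(A) + \epsilon_n B$ with $\epsilon_n \to 0$.

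\textbf{Step 2 (apply hypothesis).} For each $n$, $\mathcal{W}(A) \subseteq \operatorname{conv}\mathcal{W}(A) \subseteq \mathcal{P}_n$, so by Proposition \ref{prop:num_range_P_operators} the tuple $A$ consists of $\mathcal{P}_n$-operators. By the assumption of the lemma, $A$ is then $\mathcal{P}_n$-compatible. By Remark \ref{rem:finite-dilations-and-POVMs-in-min}, there exist a POVM $Q^{(n)} = (Q_1^{(n)}, \ldots, Q_m^{(n)})$ of fixed outcome number $m = 2d^2(g+1)+1$ (independent of $n$) and points $z_j^{(n)} \in \mathcal{P}_n$ such that
\begin{equation*}
    A = \sum_{j=1}^{m} z_j^{(n)} \otimes Q_j^{(n)}.
\end{equation*}

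\textbf{Step 3 (compactness and limit).} The points $z_j^{(n)}$ all lie in the bounded set $\mathcal{P}_1$, and the set of $m$-outcome POVMs on $\mathbb{C}^d$ is compact. Passing to a subsequence, $z_j^{(n)} \to z_j$ in $\mathbb{R}^g$ and $Q_j^{(n)} \to Q_j$ in $\mathcal{M}_d^{\mathrm{sa}}(\mathbb{C})$ for each $j \in [m]$. Taking the limit of the decomposition yields $A = \sum_{j=1}^m z_j \otimes Q_j$, where $(Q_1,\ldots,Q_m)$ is again a POVM. For each $j$, $z_j$ is a limit of points $z_j^{(n)} \in \operatorname{conv}\mathcal{W}(A) + \epsilon_n B$, so $z_j \in \operatorname{conv}\mathcal{W}(A)$ by closedness. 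By definition \eqref{eq:definition-min}, this shows $A \in (\operatorname{conv}\mathcal{W}(A))_{\min}(d)$.

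\textbf{Main obstacle.} The key technical point is ensuring a uniform bound on the number of POVM outcomes, so that the sequence of decompositions lives in a fixed finite-dimensional space and a subsequential limit exists. This is delivered precisely by Remark \ref{rem:finite-dilations-and-POVMs-in-min}, which collapses the a priori unbounded number of extreme points of $\mathcal{P}_n$ down to the universal $m = 2d^2(g+1)+1$. Without such a bound, one would need an additional tightness argument to extract a convergent subsequence. A minor further care is that the approximating polytopes must still contain $0$ in their interior, which is ensured automatically once $\mathcal{P}_n \supseteq \operatorname{conv}\mathcal{W}(A)$ and is close enough to it.
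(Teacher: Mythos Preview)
Your proof is correct and follows essentially the same route as the paper: approximate $\operatorname{conv}\mathcal W(A)$ from outside by polytopes, invoke the hypothesis on each, and pass to the limit using the uniform outcome bound $m=2d^2(g+1)+1$ from Remark~\ref{rem:finite-dilations-and-POVMs-in-min}. The only cosmetic difference is that the paper packages the limit step as a contradiction via the compactness of $(\operatorname{conv}\mathcal W(A))_{\min}$ (Lemma~\ref{lem:min-closed}), whereas you extract a convergent subsequence of decompositions directly; both arguments rest on the same finite-outcome bound.
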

\begin{proof}
    The assertion is not obvious, because $\operatorname{conv}(\mathcal W(A)$ might not be a polytope. It is easy to see that $\mathcal W(A)$ is compact as the image of the unit sphere under a continuous map, thus $\operatorname{conv}(\mathcal W(A))$ is compact as well \cite[Theorem 5.35]{aliprantis2013infinite}. Since every convex body can be approximated to arbitrary precision by a polytope, for any $\epsilon > 0$, there is a polytope $\mathcal P^\epsilon$ such that $\operatorname{conv}(\mathcal W(A)) \subseteq \mathcal P^{\epsilon}$ and such that $\max_{x \in \mathcal P^\epsilon} \min_{y \in \operatorname{conv}(\mathcal W(A))} \| x - y\|_2 \leq \epsilon$ \cite[Section 4]{bronstein2008approximation}. 
    
    Let us assume that $A \not \in (\operatorname{conv}(\mathcal W(A)))_{\min}$. Since $\operatorname{conv}(\mathcal W(A))$ is compact, Lemma \ref{lem:min-closed} implies that $(\operatorname{conv}(\mathcal W(A)))_{\min}$ is compact as well. Hence, there is an $\eta > 0$ such that
    \begin{equation}\label{eq:A-far-from-min}
        \min_{B \in (\operatorname{conv}(\mathcal W(A)))_{\min}(d)} \|A-B\| = \eta,
    \end{equation}
    where we take the norm to be $\|(X_1, \ldots, X_g)\| = \sum_{i=1}^g \|X_i\|_\infty$ for later convenience.

    Now we will argue that $A \not \in \mathcal P^{\eta/(2g)}_{\min}$. If the tuple $A$ was in $\mathcal P^{\eta/(2g)}_{\min}$, then we could write
    \begin{equation*}
        A = \sum_i p_i \otimes Q_i
    \end{equation*}
    with $p_i \in \mathcal P^{\eta/(2g)}$ and a POVM $Q$. From the definition of the approximating polytope, for any $i$, there exists a $z_i \in \operatorname{conv}(\mathcal W(A))$ such that $\|z_i - p_i\|_2 \leq \frac{\eta}{2g}$. Defining 
    \begin{equation*}
        B = \sum_i z_i \otimes Q_i,
    \end{equation*}
    it holds that $B \in \operatorname{conv}(\mathcal W(A)))_{\min}(d)$ by construction, but also $\|A-B\| \leq \eta/2$, as for all $j \in [g]$,
    \begin{equation*}
       -\frac{\eta}{2g} I \leq -\sum_{i} |p_i(j) - z_i(j)| Q_i \leq A_j-B_j \leq \sum_{i} |p_i(j) - z_i(j)| Q_i \leq \frac{\eta}{2g} I.
    \end{equation*}
    This contradicts Eq.~\eqref{eq:A-far-from-min}, hence $A \not \in \mathcal P^{\eta/(2g)}_{\min}$.

    However, $\operatorname{conv}(\mathcal W(A)) \subseteq \mathcal P^{\eta/(2g)}$ by construction, which implies that $A$ is a $\mathcal P^{\eta/(2g)}$-operator by Proposition \ref{prop:num_range_P_operators}. Using the assumption that $A$ is $\mathcal P$-compatible for all polytopes $\mathcal P$ such that $A$ are $\mathcal P$-operators, it follows that $A \in \mathcal P^{\eta/(2g)}_{\min}$, which is a contradiction. Hence, we have to conclude that $A \in (\operatorname{conv}(\mathcal W(A)))_{\min}$.
\end{proof}

Now we can give a characterization of the operators for which being $\mathcal P$-operators and $\mathcal P$-compatible is the same.
\begin{thm}\label{thm:P-compatible-iff-P-operator}
    Let  $d$, $g \in \mathbb N$ and let $A \in \mathcal M_d^{\mathrm{sa}}(\mathbb C)^g$ such that $0 \in \operatorname{int} (\operatorname{conv}\mathcal W(A))$. Then, $A$ is $\mathcal P$-compatible for \emph{all} polytopes $\mathcal P$ such that they are $\mathcal P$-operators if and only if the operators $A$ admit a pairwise commuting dilation $N$ such that $\operatorname{conv}(\mathcal W(A)) = \mathcal W(N)$. 
\end{thm}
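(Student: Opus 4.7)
The plan is to extract the commuting dilation from the spectral description of $\mathcal{C}_{\min}$ in Equation~\eqref{eq:alternative-definition-min}, applied to the compact convex set $\mathcal{C} = \operatorname{conv} \mathcal{W}(A)$; the polytope-by-polytope hypothesis turns out to be exactly what is needed to land $A$ in this set.

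For the ``if'' direction, suppose $A$ admits a pairwise commuting normal dilation $N$ with $\mathcal{W}(N) = \operatorname{conv} \mathcal{W}(A)$. Since $N$ is commuting and normal, $\mathcal{W}(N) = \operatorname{conv} \sigma(N)$, and in particular $\sigma(N) \subseteq \mathcal{W}(N)$. Given any polytope $\mathcal{P}$ for which $A$ are $\mathcal{P}$-operators, Proposition~\ref{prop:num_range_P_operators} gives $\operatorname{conv} \mathcal{W}(A) \subseteq \mathcal{P}$, hence $\sigma(N) \subseteq \mathcal{P}$. Equation~\eqref{eq:alternative-definition-min} then places $A$ in $\mathcal{P}_{\min}(d)$, which by Proposition~\ref{prop:operator_compatible_max_min} is the assertion that $A$ is $\mathcal{P}$-compatible.

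For the ``only if'' direction, Lemma~\ref{lem:A-in-min} immediately gives $A \in (\operatorname{conv} \mathcal{W}(A))_{\min}(d)$. Using the equivalence between Equations~\eqref{eq:definition-min} and \eqref{eq:alternative-definition-min} explained in Remark~\ref{rem:finite-dilations-and-POVMs-in-min}, one extracts a pairwise commuting normal dilation $N$ of $A$ with $\sigma(N) \subseteq \operatorname{conv} \mathcal{W}(A)$; taking convex hulls yields $\mathcal{W}(N) = \operatorname{conv} \sigma(N) \subseteq \operatorname{conv} \mathcal{W}(A)$. For the reverse inclusion, writing $A_i = V^{\ast} N_i V$ for some isometry $V$, any unit vector $x$ produces a unit vector $Vx$ with $\langle x, A_i x \rangle = \langle Vx, N_i Vx \rangle$, so $\mathcal{W}(A) \subseteq \mathcal{W}(N)$, and convexity of $\mathcal{W}(N)$ upgrades this to $\operatorname{conv} \mathcal{W}(A) \subseteq \mathcal{W}(N)$.

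The substantive obstacle is absorbed into Lemma~\ref{lem:A-in-min}: because $\operatorname{conv} \mathcal{W}(A)$ is generically not a polytope, one cannot apply the polytope-by-polytope hypothesis to it directly, and must instead approximate from outside by polytopes $\mathcal{P}^{\varepsilon}$ and exploit compactness of the minimal matrix convex sets involved. Once that lemma is in hand, the theorem reduces to the two clean manipulations above.
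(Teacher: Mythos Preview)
Your proof is correct and follows essentially the same approach as the paper: both directions hinge on Lemma~\ref{lem:A-in-min} and the dilation characterization of $\mathcal{C}_{\min}$ in Eq.~\eqref{eq:alternative-definition-min}, with the reverse inclusion $\mathcal{W}(A) \subseteq \mathcal{W}(N)$ coming from the isometry. The only cosmetic difference is that the paper constructs $N$ explicitly via Naimark dilation of the POVM $Q$ from Eq.~\eqref{eq:definition-min}, whereas you invoke the equivalence in Remark~\ref{rem:finite-dilations-and-POVMs-in-min} directly; the content is identical.
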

\begin{proof}
For the converse, we note that $\mathcal W(N)$ is a polytope for a commuting tuple $N$. Thus, by assumption $\operatorname{conv}\mathcal W(A)$ is a polytope. Proposition \ref{prop:num_range_P_operators} implies moreover that $A$ are $\mathcal P$-operators for a polytope $\mathcal P$ if and only if $\operatorname{conv}(\mathcal W(A)) \subseteq \mathcal P$. Using the equivalent definition of $\mathcal P_{\min}$ from \cite{davidson2016dilations}, see Eq.~\eqref{eq:alternative-definition-min}, it follows thus that $A \in \mathcal P_{\min}$ for any polytope $\mathcal P$ such that $A$ are $\mathcal P$-operators, because the joint spectrum $\sigma(N) \subseteq \operatorname{conv}(\mathcal W(A))\subseteq \mathcal P$ for any such polytope. Hence, the $A$ are $\mathcal P$-compatible if and only if they are $\mathcal P$-operators.

For the remaining direction, it follows from Lemma \ref{lem:A-in-min} that $A$ being $\mathcal P$-compatible for all polytopes $\mathcal P$ such that the $A$ are $\mathcal P$-operators implies that $A \in (\operatorname{conv}(\mathcal W(A)))_{\min}$. Thus,
\begin{equation*}
    A = \sum_i z_i \otimes Q_i
\end{equation*}
 for a POVM $Q$ and $z_i \in \operatorname{conv}(\mathcal W(A))$. Naimark's dilation theorem implies that we can find pairwise-commuting Hermitian operators $N_j = \sum_i z_i(j) P_i$, where $P$ is the Naimark dilation of $Q$. By Remark \ref{rem:finite-dilations-and-POVMs-in-min}, we can assume that the sum is finite and that the $N_j$ are finite dimensional.  We can verify that $N$ is a dilation of $A$ and that $\{z_i\}_i=\sigma(N) \subseteq \mathcal W(N) \subseteq \operatorname{conv}(\mathcal W(A))$, because $\operatorname{conv}(\sigma(N)) = \mathcal W(N)$ for pairwise-commuting operators. On the other hand, as the $N$ are dilations of $A$, we have the reverse inclusion $\mathcal W(N) \supseteq \mathcal W(A)$. Noting that $\mathcal W(N)$ is convex as the tuple is pairwise commuting, it follows that $\operatorname{conv}\mathcal W(A) = \mathcal W(N)$.
\end{proof}

To show that Theorem \ref{thm:P-compatible-iff-P-operator} cannot be simplified at all, we would like to find a tuple $A \in \mathcal M_d^{\mathrm{sa}}(\mathbb C)^g$ which is not pairwise commuting itself, but which has a commuting dilation $N$ such that $\operatorname{conv}(\mathcal W(A)) = \mathcal W(N)$. We can in fact do better and characterize such $A$. We are indebted to the anonymous referee for providing us with this characterization, in particular for Lemma \ref{lem:equivalent-claim} and Theorem \ref{thm:A-almost-commuting}. In the following proofs, we will need the notion of Arveson extreme points for matrix convex sets (see \cite{evert2018extreme} for a discussion of different notions of extreme points in this setting). 
\begin{defi}
Let $\mathcal F \subseteq \bigsqcup_{n \in \mathbb N} M_n^{\mathrm{sa}}(\mathbb C)^g$ be a matrix convex set. Then, $A \in \mathcal F$ is an Arveson extreme point of $\mathcal F$ if for any dilation $X \in \mathcal F$, there exist a unitary $U$ and $Y \in \mathcal F$ such that
\begin{equation*}
    X = (U (A_1 \oplus Y_1) U^\ast, \ldots, U (A_g \oplus Y_g) U^\ast). 
\end{equation*}
\end{defi}
The next lemma is well-known (see for example, Section 3 of \cite{passer2022complex}), but for convenience we include a short proof.
\begin{lem} \label{lem:arveson-EP-min}
    Let $g \in \mathbb N$ and $\mathcal C \subset \mathbb R^g$ be a compact convex set. Then, any extreme point of $\mathcal C$ is an Arveson extreme point of $\mathcal C_{\min}$.
\end{lem}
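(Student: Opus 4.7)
The plan is to unfold the definition of Arveson extreme point and exploit scalar-level extremality of $v$ to force a block-diagonal structure on any dilation. Let $v \in \mathcal C$ be an extreme point and let $X \in \mathcal C_{\min}(n)$ be a dilation of $v$ via an isometry $V \colon \mathbb C \to \mathbb C^n$; such a $V$ is simply a unit vector $\psi \in \mathbb C^n$, and the dilation condition reads $v_i = \langle \psi, X_i \psi \rangle$ for all $i \in [g]$. Using the definition of $\mathcal C_{\min}$ together with Remark \ref{rem:finite-dilations-and-POVMs-in-min}, I would write
\begin{equation*}
X_i = \sum_{j=1}^m z_j(i)\, Q_j, \qquad z_j \in \mathcal C,\ Q_j \geq 0,\ \sum_{j=1}^m Q_j = I_n.
\end{equation*}

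The key observation is the following. Set $p_j := \langle \psi, Q_j \psi \rangle$; then $(p_j)_j$ is a probability vector and $v = \sum_j p_j z_j$ is a convex combination of elements of $\mathcal C$. Since $v$ is an extreme point of $\mathcal C$, a standard iteration of the two-term extremality condition forces $z_j = v$ for every $j$ with $p_j > 0$. Let $J := \{ j : z_j = v\}$; for each $j \notin J$ we get $\langle \psi, Q_j \psi \rangle = 0$, and positivity of $Q_j$ yields $Q_j^{1/2}\psi = 0$, hence $Q_j \psi = 0$.

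Next, decompose $\mathbb C^n = \mathbb C \psi \oplus \psi^\perp$. For $j \notin J$, self-adjointness of $Q_j$ together with $Q_j \psi = 0$ shows that $Q_j = 0 \oplus Q_j'$ in this decomposition for some $Q_j' \geq 0$. Setting $P := \sum_{j \in J} Q_j = I_n - \sum_{j \notin J} Q_j$, we obtain $P\psi = \psi$ and, by the same self-adjointness argument, $P = 1 \oplus P'$ with $P' \geq 0$. Consequently,
\begin{equation*}
X_i \;=\; v_i P + \sum_{j \notin J} z_j(i)\, Q_j \;=\; v_i \oplus \Bigl( v_i P' + \sum_{j \notin J} z_j(i)\, Q_j' \Bigr).
\end{equation*}

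To close the argument I would check that the collection $\{P'\} \cup \{Q_j' : j \notin J\}$ is a POVM on $\psi^\perp$ (it sums to $I_{n-1}$), whence the displayed formula for the lower block exhibits $\tilde Y_i := v_i P' + \sum_{j \notin J} z_j(i) Q_j'$ as an element of $\mathcal C_{\min}(n-1)$; then choosing $U$ to be the unitary from $\mathbb C \oplus \mathbb C^{n-1}$ to $\mathbb C^n$ with first column $\psi$ gives $X_i = U(v_i \oplus \tilde Y_i) U^\ast$ for all $i$, which is precisely the Arveson extremality condition. The only real content of the proof is the step where scalar extremality of $v$ turns the convex combination $v = \sum_j p_j z_j$ into pointwise equalities $z_j = v$ on the support of $p$; everything afterwards is linear algebra dictated by the block form forced by $Q_j \psi = 0$.
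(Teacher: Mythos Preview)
Your proof is correct and follows essentially the same route as the paper's: write $X$ via the POVM decomposition of $\mathcal C_{\min}$, use scalar extremality of $v$ to force $z_j=v$ on the support of $(p_j)_j$, deduce $Q_j\psi=0$ for the remaining indices, and read off the block-diagonal form $X_i=v_i\oplus \tilde Y_i$. If anything, your write-up is slightly more careful than the paper's in that you explicitly verify that $\{P'\}\cup\{Q_j':j\notin J\}$ is a POVM on $\psi^\perp$, hence $\tilde Y\in\mathcal C_{\min}(n-1)$, which is exactly what the Arveson condition requires.
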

\begin{proof}
    Let $n \in \mathbb N$ and let $X \in \mathcal C_{\min}(n)$ be a dilation of an extreme point $\lambda \in \mathcal C$. Hence, there exist an isometry $V: \mathbb C \hookrightarrow \mathbb C^n$ such that $\lambda = (V^\ast X_1 V, \ldots, V^\ast X_g V)$. We can identify $V$ with a vector $v \in \mathbb C^n$. As $X \in \mathcal C_{\min}(n)$, we can find a POVM $Q$ and points $z_i \in \mathcal C$ such that 
    \begin{equation*}
        X = \sum_i z_i \otimes Q_i .
    \end{equation*}
    Hence, 
    \begin{equation*}
        \lambda = (I \otimes V^\ast) X (I \otimes V) =  \sum_i v^\ast Q_i v z_i .
    \end{equation*}
    As $\lambda$ is an extreme point and $ (v^\ast Q_i v)_i $ is a probability distribution, $z_i = \lambda$ for at least one index $i$ and $v^\ast Q_j v = 0$ for all $j$ such that $z_j \neq \lambda$. For such $j$, the support of $Q_j$ is orthogonal to $v$. Thus,
    \begin{equation*}
        X = \lambda \otimes  v v^\ast+ \sum_{j: z_j \neq \lambda} z_j \otimes Q_j = (\lambda_1 \oplus Y_1, \ldots \lambda_g \oplus Y_g),
    \end{equation*}
where $Y = \sum_{j: z_j \neq \lambda} z_j \otimes Q_j$. Here, the direct sum is taken in a basis containing $v$. This proves the assertion.
 \end{proof}
 We will also use the matrix range of $A \in \mathcal M_d^{\mathrm{sa}}(\mathbb C)^g$, which is the image of $A$ under all unital completely positive (UCP) maps:
\begin{equation*}
    \operatorname{MR}(A)_n = \{(\Phi(A_1), \ldots, \Phi(A_g))~|~ \Phi: \mathcal M_d(\mathbb C) \to \mathcal M_n(\mathbb C)~\mathrm{UCP}\}.
\end{equation*}
 The matrix range $\sqcup_{n \in \mathbb N} \operatorname{MR}(A)_n$ of $A$ is then a closed bounded matrix convex set \cite[Proposition 2.5]{davidson2016dilations}. In this sense, $\operatorname{MR}(A)$ is the matrix convex set generated by $A$. With the help of the matrix range, we can give an equivalent characterization of the tuples $A$ we are looking for.
 \begin{lem} \label{lem:equivalent-claim}
Let  $d$, $g \in \mathbb N$ and let $A \in \mathcal M_d^{\mathrm{sa}}(\mathbb C)^g$. The following are equivalent:
\begin{enumerate}
    \item  $A$ admits a pairwise commuting dilation $N$ such that $\operatorname{conv}(\mathcal W(A)) = \mathcal W(N)$
    \item $\operatorname{MR}(A) = \operatorname{conv}(\mathcal W(A))_{\min}$.
\end{enumerate}
 \end{lem}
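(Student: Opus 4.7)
The plan rests on two standard facts that together make the problem almost symmetric. First, the matrix range always satisfies $\operatorname{MR}(A)_1 = \operatorname{conv}(\mathcal W(A))$, since UCP maps $\mathcal M_d(\mathbb C) \to \mathbb C$ are precisely states. Second, $\mathcal C_{\min}$ is characterized as the smallest matrix convex set whose first level contains $\mathcal C$; combined with the matrix convexity of $\operatorname{MR}(A)$ (Proposition 2.5 of \cite{davidson2016dilations}), this yields the unconditional inclusion $\operatorname{conv}(\mathcal W(A))_{\min} \subseteq \operatorname{MR}(A)$. So condition (2) is really the reverse inclusion $\operatorname{MR}(A) \subseteq \operatorname{conv}(\mathcal W(A))_{\min}$, and the whole content of the lemma is that this reverse inclusion is equivalent to $A \in \operatorname{conv}(\mathcal W(A))_{\min}(d)$, which in turn (by the alternative characterization Eq.~\eqref{eq:alternative-definition-min}) is equivalent to (1).

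For $(1) \Rightarrow (2)$, a pairwise commuting dilation $N$ of $A$ with $\mathcal W(N) = \operatorname{conv}(\mathcal W(A))$ satisfies $\sigma(N) \subseteq \mathcal W(N) = \operatorname{conv}(\mathcal W(A))$, so $A \in \operatorname{conv}(\mathcal W(A))_{\min}(d)$ by Eq.~\eqref{eq:alternative-definition-min}. Since $\operatorname{conv}(\mathcal W(A))_{\min}$ is closed under UCP images (being matrix convex), the entire matrix range $\operatorname{MR}(A) = \{\Phi(A) : \Phi \text{ UCP}\}$ is contained in $\operatorname{conv}(\mathcal W(A))_{\min}$, giving (2).

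For $(2) \Rightarrow (1)$, the equality in (2) forces $A \in \operatorname{MR}(A)(d) = \operatorname{conv}(\mathcal W(A))_{\min}(d)$. By Eq.~\eqref{eq:alternative-definition-min} together with Remark~\ref{rem:finite-dilations-and-POVMs-in-min}, $A$ admits a finite-dimensional pairwise commuting self-adjoint dilation $N$ with $\sigma(N) \subseteq \operatorname{conv}(\mathcal W(A))$. For a commuting self-adjoint tuple, $\mathcal W(N) = \operatorname{conv}(\sigma(N))$, so $\mathcal W(N) \subseteq \operatorname{conv}(\mathcal W(A))$. Conversely, because $N$ dilates $A$ one has $\mathcal W(A) \subseteq \mathcal W(N)$, and the convexity of $\mathcal W(N)$ upgrades this to $\operatorname{conv}(\mathcal W(A)) \subseteq \mathcal W(N)$. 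The two inclusions combine to give $\mathcal W(N) = \operatorname{conv}(\mathcal W(A))$, which is (1).

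The main obstacle is conceptual rather than technical: one has to recognize that neither direction requires any new construction, and that the only moving part is the observation that membership $A \in \operatorname{conv}(\mathcal W(A))_{\min}(d)$ collapses the matrix range generated by $A$ down to $\operatorname{conv}(\mathcal W(A))_{\min}$. Once one has identified $\operatorname{MR}(A)_1$ with $\operatorname{conv}(\mathcal W(A))$ and invoked the minimality property of $\mathcal C_{\min}$, the proof is essentially two short lines in each direction.
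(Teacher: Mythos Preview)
Your proof is correct and follows essentially the same route as the paper's: both directions hinge on the observation that $A \in \operatorname{conv}(\mathcal W(A))_{\min}(d)$ is the pivotal condition, together with the unconditional inclusion $\operatorname{conv}(\mathcal W(A))_{\min} \subseteq \operatorname{MR}(A)$ coming from minimality and $\operatorname{MR}(A)_1 = \operatorname{conv}(\mathcal W(A))$. The only cosmetic difference is that for $(1)\Rightarrow(2)$ the paper invokes \cite[Corollary 2.8]{davidson2016dilations} to identify $\operatorname{MR}(N)$ with $\operatorname{conv}(\mathcal W(A))_{\min}$ directly, whereas you first place $A$ in $\operatorname{conv}(\mathcal W(A))_{\min}$ and then push the whole matrix range in by UCP-invariance---the underlying logic is the same.
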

 \begin{proof}
For $(1) \implies (2)$, we can use \cite[Corollary 2.8]{davidson2016dilations} to infer that $\operatorname{conv}(\mathcal W(A))_{\min} = \operatorname{MR}(N)$. and using that $N$ is a dilation of $A$, moreover that $\operatorname{conv}(\mathcal W(A))_{\min} \supseteq \operatorname{MR}(A)$. The reverse inclusion follows since by definition  $\operatorname{MR}(A)(1) = \operatorname{conv}(\mathcal W(A))$.

For $(2) \implies (1)$, we realize that $A \in \operatorname{MR}(A)$, hence using the assumption $A \in \operatorname{conv}(\mathcal W(A))_{\min} $. Hence, there exists a commuting dilation $N$ with $\sigma(N) \subseteq \operatorname{conv}(\mathcal W(A))$, which implies $\mathcal W(N) \subseteq \operatorname{conv}(\mathcal W(A))$. The reverse inclusion follows as $N$ is a dilation of $A$.
 \end{proof}

 \begin{thm} \label{thm:A-almost-commuting}
     Let  $d$, $g \in \mathbb N$ and let $A \in \mathcal M_d^{\mathrm{sa}}(\mathbb C)^g$. The following are equivalent:
\begin{enumerate}
    \item $\operatorname{MR}(A) = \operatorname{conv}(\mathcal W(A))_{\min}$
    \item $A = (U (B_1 \otimes E_1) U^\ast, \ldots, U (B_g \otimes E_g) U^\ast)$, where $U$ is a unitary, $B$ normal pairwise commuting such that $\operatorname{conv}(\sigma(B)) = \operatorname{conv}(\mathcal W(A))$, and $E$ such that $\operatorname{MR}(E) \subseteq (\operatorname{conv}(\mathcal W(A)))_{\min}$.
\end{enumerate}
 \end{thm}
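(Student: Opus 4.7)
My plan is to handle the two directions separately. For $(2) \implies (1)$ (interpreting the decomposition as $A_i = U(B_i \oplus E_i)U^\ast$ in line with the informal theorem), I would first discard the unitary $U$ since the matrix range is invariant under unitary conjugation. The key observation is that for a direct-sum tuple $B \oplus E$, compression onto either summand is UCP, so $\operatorname{MR}(B) \cup \operatorname{MR}(E) \subseteq \operatorname{MR}(B \oplus E)$; conversely, every UCP image of $B \oplus E$ lies in the matrix convex hull of $\operatorname{MR}(B) \cup \operatorname{MR}(E)$. Invoking \cite[Corollary 2.8]{davidson2016dilations} for the commuting part gives $\operatorname{MR}(B) = \operatorname{conv}(\sigma(B))_{\min} = \operatorname{conv}(\mathcal W(A))_{\min}$, while $\operatorname{MR}(E) \subseteq \operatorname{conv}(\mathcal W(A))_{\min}$ by hypothesis. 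Since $\operatorname{conv}(\mathcal W(A))_{\min}$ is matrix convex, this yields $\operatorname{MR}(A) \subseteq \operatorname{conv}(\mathcal W(A))_{\min}$, and the reverse inclusion is immediate from $\operatorname{MR}(B) \subseteq \operatorname{MR}(A)$.

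For $(1) \implies (2)$, the strategy is iterative extraction of extreme points via the Arveson extreme point property. First, the argument in the proof of Lemma \ref{lem:equivalent-claim} shows that $\operatorname{conv}(\mathcal W(A)) = \mathcal W(N)$ for a commuting finite-dimensional dilation $N$, hence is a polytope with finitely many extreme points $\lambda^{(1)}, \ldots, \lambda^{(r)}$. Each $\lambda^{(i)}$ lies in $\mathcal W(A)$ (extreme points of the convex hull of a compact set belong to the set) and is an Arveson extreme point of $\operatorname{conv}(\mathcal W(A))_{\min} = \operatorname{MR}(A)$ by Lemma \ref{lem:arveson-EP-min}. Since $A$ dilates $\lambda^{(1)}$, the Arveson extreme property produces a decomposition $A = \lambda^{(1)} \oplus A^{(1)}$ up to unitary, and $A^{(1)} \in \operatorname{MR}(A)$ as a compression.

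To continue peeling, I need $\lambda^{(2)} \in \mathcal W(A^{(1)})$. Using the standard formula for the joint numerical range of a direct sum, $\operatorname{conv}(\mathcal W(\lambda^{(1)} \oplus A^{(1)})) = \operatorname{conv}(\{\lambda^{(1)}\} \cup \operatorname{conv}(\mathcal W(A^{(1)})))$; since $\lambda^{(2)} \neq \lambda^{(1)}$ is extreme in the left-hand side, a convex-combination argument forces $\lambda^{(2)} \in \operatorname{conv}(\mathcal W(A^{(1)}))$, and by compactness $\lambda^{(2)} \in \mathcal W(A^{(1)})$. Moreover, Arveson extremeness is monotone: since $\operatorname{MR}(A^{(1)}) \subseteq \operatorname{MR}(A)$, the point $\lambda^{(2)}$ remains an Arveson extreme point of the smaller matrix convex set, so we may split it off as well. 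Iterating yields $A = \lambda^{(1)} \oplus \cdots \oplus \lambda^{(r)} \oplus E$ up to unitary; setting $B_j := \operatorname{diag}(\lambda^{(1)}_j, \ldots, \lambda^{(r)}_j)$ produces a commuting normal tuple with $\operatorname{conv}(\sigma(B)) = \operatorname{conv}(\mathcal W(A))$, while the residual tuple $E$, being a further compression of $A$, satisfies $\operatorname{MR}(E) \subseteq \operatorname{MR}(A) = \operatorname{conv}(\mathcal W(A))_{\min}$.

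The main obstacle is the bookkeeping in the inductive peeling step: one must verify at each stage that the next extreme point is still achieved in the numerical range of the residual tuple (via the direct-sum formula and extremality) and that Arveson extremeness transfers to the matrix range of the residual. Once these two auxiliary facts are established, the induction is finite (terminating in at most $d$ steps) and straightforward.
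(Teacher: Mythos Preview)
Your proposal is correct and follows essentially the same route as the paper: for $(2)\Rightarrow(1)$ you combine $\operatorname{MR}(B)=\operatorname{conv}(\mathcal W(A))_{\min}$ with the hypothesis on $E$ and matrix convexity, and for $(1)\Rightarrow(2)$ you use Lemma~\ref{lem:arveson-EP-min} to peel off the scalar extreme points of $\operatorname{conv}(\mathcal W(A))$ as direct summands of $A$. The paper's proof is terser---it simply asserts that ``all of them appear as direct summands of $A$''---whereas you correctly flag and resolve the bookkeeping needed to iterate (showing each remaining extreme point still lies in $\mathcal W$ of the residual tuple via the direct-sum numerical range formula, and that the Arveson property applies there because the residual lies in $\operatorname{MR}(A)$); you also make explicit, via Lemma~\ref{lem:equivalent-claim}, that $\operatorname{conv}(\mathcal W(A))$ is a polytope so the peeling terminates, a point the paper leaves implicit.
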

\begin{proof}
    For $(2) \implies (1)$, we infer from \cite[Corollary 4.4]{davidson2016dilations} that $\operatorname{MR}(B) = \operatorname{conv}(\mathcal W(A))_{\min}$. As moreover $\operatorname{MR}(E) \subseteq (\operatorname{conv}(\mathcal W(A)))_{\min}$ by assumption, it holds that $B \oplus E \in (\operatorname{conv}(\mathcal W(A)))_{\min}$ and hence also that $\operatorname{MR}(A) \subseteq (\operatorname{conv}(\mathcal W(A)))_{\min}$. The reverse inclusion is obvious since both sets agree on level $1$.

    For $(1) \implies (2)$, Lemma \ref{lem:arveson-EP-min} implies that all extreme points of $\operatorname{conv}(\mathcal W(A))$ are Arveson extreme points of $(\operatorname{conv}(\mathcal W(A)))_{\min}$. Note that the extreme points of $\operatorname{conv}(\mathcal W(A))$ have the form of $(v^\ast A_1 v, \ldots, v^\ast A_1 v)$ for some unit vector $v \in \mathbb C^g$. Hence, $A$ is a dilation of each of the extreme points of $\operatorname{conv}(\mathcal W(A))$. As these extreme points are Arveson extreme points, all of them appear as direct summands of $A$. Hence, $A$ is of the form $B \oplus E$ up to simultaneous unitary conjugation, where $B$ is the direct sum of the extreme points of $\operatorname{conv}(\mathcal W(A))$. Hence, it is diagonal and has joint spectrum equal to the extreme points of $\operatorname{conv}(\mathcal W(A))$. In particular, $\operatorname{conv}(\sigma(B)) = \operatorname{conv}(\mathcal W(A))$.  Clearly, we also have $\operatorname{MR}(E) \subseteq \operatorname{MR}(A)$ as $A$ is a dilation of $E$. This concludes the proof.   
\end{proof}

 \begin{cor} \label{cor:nicer-form}
Let  $d$, $g \in \mathbb N$ and let $A \in \mathcal M_d^{\mathrm{sa}}(\mathbb C)^g$ such that $0 \in \operatorname{int} (\operatorname{conv}\mathcal W(A))$. Then, $A$ is $\mathcal P$-compatible for \emph{all} polytopes $\mathcal P$ such that they are $\mathcal P$-operators if and only if it can be written as $A = (U (B_1 \otimes E_1) U^\ast, \ldots, U (B_g \otimes E_g) U^\ast)$, where $U$ is a unitary, $B$ normal pairwise commuting such that $\operatorname{conv}(\sigma(B)) = \operatorname{conv}(\mathcal W(A))$, and $E$ such that $\operatorname{MR}(E) \subseteq (\operatorname{conv}(\mathcal W(A)))_{\min}$.
 \end{cor}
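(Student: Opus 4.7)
The proof I envisage is essentially a short assembly proof: the corollary follows by chaining together the three preceding results, with no additional technical content needed. Specifically, the plan is to verify that each of the following three equivalences is already supplied by a previous result, and then compose them.

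First, I would invoke Theorem \ref{thm:P-compatible-iff-P-operator}, which under the hypothesis $0 \in \operatorname{int}(\operatorname{conv}\mathcal W(A))$ reformulates the statement ``$A$ is $\mathcal P$-compatible for all polytopes $\mathcal P$ for which $A$ are $\mathcal P$-operators'' as the geometric condition that $A$ admits a pairwise commuting (self-adjoint, finite-dimensional) dilation $N$ with $\operatorname{conv}(\mathcal W(A)) = \mathcal W(N)$. Second, I would apply Lemma \ref{lem:equivalent-claim} to replace this dilation condition with the matrix-range equality $\operatorname{MR}(A) = (\operatorname{conv}(\mathcal W(A)))_{\min}$. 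Third, Theorem \ref{thm:A-almost-commuting} translates the latter equality into the explicit decomposition $A = (U(B_1 \oplus E_1)U^\ast,\ldots, U(B_g \oplus E_g)U^\ast)$ with $B$ a pairwise commuting normal tuple satisfying $\operatorname{conv}(\sigma(B)) = \operatorname{conv}(\mathcal W(A))$ and $E$ satisfying $\operatorname{MR}(E) \subseteq (\operatorname{conv}(\mathcal W(A)))_{\min}$. Composing these three ``iff''s yields exactly the statement of the corollary.

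The only mildly subtle point I would double-check is that the hypothesis $0 \in \operatorname{int}(\operatorname{conv}\mathcal W(A))$ needed for Theorem \ref{thm:P-compatible-iff-P-operator} plays no role in Lemma \ref{lem:equivalent-claim} or Theorem \ref{thm:A-almost-commuting}, so it is compatible with the entire chain; this is immediate since those two results are stated for arbitrary $A \in \mathcal M_d^{\mathrm{sa}}(\mathbb C)^g$. I would also note, as a sanity check, that in the decomposition the ``commuting part'' $B$ carries the full numerical range (its joint spectrum generates $\operatorname{conv}\mathcal W(A)$), while the ``non-commuting remainder'' $E$ is constrained to lie in the minimal matrix convex set over that same numerical range — this matches the informal version stated in the introduction.

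Since each ingredient is already proved in the excerpt, no step poses a genuine obstacle; the only risk is a typographical one (the corollary writes $\otimes$ where Theorem \ref{thm:A-almost-commuting} writes $\oplus$), which I would flag and follow the direct-sum convention of the underlying theorem. The proof itself can therefore be a one-sentence citation chain: apply Theorem \ref{thm:P-compatible-iff-P-operator}, then Lemma \ref{lem:equivalent-claim}, then Theorem \ref{thm:A-almost-commuting}.
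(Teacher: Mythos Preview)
Your proposal is correct and matches the paper's approach: the corollary is stated in the paper without proof, immediately after Theorem~\ref{thm:P-compatible-iff-P-operator}, Lemma~\ref{lem:equivalent-claim}, and Theorem~\ref{thm:A-almost-commuting}, so the intended argument is precisely the three-step chain you describe. Your observation about the $\otimes$/$\oplus$ typo is also correct; the paper itself uses $\otimes$ in both the corollary and Theorem~\ref{thm:A-almost-commuting}, but the proofs and the example in the subsequent Remark make clear that direct sum is meant.
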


 \begin{remark}
    We give an example of $A$ being $\mathcal P$-operators, but not $\mathcal P$-compatible, see \cite[Example 3.2]{li2020joint}. Take 
$$A_1:= \operatorname{diag}(1,1,-1,-1) \oplus \sigma_Z \quad \text{ and } \quad A_2:= \operatorname{diag}(1,-1,1,-1) \oplus \sigma_X.$$
Their joint numerical range is the square $[-1,1]^2$, while the matrices do not commute. The diagonal part is responsible for the square, while the off-diagonal part (which is non-commutative) has a unit disk as a joint numerical range, which is hidden by the larger square. This shows that having a numerical range which is a polytope is not equivalent to commuting. From joint measurability (see Section \ref{sec:meas-compatibility-polytope}), we know that $(A_1, A_2)$ are $[-1,1]^2$-operators, but not $[-1,1]^2$-compatible.

We can modify the construction slightly, however, to find an example $B$ being $\mathcal P$-compatible for all polytopes that $B$ are $\mathcal P$-operators.  Take 
$$B_1:= \operatorname{diag}(1,1,-1,-1) \oplus\frac{1}{\sqrt{2}}\sigma_Z \quad \text{ and } \quad B_2:= \operatorname{diag}(1,-1,1,-1) \oplus \frac{1}{\sqrt{2}}\sigma_X.$$ Now the the matrix range of $(\frac{1}{\sqrt{2}}\sigma_Z, \frac{1}{\sqrt{2}}\sigma_X)$ is contained in $([-1,1]^2)_{\min}$, as $(\sigma_Z, \sigma_X) \in ([-1,1]^2)_{\max}$ and $\frac{1}{\sqrt{2}} \in \Delta_{[-1,1]^2}(d)$ for all $d \in \mathbb N$ \cite{passer2018minimal}. We are thus in the setting of Corollary \ref{cor:nicer-form}.
\end{remark}

In view of Proposition \ref{prop:operator_compatible_max_min}, one might ask when $\mathcal P_{max}(d)=\mathcal P_{min}(d)$ holds. This is known to be the case if and only if $\mathcal P$ is a simplex. 

\begin{prop}[{\cite{fritz2017spectrahedral}}] \label{prop:min-max-simplex}
    Let $g$, $d \in \mathbb N$ and let $\mathcal P \subseteq \mathbb R^g$ be a polytope and $d \geq 2$. Then
    $$\mathcal P_{\min}(d) = \mathcal P_{\max}(d) \iff \text{$\mathcal P$ is a simplex.}$$
\end{prop}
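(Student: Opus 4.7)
The plan is to handle the two implications separately. The forward direction (simplex $\Rightarrow$ equality) is a direct construction using barycentric coordinates; the converse is the content of the cited theorem \cite{fritz2017spectrahedral} and would be invoked rather than reproved.

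\textbf{Simplex $\Rightarrow \mathcal P_{\min}(d) = \mathcal P_{\max}(d)$.} Since $\mathcal P_{\min}(d) \subseteq \mathcal P_{\max}(d)$ always, only the reverse inclusion needs proof. Let $\mathcal P$ be a $g$-simplex with vertices $v_1, \ldots, v_{g+1}$ and $0 \in \operatorname{int} \mathcal P$. The barycentric coordinates of $\mathcal P$ are the unique affine functions $\lambda_i(x) = \alpha_i + \langle \beta_i, x \rangle$ with $\lambda_i(v_j) = \delta_{ij}$. Consequently, $\sum_{i=1}^{g+1} \lambda_i(x) = 1$ and $\sum_{i=1}^{g+1} \lambda_i(x) v_i = x$ hold as identities in $x \in \mathbb R^g$, and $\mathcal P = \{x : \lambda_i(x) \geq 0~\forall i\}$. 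Because $0$ is interior, $\alpha_i = \lambda_i(0) > 0$, so the facet of $\mathcal P$ opposite $v_i$ has normal $h_i := -\beta_i / \alpha_i$ in the scale of Proposition \ref{prop:P-hj}. Given $A \in \mathcal P_{\max}(d)$, the natural candidate POVM is
\begin{equation*}
C_i := \alpha_i I_d + \sum_{x=1}^{g} \beta_i(x) A_x \in \mathcal M_d^{\mathrm{sa}}(\mathbb C), \qquad i \in [g+1].
\end{equation*}
Reading the two scalar identities above coefficient by coefficient and substituting $A_x$ for the $x$-th coordinate function yields $\sum_i C_i = I_d$ and $A = \sum_i v_i \otimes C_i$. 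Positivity $C_i \geq 0$ is the facet inequality $\sum_x h_i(x) A_x \leq I_d$ from Proposition \ref{prop:P-hj}, multiplied through by $\alpha_i > 0$. Hence $(C_i)_{i=1}^{g+1}$ is a POVM and $A \in \mathcal P_{\min}(d)$.

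\textbf{Converse: equality $\Rightarrow$ simplex.} The cleanest approach is to invoke \cite{fritz2017spectrahedral} directly; their main theorem is exactly this statement. Conceptually, $\mathcal P_{\min}$ and $\mathcal P_{\max}$ encode the OMIN and OMAX operator-system structures on the affine function space $A(\mathcal P)$; these agree at every matrix level iff $A(\mathcal P)$ is an abelian operator system, i.e.~isomorphic to $\ell^\infty_{g+1}$, which in turn holds iff $\mathcal P$ is affinely isomorphic to a probability simplex. A self-contained counterexample at $d = 2$ can be built for any non-simplex $\mathcal P$ by restricting to a two-dimensional face with $\geq 4$ vertices and exhibiting Pauli-type matrices in $\mathcal P_{\max}(2) \setminus \mathcal P_{\min}(2)$; the prototype is $(\sigma_X, \sigma_Z)$ for $\mathcal P = [-1,1]^2$, which is a tuple of $\mathcal P$-operators but not $\mathcal P$-compatible, reflecting the well-known incompatibility of the Pauli $X$ and $Z$ measurements.

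The main obstacle is the converse, which rests on Choquet-theoretic uniqueness properties of maximal measures (simplices being precisely the convex bodies with a unique barycentric representation of every interior point). Since this is covered by the cited reference, only the forward construction needs to be spelled out in detail.
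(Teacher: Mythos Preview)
Your proposal is correct, and for the converse you do exactly what the paper does: cite \cite{fritz2017spectrahedral} (the paper's entire proof is ``This follows from \cite[Theorem 4.3]{fritz2017spectrahedral}''). Your forward direction is more than the paper provides, since the paper cites for both implications, but your barycentric-coordinate construction is clean and correct; this is the standard argument and it is good to have it spelled out.

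One small caveat on your informal sketch of the converse: not every non-simplex polytope has a two-dimensional face with four or more vertices (the octahedron, for instance, has only triangular facets), so ``restricting to a two-dimensional face with $\geq 4$ vertices'' does not work in general. The correct reduction is to a two-dimensional \emph{projection} or \emph{section through the interior} rather than a face, or simply to pick two facet functionals $h_i, h_j$ that are not affinely related and build the Pauli-type example from them. Since you invoke the citation for the actual proof this does not affect correctness, but the heuristic as written is slightly misleading.
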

\begin{proof}
    This follows from \cite[Theorem 4.3]{fritz2017spectrahedral} (see also \cite[Theorem 3.1]{huber2021note}).
\end{proof}
The statement even holds without the assumption that $\mathcal P$ is a polytope, see \cite[Corollary 2]{aubrun2021entangleability} together with \cite[Section 7]{passer2018minimal}.

\subsection{Additional results}

The notions of being $\mathcal P$-operators and $\mathcal P$-compatible behave well with respect to inclusion of polytopes. 

\begin{prop} \label{prop:polytope-inclusion}
    Let $\mathcal P \subseteq \mathcal Q$ be two polytopes with $0 \in \operatorname{int} \mathcal P$. If $A$ are $\mathcal P$-operators (resp.~$\mathcal P$-compatible), then $A$ are also $\mathcal Q$-operators (resp.~$\mathcal Q$-compatible).
\end{prop}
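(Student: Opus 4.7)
The plan is to prove both parts by directly invoking the equivalent characterizations of $\mathcal P$-operators and $\mathcal P$-compatibility already established in the preceding subsection, so that the inclusion $\mathcal P \subseteq \mathcal Q$ transfers in an essentially trivial way. First, note that since $0 \in \operatorname{int} \mathcal P$ and $\mathcal P \subseteq \mathcal Q$, automatically $0 \in \operatorname{int} \mathcal Q$, so the hypothesis of Definition \ref{defi:P-operators-P-compatible} is available for both polytopes.

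For the $\mathcal P$-operator implication, I would use Proposition \ref{prop:num_range_P_operators}: being a $\mathcal P$-operator is equivalent to $\operatorname{conv}\mathcal W(A) \subseteq \mathcal P$. Combined with $\mathcal P \subseteq \mathcal Q$, this immediately yields $\operatorname{conv}\mathcal W(A) \subseteq \mathcal Q$, which by the same proposition means $A$ is a $\mathcal Q$-operator.

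For the $\mathcal P$-compatible implication, I would use the definition of $\mathcal P_{\min}$ in Eq.~\eqref{eq:definition-min}, which characterizes $A \in \mathcal P_{\min}(d)$ by the existence of a decomposition $A = \sum_j z_j \otimes Q_j$ with $z_j \in \mathcal P$ and $Q$ a POVM. Since $z_j \in \mathcal P \subseteq \mathcal Q$ for every $j$, the same decomposition witnesses $A \in \mathcal Q_{\min}(d)$. Invoking Proposition \ref{prop:operator_compatible_max_min} to translate back, $A$ is $\mathcal Q$-compatible. (Equivalently, one may take the extreme-point decomposition in Definition \ref{defi:P-operators-P-compatible}, write each vertex of $\mathcal P$ as a convex combination of vertices of $\mathcal Q$, and reassemble the coefficients into a new POVM, exactly as done in the proof of Proposition \ref{prop:operator_compatible_max_min}; both routes give the same conclusion.)

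There is no real obstacle here: the work has been front-loaded into establishing that $\mathcal P_{\max}$ and $\mathcal P_{\min}$ can be read off from the set-level data $\mathcal P$ in a monotone way. The only mild subtlety is that Definition \ref{defi:P-operators-P-compatible} requires the interior condition on the polytope, which is why the preliminary remark that $0 \in \operatorname{int} \mathcal Q$ follows automatically is worth stating explicitly.
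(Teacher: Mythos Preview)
Your proof is correct and follows essentially the same approach as the paper: both argue that $\mathcal P \subseteq \mathcal Q$ implies $\mathcal P_{\min} \subseteq \mathcal Q_{\min}$ and $\mathcal P_{\max} \subseteq \mathcal Q_{\max}$ directly from the definitions, then invoke Proposition~\ref{prop:operator_compatible_max_min}. The paper is terser, simply asserting these inclusions follow from the definitions, whereas you spell out which characterization (numerical range for $\max$, POVM decomposition for $\min$) does the work; your added remark that $0 \in \operatorname{int}\mathcal Q$ is automatic is a nice touch the paper omits.
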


\begin{proof}
It follows from the definition of the maximal and minimal matrix convex that $\mathcal P \subseteq \mathcal Q$ implies both $\mathcal P_{\mathrm{min}} \subseteq \mathcal Q_{\mathrm{min}}$ and $\mathcal P_{\mathrm{max}} \subseteq \mathcal Q_{\mathrm{max}}$. The assertions then follow from Proposition \ref{prop:operator_compatible_max_min}.
\end{proof} 

\begin{ex}\label{ex:Paulis} We give an example of $A$ being $\mathcal Q$-compatible, but not $\mathcal P$-compatible for $\mathcal P \subseteq \mathcal Q$.
    Let $A= (\sigma_X, \sigma_Y, \sigma_Z) \in \mathcal M_2^{\mathrm{sa}}(\mathbb C)^3$ be the triplet of Pauli matrices. Then: 
    \begin{itemize}
        \item The joint numerical range of $A$ is the unit sphere of $\mathbb R^3$.
        \item $A$ are $[-1,1]^3$-operators. 
        \item $A$ are not $[-1,1]^3$-compatible, but are $\left( s_1^{-1} [-1,1]\right) \times \left( s_2^{-1} [-1,1]\right) \times \left( s_3^{-1} [-1,1]\right)$-compatible, for all $s_{1,2,3}>0$ with $s_1^2+s_2^2+s_3^2 \leq 1$.
    \end{itemize}
    The last point follows from Section \ref{sec:meas-compatibility-polytope} and known measurement compatibility results, see \cite{bluhm2018joint}, for example.
\end{ex}

Recall from Section \ref{sec:convexity} the definitions of the direct sum and the direct product of polytopes containing $0$. We gather next some results about the behavior of the Cartesian product and the direct sum operations when considering matrix levels.

\begin{prop} \label{prop:easy_consequences}
    Let $g_1$, $g_2 \in \mathbb N$. Let $A_i$ be $g_i$-tuples of operators, and $\mathcal P_i \in \mathbb R^{g_i}$ be polytopes with $0 \in \operatorname{int} \mathcal P_i$, $i=1,2$. Then:
    \begin{enumerate}
        \item $(A_1, A_2) \in (\mathcal P_1 \times \mathcal P_2)_{\max} \iff A_1 \in (\mathcal P_1)_{\max}$ and $A_2 \in (\mathcal P_2)_{\max}$.
        \item $(A_1, A_2) \in (\mathcal P_1 \times \mathcal P_2)_{\min} \implies A_1 \in (\mathcal P_1)_{\min}$ and $A_2 \in (\mathcal P_2)_{\min}$, but the converse does not hold in general.
        \item $A_1 \in (\mathcal P_1)_{\min}$ and $A_2 \in (\mathcal P_2)_{\min} \implies (A_1 \oplus 0, 0 \oplus A_2) \in (\mathcal P_1 \times \mathcal  P_2)_{\min}$ and $(A_1 \otimes I, I \otimes A_2) \in (\mathcal P_1 \times \mathcal  P_2)_{\min}$; this holds even for tuples of operators having different dimensions.
        \item If $q_1, q_2 \geq 0$ with $q_1+q_2 \leq 1$, then $A_1 \in (\mathcal P_1)_{\min}$ and $A_2 \in \mathcal (P_2)_{\min} \implies (q_1 A_1, q_2 A_2) \in ( \mathcal P_1 \times \mathcal P_2)_{\min}$.
        \item $(A_1, A_2) \in (\mathcal P_1 \oplus \mathcal P_2)_{\min} \implies A_1 \in (\mathcal P_1)_{\min}$ and $A_2 \in (\mathcal P_2)_{\min}$, but the converse does not hold in general.
        \item $A_1 \in (\mathcal P_1)_{\min}$ and $A_2 \in (\mathcal P_2)_{\min} \implies (A_1 \oplus 0, 0 \oplus A_2) \in (\mathcal P_1 \oplus \mathcal  P_2)_{\min}$; this holds even for tuples of operators having different dimensions.
        \item If $q_1, q_2 \geq 0$ with $q_1+q_2 \leq 1$, then $A_1 \in (\mathcal P_1)_{\min}$ and $A_2 \in \mathcal (P_2)_{\min} \implies (q_1 A_1, q_2 A_2) \in ( \mathcal P_1 \oplus \mathcal P_2)_{\min}$.
    \end{enumerate}
\end{prop}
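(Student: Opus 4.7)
The plan is to treat the seven items in parallel, grouping them by the underlying technique. Items (1), (2), and (5) will follow from the description of the facets and extreme points of $\mathcal P_1\times\mathcal P_2$ and $\mathcal P_1\oplus\mathcal P_2$, while (3), (4), (6), (7) are constructive assertions that require assembling an explicit (sub-)POVM from the POVMs witnessing $A_1\in(\mathcal P_1)_{\min}$ and $A_2\in(\mathcal P_2)_{\min}$. The counterexamples to the converses in (2) and (5) can be produced from a single classical example using Pauli matrices together with Proposition \ref{prop:min-max-simplex}.

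For (1), I will invoke Proposition \ref{prop:P-hj}: the facets of $\mathcal P_1\times\mathcal P_2$ are exactly the pairs $(h,0)$ with $h$ a facet of $\mathcal P_1$ together with $(0,h')$ for $h'$ a facet of $\mathcal P_2$, so the half-space conditions decouple. For (2) and (5), I will use the fact that the extreme points of $\mathcal P_1\times\mathcal P_2$ are the pairs $(v_i,w_j)$ while those of $\mathcal P_1\oplus\mathcal P_2$ are $(v_i,0)$ and $(0,w_j)$. Decomposing $(A_1,A_2)$ in either set of vertices and marginalizing out the second coordinate yields a sub-POVM decomposition of $A_1$; Proposition \ref{prop:sub-POVM-compatibility} upgrades this to a POVM, provided $0\in\operatorname{int}\mathcal P_1$, which is assumed.

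For the constructive parts, I will write $A_1=\sum_i v_i\otimes C_i$ and $A_2=\sum_j w_j\otimes D_j$ with POVMs $C,D$, and fix probability vectors $\pi\in\mathbb R^{k_1}$ and $\pi'\in\mathbb R^{k_2}$ with $\sum_i\pi_iv_i=0$ and $\sum_j\pi'_jw_j=0$, which exist because $0\in\operatorname{int}\mathcal P_i$. The joint POVM for the first part of (3) will be $M_{ij}=(\pi'_jC_i)\oplus(\pi_iD_j)$, and for the tensor part I will simply use $M_{ij}=C_i\otimes D_j$; the marginals check out in a short calculation. For (4), the sub-POVM $M_{ij}=q_1\pi'_jC_i+q_2\pi_iD_j$ gives the required decomposition with total weight $q_1+q_2\le 1$, again finished by Proposition \ref{prop:sub-POVM-compatibility}. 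For (6), since the vertex set of $\mathcal P_1\oplus\mathcal P_2$ already lives on the two axes, the concatenated POVM $(C_i\oplus 0, 0\oplus D_j)$ works directly. For (7), scaling by $q_1,q_2$ and dropping the dimension-doubling gives the sub-POVM $q_1C_i$ on the first vertex block and $q_2D_j$ on the second.

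Finally, for the negative parts of (2) and (5), I will take $\mathcal P_1=\mathcal P_2=[-1,1]$, which is a simplex, so Proposition \ref{prop:min-max-simplex} gives $[-1,1]_{\min}(d)=[-1,1]_{\max}(d)$; hence each of $\sigma_X$ and $\sigma_Y$ is in $[-1,1]_{\min}(2)$. However, $(\sigma_X,\sigma_Y)$ fails to lie in $([-1,1]^2)_{\min}(2)$ and a fortiori in $([-1,1]\oplus[-1,1])_{\min}(2)$ by the well-known incompatibility of the unbiased Pauli POVMs, which will be invoked from the discussion in the introduction or Example \ref{ex:Paulis}. The only step that takes care is checking the marginals in the direct-sum joint POVM for (3); the rest is bookkeeping, so I do not anticipate a conceptual obstacle beyond keeping track of the interplay between the sub-POVM normalization and the fact that $0$ lies in the interior of each factor.
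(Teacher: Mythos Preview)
Your proposal is correct and follows essentially the same approach as the paper's proof: both rely on the facet/vertex structure of the product and direct sum, on Proposition~\ref{prop:sub-POVM-compatibility} to upgrade sub-POVMs, and on the Pauli counterexample (the paper uses $\sigma_X,\sigma_Z$ rather than $\sigma_X,\sigma_Y$, which is immaterial). The only noteworthy difference is organisational: for items (3) and (4) you construct the joint (sub-)POVM directly using the probability vectors $\pi,\pi'$ witnessing $0\in\mathcal P_i$, whereas the paper first proves (6) and (7) and then deduces (3) and (4) from the inclusion $\mathcal P_1\oplus\mathcal P_2\subseteq\mathcal P_1\times\mathcal P_2$ via Proposition~\ref{prop:polytope-inclusion}; your direct construction $M_{ij}=(\pi'_jC_i)\oplus(\pi_iD_j)$ is a pleasant alternative. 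One small inefficiency: in (2) the marginals $\sum_j C_{ij}$ are already genuine POVMs, so invoking Proposition~\ref{prop:sub-POVM-compatibility} there is unnecessary (though harmless).
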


\begin{proof}
For the first point, note that the condition $\langle (A_1, A_2), h_{12} \otimes \rho \rangle \leq 1$ has to be checked only for extreme points $h_{12} \in (\mathcal P_1 \times \mathcal P_2)^\circ =\mathcal  P_1^\circ \oplus \mathcal  P_2^\circ$. Such extreme points are either of the form $(h_1, 0)$ or of the form $(0,h_2)$, with $h_i \in \operatorname{ext} \mathcal P_i^\circ$, see \cite[Section 3.1]{bluhm2020compatibility}.

For the second point, let ${v^{(i)}_j}_{j \in [l_i]}$ be the extreme points of $\mathcal P_i$. Then, we can write
\begin{align*}
    (A_1,A_2) &= \sum_{i \in [l_1]} \sum_{j \in [l_2]} (v^{(1)}_i, v^{(2)}_j) \otimes C_{ij} \\
            &= \sum_{i \in [l_1]} (v^{(1)}_i, 0) \otimes C_{i}^{(1)} + \sum_{j \in [l_2]} (0, v^{(2)}_j) \otimes C_{j}^{(2)}
\end{align*}
with POVMs $C_{i}^{(1)} = \sum_{j \in [l_2]} C_{ij}$ and $C_{j}^{(2)} = \sum_{i \in [l_1]} C_{ij}$. Thus, 
\begin{equation*}
    A_i = \sum_{j \in [l_i]} v^{(i)}_j \otimes C_{j}^{(i)}.
\end{equation*}

For the third point, start with decompositions
\begin{align}
    \label{eq:A_1-min} A_1 &= \sum_{i=1}^k v_i \otimes C_i\\
    \label{eq:A_2-min} A_2 &= \sum_{j=1}^l w_j \otimes D_j,
\end{align}
with $\{v_i\}_{i \in [k]} \subset \mathcal P_1$ and $\{v_j\}_{j \in [l]} \subset \mathcal P_2$ and $\{C_i\}_{i \in [k]}$, $\{D_j\}_{j \in [l]}$ POVMs. From Eq.~\eqref{eq:direct-sum-smaller-Cartesian-product} we know that $\mathcal P_1 \oplus \mathcal P_2 \subseteq \mathcal P_1 \times \mathcal P_2$ and the first claim follows from the sixth point and Proposition \ref{prop:polytope-inclusion}. Finally, we can put 
\begin{equation*}
    (A_1 \otimes I, I \otimes A_2) = \sum_{i=1}^k \sum_{j=1}^l (v_i,w_j) \otimes (C_i \otimes D_j)
\end{equation*}
and note that $\{C_i \otimes D_j\}_{i \in [k], i \in [k]}$ forms a POVM.

The fourth point follows again from the seventh point, Eq.~\eqref{eq:direct-sum-smaller-Cartesian-product}, and Proposition \ref{prop:polytope-inclusion}.

For the fifth point, using again the general form of the extreme points of $P_1 \oplus P_2$, we have 
$$(A_1,A_2) = \sum_{i=1}^k (v_i,0) \otimes C_i + \sum_{j=1}^l (0,w_j) \otimes D_j,$$
for positive semidefinite operators $C_i,D_j$ such that $\sum_i C_i + \sum_j D_j \leq I_d$. Hence, both $C=(C_i)_{i \in [k]}$ and $D=(D_j)_{j \in [l]}$ are sub-POVMs and the conclusion follows from Proposition \ref{prop:sub-POVM-compatibility}. 

For the sixth point, start with the decompositions in from Eqs.~\eqref{eq:A_1-min}-\eqref{eq:A_2-min}. Then write
$$(A_1 \oplus 0, 0 \oplus A_2) = \sum_{i=1}^k (v_i,0) \otimes (C_i \oplus 0) + \sum_{j=1}^l (0,w_j) \otimes (0 \oplus D_j),$$
where $\{C_i \oplus 0\}_{i \in [k]} \sqcup \{0 \oplus D_j\}_{j \in [l]}$ forms itself a POVM.

For the seventh point, start again from Eqs.~\eqref{eq:A_1-min}-\eqref{eq:A_2-min}, and write
$$(q_1 A_1,q_2 A_2) = \sum_{i=1}^k (v_i,0) \otimes q_1C_i + \sum_{j=1}^l (0,w_j) \otimes q_2D_j,$$
where $\{q_1C_i\}_{i \in [k]} \sqcup \{q_2 D_j\}_{j \in [l]}$ forms itself a sub-POVM. Proposition \ref{prop:sub-POVM-compatibility} yields the conclusion.

Let us now show that the converses of the second and the fifth points do not hold in general. To this end, take $\mathcal P_{1,2} = [-1,1]$ and $A_1 = \sigma_X$, $A_2 = \sigma_Z$. Clearly, $\|A_{1,2}\| = 1$ so $A_i \in (\mathcal P_i)_{\max} = (\mathcal P_i)_{\min}$. Since the two observables $A_{1,2}$ are not compatible (see Example \ref{ex:Paulis} or the next section), we have that 
$$(A_1, A_2) \notin ([-1,1]^2)_{\min} = ([-1,1] \times [-1,1])_{\min} \supseteq ([-1,1] \oplus [-1,1])_{\min},$$
proving the claim.
\end{proof}
\begin{remark}
    The results above generalize in the obvious manner to more than two polytopes and tuples of operators. 
\end{remark}

\section{Measurement compatibility} \label{sec:meas-compatibility-polytope}
The first example we will consider concerns the compatibility of dichotomic quantum measurements. Remember that in this case, we identify the POVM $\{E_i, I-E_i\}$ simply with the effect $E_i$. Our aim is to rephrase the compatibility of $g$ dichotomic measurements in dimension $d$ as $\mathcal P$-compatibility with $\mathcal P = [-1,1]^g$. This recovers results from \cite{Bluhm2022norms} with alternative proofs. Taking a different $\mathcal P$, the results will extend to measurements with more outcomes. We will discuss this at the end of this section.

\begin{prop}\label{prop:effects-measurement}
    Let $g$, $d \in \mathbb N$ and let $A \in \mathcal (M_d^\mathrm{sa}(\mathbb C))^g$. Then, the $A$ are $[-1,1]^g$-operators if and only if the $E_i=\frac{1}{2}(A_i + I_d)$ are effects for all $i \in [g]$.
\end{prop}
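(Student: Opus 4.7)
The plan is to translate the condition $A \in ([-1,1]^g)_{\max}(d)$ into an operator inequality, and then observe that this is exactly the effect condition for $E_i = \frac{1}{2}(A_i + I_d)$.

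First I would identify the facet structure of the hypercube. The polytope $\mathcal P = [-1,1]^g$ contains $0$ in its interior and admits the H-representation
\[
[-1,1]^g = \bigcap_{j=1}^g \bigl(\{x : \langle e_j, x\rangle \leq 1\} \cap \{x : \langle -e_j, x\rangle \leq 1\}\bigr),
\]
where $e_1,\ldots,e_g$ are the standard basis vectors of $\mathbb R^g$. So the $2g$ facet vectors are exactly $\pm e_1, \ldots, \pm e_g$.

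Next I would invoke Proposition \ref{prop:P-hj}, which gives
\[
([-1,1]^g)_{\max}(d) = \{A \in \mathcal M_d^{\mathrm{sa}}(\mathbb C)^g : \pm A_j \leq I_d \text{ for all } j \in [g]\} = \{A : -I_d \leq A_j \leq I_d \ \forall j\}.
\]
Combined with Proposition \ref{prop:operator_compatible_max_min}, being a $[-1,1]^g$-operator is equivalent to $-I_d \leq A_j \leq I_d$ for every $j$.

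Finally I would translate this into the effect condition. Setting $E_j := \tfrac{1}{2}(A_j + I_d)$, affinity of the map $A_j \mapsto E_j$ gives $0 \leq E_j \leq I_d$ if and only if $-I_d \leq A_j \leq I_d$, which by definition is precisely the statement that each $E_j$ is an effect (i.e.\ $\{E_j, I_d - E_j\}$ is a dichotomic POVM). There is no genuine obstacle here: the only thing to check is that the facet vectors of $[-1,1]^g$ coincide with $\pm e_j$, after which Proposition \ref{prop:P-hj} makes the argument essentially automatic.
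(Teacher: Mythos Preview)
Your proof is correct and essentially the same as the paper's: both reduce the $[-1,1]^g$-operator condition to the operator inequality $-I_d \le A_j \le I_d$ and then translate this to $0 \le E_j \le I_d$. The only cosmetic difference is that you reach the inequality via Proposition~\ref{prop:P-hj} (the facet description of $\mathcal P_{\max}$), whereas the paper uses Proposition~\ref{prop:num_range_P_operators} (the numerical range description) to get $\operatorname{Tr}[A_i\rho]\in[-1,1]$ for all states $\rho$ and then the same inequality.
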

\begin{proof}
    By Proposition \ref{prop:num_range_P_operators}, the $A$ are $[-1,1]^g$-operators if and only if 
    \begin{equation*}
        \operatorname{Tr}[A_i \rho] \in [-1,1] \qquad \forall \rho \in \mathcal M_d^{1,+}(\mathbb C), \forall i \in [g].
    \end{equation*}
    This is equivalent to $-I_d \leq A_i \leq I_d$ $\forall i \in [g]$, from which the assertion follows.
\end{proof}

\begin{prop}\label{prop:effects-compatible}
    Let $g$, $d \in \mathbb N$ and let $A \in \mathcal (M_d^\mathrm{sa}(\mathbb C))^g$. Then, the $A$ are $[-1,1]^g$-compatible if and only if the $E_i=\frac{1}{2}(A_i + I_d)$ are compatible effects for all $i \in [g]$.
\end{prop}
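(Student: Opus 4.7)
The plan is to exhibit an explicit bijection between joint POVMs witnessing compatibility of the effects $\{E_i, I-E_i\}$ and POVMs witnessing $[-1,1]^g$-compatibility of $A$, indexed in both cases by $\{0,1\}^g$ (equivalently $\{-1,+1\}^g$).

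First I would spell out the vertex structure of the cube: $\mathcal P = [-1,1]^g$ has $k = 2^g$ extreme points $v_\epsilon = (\epsilon_1, \ldots, \epsilon_g)$ with $\epsilon \in \{-1,+1\}^g$. By Definition \ref{defi:P-operators-P-compatible} and Proposition \ref{prop:operator_compatible_max_min}, $A$ is $[-1,1]^g$-compatible iff there is a POVM $\{C_\epsilon\}_{\epsilon \in \{-1,+1\}^g}$ with
\begin{equation*}
    A_i = \sum_{\epsilon \in \{-1,+1\}^g} \epsilon_i\, C_\epsilon \qquad \forall i \in [g].
\end{equation*}
On the other side, by Definition \ref{def:jointPOVM} applied to the dichotomic POVMs $\{E_i, I-E_i\}$, compatibility of the effects means there is a joint POVM $\{R_j\}_{j \in \{0,1\}^g}$ with marginals $E_i = \sum_{j: j_i = 1} R_j$ for every $i \in [g]$.

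Second, I would fix the bijection $\{-1,+1\}^g \to \{0,1\}^g$ given by $j_i(\epsilon) = (1+\epsilon_i)/2$, and set $C_\epsilon := R_{j(\epsilon)}$. This is clearly a POVM iff $R$ is. The computation
\begin{equation*}
    \sum_\epsilon \epsilon_i C_\epsilon = \sum_j (2j_i - 1)R_j = 2\sum_{j:\,j_i=1} R_j - \sum_j R_j = 2 E_i - I_d = A_i
\end{equation*}
shows that compatibility of the effects yields $[-1,1]^g$-compatibility of $A$. Conversely, given a POVM $\{C_\epsilon\}$ realizing $A$, defining $R_{j(\epsilon)} := C_\epsilon$ gives a POVM whose marginals read $\sum_{j:\,j_i=1} R_j = \sum_{\epsilon:\,\epsilon_i=1} C_\epsilon = \frac{1}{2}\bigl(\sum_\epsilon C_\epsilon + \sum_\epsilon \epsilon_i C_\epsilon\bigr) = \frac{1}{2}(I_d + A_i) = E_i$, so the effects are compatible.

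No serious obstacle is expected: the only subtlety is book-keeping for the index bijection $\epsilon \leftrightarrow j$ and the observation that the map $E_i \mapsto 2E_i - I_d = A_i$ is exactly what converts the $\{0,1\}$-valued marginal constraint into the $\{-1,+1\}$-valued one that encodes extreme points of the cube. Alternatively, one may shortcut the argument by invoking Proposition \ref{prop:sub-POVM-compatibility}: it suffices to give a sub-POVM $\{C_\epsilon\}_{\epsilon \neq (-1,\ldots,-1)}$ summing to at most $I_d$, which matches the standard reduction of joint measurability of $g$ dichotomic effects to $2^g - 1$ positive operators, but the direct bijective proof above is cleaner.
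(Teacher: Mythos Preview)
Your proof is correct and follows essentially the same approach as the paper: both identify the extreme points of $[-1,1]^g$ as sign vectors $\epsilon \in \{\pm 1\}^g$, and both observe that the POVM $\{C_\epsilon\}$ witnessing $\mathcal P$-compatibility is precisely a joint POVM for the dichotomic measurements $\{E_i, I-E_i\}$ via the computation $E_i = \tfrac12(A_i + I_d) = \sum_{\epsilon:\,\epsilon_i=1} C_\epsilon$. Your version is slightly more explicit in spelling out the index bijection and both directions separately, but the argument is the same.
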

\begin{proof}
    We note that the extreme points of $[-1,1]^g$ are the sign vectors $\epsilon_j\in \{\pm 1\}^g$, $j \in [{2^g}]$. From Definition \ref{defi:P-operators-P-compatible}, the $A$ are $[-1,1]^g$-compatible if and only if
    \begin{equation*}
        A_i = \sum_{\epsilon \in \{\pm 1\}^g} \epsilon(i) C_\epsilon \qquad\forall i \in [g]
    \end{equation*}
    and for some POVM $\{C_\epsilon\}_{\epsilon \in \{\pm 1\}^g}$. As $ \sum_{\epsilon \in \{\pm 1\}^g} C_\epsilon = I_d$ and $\frac{1}{2}(\epsilon(i)+1) \in \{0,1\}$, we obtain
    \begin{equation*}
        E_i = \sum_{\substack{\epsilon \in \{\pm 1\}^g\\\epsilon(i)=1}} C_\epsilon \quad\qquad \forall i \in [g].
    \end{equation*}
    Thus, $\{C_\epsilon\}_{\epsilon \in \{\pm 1\}^g}$ is a joint POVM for the measurements defined by $\{E_i\}_{i \in [g]}$.
\end{proof}

In fact, this example has motivated our terminology of $\mathcal P$-operators and $\mathcal P$-compatibility.

\begin{remark}\label{rmk:easy-consequences}
    We can extract a few easy consequences about compatible measurements from our theory of $\mathcal P$-compatibility. Most of them are easy to check directly and should be seen primarily as sanity checks and as providing a way to think about the propositions from which they follow in a more abstract setting.

Point (1) of Proposition \ref{prop:easy_consequences} implies that a collection of matrices is a collection of effects if and only if each matrix is an effect individually. 

Point (2) of Proposition \ref{prop:easy_consequences} implies that if you have a collection of compatible effects, taking any subset of these effects is still a collection of compatible effects. The proof of this point shows that the corresponding joint POVM arises from taking marginals of the joint POVM for the larger collection of compatible effects. It is known that you cannot simply combine sets of compatible effects to result in a larger set of compatible effects. This is easy to see, since any effect is compatible with itself, such that there would not be incompatible effects otherwise.

Point (3) of Proposition \ref{prop:easy_consequences} implies that if you insert two collections of compatible measurements into different blocks of a larger matrix, their union is compatible. The same is true if you combine measurements on different subsystems.

Point (4) of Proposition \ref{prop:easy_consequences} implies that for two collections of compatible measurements, adding a certain amount of noise to each of them implies that their union remains compatible. More concretely, if you have, for example, two collections of dichotomic measurements $\{E_1, \ldots, E_g\}$ and $\{F_1, \ldots, F_{g^\prime}\}$, where the measurements $\{E_1, \ldots, E_g\}$ are compatible and the measurements $\{F_1, \ldots, F_{g^\prime}\}$ are compatible, then the collection of noisy measurements 
\begin{equation*}
    \{q_1 E_1 + (1-q_1)I/2, \ldots, q_1 E_g + (1-q_1)I/2, q_2 F_1 + (1-q_2)I/2, \ldots, q_2 F_{g^\prime} + (1-q_2)I/2\}
\end{equation*}
are compatible as well if $q_1 + q_2 \leq 1$, since $2(q_1 E_1 + (1-q_1)I/2) - I = q_1 (2 E_i - I)$. The latter shows that adding noise with parameter $q_1$ to the measurements is equivalent to multiplying the corresponding tensor $A$ as in Proposition \ref{prop:effects-measurement} by $q_1$. This can be interpreted in terms of coin tossing and mixing as in \cite[Section 2.3]{Heinosaari2016}.

From Proposition \ref{prop:min-max-simplex}, it follows that there is no incompatibility if one considers only a single POVM. Conversely, the proposition implies that if the dimension is at least $2$, we can always find a collection of incompatible measurements if we consider at least two measurements with at least two outcomes.
\end{remark}

\begin{remark}
    We could have also used Proposition \ref{prop:factoring} for the proofs in this section, since any collection of $g$ measurements $E^{(i)}$ with $k_i$ outcomes each, $i \in [g]$, can be seen as a measurement map $\mathcal E$ from $\mathcal M_d^{1,+}(\mathbb C) \to \Delta_{\mathbf k}$, given as
    \begin{equation*}
        \mathcal E: \rho \mapsto (\operatorname{Tr}[E^{(i)}_1 \rho] \delta^{(i)}_1 + \ldots + \operatorname{Tr}[E^{(i)}_1\rho] \delta_{k_i}^{(i)})_{i \in [g]}.
    \end{equation*}
    Here, $\Delta_{\mathbf k}$ is the polysimplex, i.e., the GPT generated by the Cartesian product of simplices $\Delta_{k_1} \times \ldots \times \Delta_{k_g}$, and $\delta^{(i)}_1, \ldots,  \delta_{k_i}^{(i)}$ are the vertices of the simplex $\Delta_{k_i}$. We refer the reader to \cite{jencova2018incompatible} for details. In Theorem 1 of \cite{jencova2018incompatible}, it was shown that the measurements $E^{(i)}$ are compatible if and only if $\mathcal E$ factors through a simplex (see also \cite{Bluhm2022GPT}). This fact could also have been proven from Proposition \ref{prop:factoring} combined with Propositions \ref{prop:effects-measurement} and \ref{prop:effects-compatible}. 
    
    Finally, we could have proven Propositions \ref{prop:effects-measurement} and \ref{prop:effects-compatible} using the techniques in \cite{Bluhm2022norms} based on tensor norms on Banach spaces and their link to matrix convex sets.
\end{remark}

    In this example, we have focused on dichotomic measurements for simplicity, but the correspondence works for general POVMs. Let $k \in \mathbb N$. Defining 
    \begin{equation}\label{eq:def-simplex}
        \mathcal P_{k} = \{x \in \mathbb R^{k-1}: \langle -k e_j, x \rangle \leq 1~\forall j \in [k-1], \langle k(1, \ldots, 1), x \rangle \leq 1\},
    \end{equation}
    where the $e_j$ are the standard basis vectors, we can check that $\mathcal P_{k}$ is a polytope with extreme points 
    \begin{equation*}
        \left\{-\frac{1}{k}(1, \ldots, 1) +  e_j~\forall j \in [k]\right\} \cup \left\{-\frac{1}{k}(1, \ldots, 1)\right\}
    \end{equation*}
Then, we obtain the following statement that generalizes Propositions \ref{prop:effects-measurement} and \ref{prop:effects-compatible}:
\begin{prop}
    Let $g$, $d$, $k_i \in \mathbb N$ for all $i \in [g]$ and let $A \in (\mathcal M_d^\mathrm{sa}(\mathbb C))^{k_1 + \ldots + k_g - g}$. Let 
    $E_i^{(j)}=A_{k_1 + \ldots + k_{j-1} + i-j +1} + \frac{1}{k_j}I_d$ for all $i \in [k_j - 1]$ and  $E_{k_j}^{(j)} = I_d -  E_1^{(j)} - \ldots -  E_{k_j-1}^{(j)}$ for all $j \in [g]$. Let $\mathcal P_{\mathbf k}:=\mathcal P_{k_1} \times \ldots \times \mathcal P_{k_g}$. Then,
    \begin{enumerate}
        \item The $A$ are $\mathcal P_{\mathbf k}$-operators if and only if the tuples $(E^{(j)}_1, \ldots, E^{(j)}_{k_j})$ are POVMs for all $j \in [g]$.
        \item The $A$ are $\mathcal P_{\mathbf k}$-compatible if and only if the tuples $(E^{(j)}_1, \ldots, E^{(j)}_{k_j})$ are compatible POVMs for all $j \in [g]$.
    \end{enumerate}
\end{prop}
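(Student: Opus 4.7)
The plan is to handle the two parts in turn. Part~(1) reduces cleanly to the single-simplex case via the Cartesian product structure, while part~(2) requires unfolding the extreme-point description of $\mathcal P_{\mathbf k}$ directly.

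For part~(1), I would invoke point~(1) of Proposition~\ref{prop:easy_consequences}: since $\mathcal P_{\mathbf k} = \mathcal P_{k_1} \times \cdots \times \mathcal P_{k_g}$, the tuple $A$ is a $\mathcal P_{\mathbf k}$-operator iff each sub-tuple $A^{(j)} := (A_{k_1 + \cdots + k_{j-1} + i - j + 1})_{i \in [k_j - 1]}$ is a $\mathcal P_{k_j}$-operator. For a single simplex $\mathcal P_{k_j}$ defined as in Eq.~\eqref{eq:def-simplex}, Proposition~\ref{prop:P-hj} applied to the $k_j$ facet vectors $-k_j e_1, \ldots, -k_j e_{k_j-1}, k_j(1, \ldots, 1)$ says that $A^{(j)} \in (\mathcal P_{k_j})_{\max}(d)$ iff $A^{(j)}_i \geq -I_d/k_j$ for every $i \in [k_j-1]$ and $\sum_{i=1}^{k_j-1} A^{(j)}_i \leq I_d/k_j$. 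Under the affine change of variables $E^{(j)}_i = A^{(j)}_i + I_d/k_j$, these inequalities are precisely the positivity conditions $E^{(j)}_i \geq 0$ for $i \in [k_j - 1]$ and $E^{(j)}_{k_j} = I_d - \sum_{i < k_j} E^{(j)}_i \geq 0$; the normalization $\sum_{i=1}^{k_j} E^{(j)}_i = I_d$ holds by construction. This gives the equivalence in both directions.

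For part~(2), one cannot split through individual simplices (the converse in point~(2) of Proposition~\ref{prop:easy_consequences} fails in general), so I would unwrap the extreme-point description of Definition~\ref{defi:P-operators-P-compatible} directly. The extreme points of $\mathcal P_{\mathbf k}$ are the products $v_{\mathbf i} := (v^{(k_1)}_{i_1}, \ldots, v^{(k_g)}_{i_g})$ indexed by $\mathbf i \in [k_1] \times \cdots \times [k_g]$, where $v^{(k)}_i := -\tfrac{1}{k}(1,\ldots,1) + e_i$ for $i \in [k-1]$ and $v^{(k)}_k := -\tfrac{1}{k}(1,\ldots,1)$. Thus $A$ is $\mathcal P_{\mathbf k}$-compatible iff there is a POVM $\{C_{\mathbf i}\}$ with $A = \sum_{\mathbf i} v_{\mathbf i} \otimes C_{\mathbf i}$. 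Reading off coordinates, for $i \in [k_j - 1]$,
\begin{equation*}
A^{(j)}_i = \sum_{\mathbf i} \left(-\tfrac{1}{k_j} + \delta_{i_j,\, i}\right) C_{\mathbf i} = -\frac{I_d}{k_j} + \sum_{\mathbf i \,:\, i_j = i} C_{\mathbf i},
\end{equation*}
using $\sum_{\mathbf i} C_{\mathbf i} = I_d$. Hence $E^{(j)}_i = \sum_{\mathbf i \,:\, i_j = i} C_{\mathbf i}$ for $i \in [k_j - 1]$, and the same identity for $i = k_j$ then follows from the defining relation $E^{(j)}_{k_j} = I_d - \sum_{i<k_j} E^{(j)}_i$. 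This is exactly the condition of Definition~\ref{def:jointPOVM} that $\{C_{\mathbf i}\}$ be a joint POVM for the $E^{(j)}$. The converse follows by reading the same computation backwards: given a joint POVM, set $C_{\mathbf i}$ to be its elements and verify the decomposition of $A$ coordinatewise.

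The only real obstacle is bookkeeping: the linear re-indexing $(j,i) \mapsto k_1 + \cdots + k_{j-1} + i - j + 1$ must be tracked carefully, and one must check that the ``absent'' coordinate $i = k_j$ --- not a direct component of $A^{(j)}$ but encoded via the normalization --- is handled consistently in both directions. Nothing conceptually new arises beyond the dichotomic case treated in Propositions~\ref{prop:effects-measurement} and~\ref{prop:effects-compatible}.
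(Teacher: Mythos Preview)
Your proof is correct and follows essentially the same approach as the paper, which simply states that the argument proceeds analogously to Propositions~\ref{prop:effects-measurement} and~\ref{prop:effects-compatible}. Your explicit use of Proposition~\ref{prop:easy_consequences}(1) to decompose the Cartesian product in part~(1) is a clean way to organize the facet check, and your part~(2) is exactly the marginal computation one expects from generalizing Proposition~\ref{prop:effects-compatible} to products of simplices.
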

The proof proceeds analogously to the proofs of Propositions \ref{prop:effects-measurement} and \ref{prop:effects-compatible}. The statements of Remark \ref{rmk:easy-consequences} carry over with the necessary adjustments. We note that $\mathcal P_{k}$ is the polar of the matrix jewel base at the level $1$, called  $\mathcal D_{\jewel,k}(1)$ in \cite{bluhm2020compatibility}; note that there is a difference in normalization between the definitions above and the ones in \cite{bluhm2020compatibility} by a factor of $2$. This set was denoted as $\mathcal D_{\cuboid,k}(1)$ in that paper. Furthermore, $\mathcal P_{\mathbf k}$ is the polar of the matrix jewel at level $1$, called $\mathcal D_{\jewel, \mathbf{k}}(1)$ in \cite{bluhm2020compatibility}. This set was denoted as $\mathcal D_{\cuboid,  \mathbf{k}}(1)$ in that paper. Contrary to the dichotomic case this proposition does not follow from \cite{Bluhm2022norms}, since we cannot define tensor norms corresponding to these matrix convex sets as they are asymmetric.

\section{Magic squares}\label{sec:magic}

In this section we discuss the case of \emph{magic squares} by relating them to the Birkhoff polytope. This is one of the main examples we discuss in detail, and was the starting point of our investigation.

We recall the following definitions from \cite{cuevas2020quantum} and from the quantum groups literature \cite{banica2007quantum}, restricting ourselves to the matrix algebra setting.

\begin{defi}\label{def:magic-square}
Let $d$, $N \in \mathbb N$. A block matrix $A \in \mathcal M_N(\mathcal M_d(\mathbb C))$ having positive semidefinite blocks $A_{ij} \geq 0$ $\forall i,j \in [N]$ is called 
\begin{itemize}
    \item a \emph{quantum magic square} if it is \emph{bistochastic}: 
    $$\forall i \in [N], \, \sum_{j=1}^N A_{ij} = I_d \quad \text{ and } \quad \forall j \in [N], \, \sum_{i=1}^N A_{ij} = I_d;$$
    \item a \emph{semiclassical magic square} if there exist a POVM $(Q_\pi)_{\pi \in \mathcal S_N}$ such that
    $$A = \sum_{\pi \in \mathcal S_N} P_\pi \otimes Q_\pi,$$
    where $P_\pi$ is the permutation matrix associated to $\pi$.
\end{itemize} 
\end{defi}

\begin{remark}\label{rmk:semiclassical-is-magic}
It is easy to verify that a semiclassical magic square is a magic square. Indeed, $P_\pi$ being a permutation matrix means that $(P_\pi)_{ij} = \delta_{\pi(i), j}$. Thus,
\begin{equation*}
    A_{ij} = \sum_{\pi \in \mathcal S_n} \delta_{\pi(i), j} Q_\pi,
\end{equation*}
such that $A_{ij} \geq 0$. Moreover, for any $i \in [N]$, using that the $Q_\pi$ form a POVM,
\begin{equation*}
    \sum_{j = 1}^N   A_{ij} = \sum_{\pi \in \mathcal S_n}  \sum_{j = 1}^N \delta_{\pi(i), j} Q_\pi = \sum_{\pi \in \mathcal S_n} Q_\pi = I_d.
\end{equation*}
Likewise, for any $j \in [N]$,
\begin{equation*}
    \sum_{i = 1}^N   A_{ij} = \sum_{\pi \in \mathcal S_n}  \sum_{i = 1}^N \delta_{\pi(i), j} Q_\pi = \sum_{\pi \in \mathcal S_n}  \sum_{i = 1}^N \delta_{i, \pi^{-1}(j)} Q_\pi = \sum_{\pi \in \mathcal S_n} Q_\pi = I_d.
\end{equation*}
\end{remark}

\subsection{The Birkhoff polytope}\label{sec:Birkhoff}

In this subsection we shall introduce and study the basic properties of matrix convex sets built upon the celebrated Birkhoff polytope. These objects will be related to (semiclassical) magic squares in Subsection \ref{sec:correspondence}.

The \emph{Birkhoff polytope} $\mathrm{Birk}_N$ is the convex set of bistochastic matrices from $\mathcal M_N(\mathbb R)$ \cite[Example 0.12]{ziegler2012lectures}. Famously, its vertices are the permutation matrices $P_\pi$, for $\pi \in \mathcal S_N$. Since this polytope lives in an $(N-1)^2$ dimensional affine hyperplane of $\mathbb R^{N^2}$, we shall consider the convex body version of the Birkhoff polytope, defined as follows. For an arbitrary matrix $X \in \mathcal M_N(\mathbb C)$, we shall denote by $X^{(N-1)}$ the principal submatrix obtained by deleting the last row and the last column from $X$. We consider the convex body obtained by truncating bistochastic matrices, after centering the Birkhoff polytope at $J/N$, where $J$ is the matrix in which all entries are $1$. 

\begin{defi}
For a given $N \geq 2$, the \emph{Birkhoff body} $\mathcal B_N$ is  defined as the set of $(N-1) \times (N-1)$ truncations of $N \times N$ bistochastic matrices, shifted by $J_{N-1}/N$:
$$\mathcal B_N = \{A^{(N-1)} - J_{N-1}/N \, : \, A \in \mathcal M_N(\mathbb R) \, \text{ bistochastic}\} \subset \mathcal M_{N-1}(\mathbb R) \cong \mathbb R^{(N-1)^2}.$$
\end{defi}
For example, it is easy to see that $\mathcal B_2 = [-1/2, 1/2]$. Let us now introduce an important notation: to a matrix $X \in \mathcal M_{N-1}(\mathcal M_d(\mathbb C))$, we associate the matrix $\tilde X \in \mathcal M_{N}(\mathcal M_d(\mathbb C))$ given by
$$\tilde X_{ij} = \frac{I_d}{N}  + \begin{cases}
X_{ij}, &\qquad \text{ if } i,j \in [N-1]\\
-\sum_{k=1}^{N-1}X_{ik}, &\qquad \text{ if } i \in [N-1], \, j = N\\
-\sum_{k=1}^{N-1}X_{kj} , &\qquad \text{ if } j \in [N-1], \, i = N\\
\sum_{k,l=1}^{N-1}X_{kl}, &\qquad \text{ if } i,j=N.
\end{cases}$$
The matrix $\tilde X$ agrees with $X+J_{N-1}/N\otimes I_d$ on the top $(N-1) \times (N-1)$ corner, and has row and column sums equal to $I_d$.

The convex and combinatorial properties of the Birkhoff polytope have been studied in a series of papers by Brualdi and Gibson, see \cite{brualdi1977convex}.

\begin{prop} \label{prop:birkhoff-body}
Let $N\in \mathbb N$. The Birkhoff body $\mathcal B_N$ has $N!$ extreme points $P_\pi^{(N-1)}-J_{N-1}/N$ and is described by the following inequalities:
\begin{align}
\label{eq:BN-face-elt}    \forall i,j \in [N-1], &\quad A_{ij} \geq -1/N\\
\label{eq:BN-face-row}    \forall i \in [N-1], &\quad \sum_{j=1}^{N-1} A_{ij} \leq 1/N\\
\label{eq:BN-face-col}    \forall j \in [N-1], &\quad \sum_{i=1}^{N-1} A_{ij} \leq 1/N\\
\label{eq:BN-face-sum}    &\quad \sum_{i,j=1}^{N-1} A_{ij} \geq -1/N.
\end{align}
For $N \geq 3$, its $N^2$ facets are given by replacing one of the $N^2$ inequalities by an equality. For $N=2$, there are only $2$ facets given by $A_{11} = -1/2$ and $A_{11} = 1/2$, respectively. Finally, it has $0$ in its interior.
\end{prop}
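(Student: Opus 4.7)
The plan is to reduce everything to classical properties of the Birkhoff polytope $\mathrm{Birk}_N$ via an explicit affine bijection. Define $\Phi : \mathcal M_N(\mathbb R) \to \mathcal M_{N-1}(\mathbb R)$ by $\Phi(A) = A^{(N-1)} - J_{N-1}/N$. Restricted to $\mathrm{Birk}_N$, the map $\Phi$ is injective because the row- and column-sum constraints uniquely reconstruct the last row and column of a bistochastic matrix from its top-left $(N-1)\times(N-1)$ block, and it surjects onto $\mathcal B_N$ by definition. Since the affine hull of $\mathrm{Birk}_N$ has dimension $(N-1)^2$, $\Phi$ is an affine isomorphism between $\mathrm{Birk}_N$ and $\mathcal B_N$; in particular $\mathcal B_N$ is full-dimensional in $\mathbb R^{(N-1)^2}$ and faces of $\mathrm{Birk}_N$ are in dimension-preserving bijection with faces of $\mathcal B_N$.

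The extreme-point description then follows immediately from the Birkhoff--von Neumann theorem, which identifies the extreme points of $\mathrm{Birk}_N$ with the $N!$ permutation matrices $P_\pi$: applying $\Phi$ yields $P_\pi^{(N-1)} - J_{N-1}/N$ as the extreme points of $\mathcal B_N$. For the inequalities, I would translate the $N^2$ non-negativity constraints $A_{ij}\geq 0$ that cut out $\mathrm{Birk}_N$ inside its affine hull. Writing $\hat A = \Phi(A)$ and using bistochasticity, one gets by a direct computation: $A_{ij} = \hat A_{ij} + 1/N$ for $i,j \leq N-1$ (giving \eqref{eq:BN-face-elt}); $A_{iN} = 1/N - \sum_{k=1}^{N-1} \hat A_{ik}$ for $i \leq N-1$ (giving \eqref{eq:BN-face-row}); symmetrically $A_{Nj} = 1/N - \sum_{k=1}^{N-1} \hat A_{kj}$ (giving \eqref{eq:BN-face-col}); and iterating both relations yields $A_{NN} = 1/N + \sum_{k,l=1}^{N-1} \hat A_{kl}$ (giving \eqref{eq:BN-face-sum}). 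Thus the four classes of inequalities enumerate exactly the $N^2$ entrywise non-negativity constraints, and inclusion of $\mathrm{Birk}_N$ in its affine hull transfers to the claimed inclusion of $\mathcal B_N$ in $\mathbb R^{(N-1)^2}$.

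For the facet count I would rely on the classical result of Brualdi--Gibson \cite{brualdi1977convex} that for $N \geq 3$ each of the $N^2$ non-negativity inequalities on $\mathrm{Birk}_N$ defines a distinct facet; under $\Phi$ these map to the $N^2$ distinct facets of $\mathcal B_N$ claimed. If one prefers a self-contained argument, for each fixed $(i,j)$ one verifies that the permutation matrices $P_\pi$ with $\pi(i) \neq j$ affinely span a hyperplane of the affine hull of $\mathrm{Birk}_N$, which is a short combinatorial exercise using transpositions; this argument degenerates exactly at $N=2$, where the four inequalities collapse pairwise to $\hat A_{11} \geq -1/2$ and $\hat A_{11} \leq 1/2$, matching $\mathcal B_2 = [-1/2,1/2]$.

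Finally, for the interior claim, the uniform matrix $J_N/N = \frac{1}{N!}\sum_{\pi \in \mathcal S_N} P_\pi$ is a strictly positive convex combination of every extreme point of $\mathrm{Birk}_N$, hence lies in its relative interior; applying $\Phi$, the origin $0$ lies in the interior of $\mathcal B_N$. The only step that is not entirely mechanical is the facet count for $N \geq 3$, for which citing Brualdi--Gibson is the cleanest route; the remainder of the proof is bookkeeping around the affine isomorphism $\Phi$.
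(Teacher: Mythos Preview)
Your proposal is correct and follows essentially the same route as the paper: both arguments set up the affine bijection between $\mathrm{Birk}_N$ and $\mathcal B_N$ (the paper via the inverse map $X \mapsto \tilde X$), invoke Birkhoff--von Neumann for the extreme points, translate the $N^2$ nonnegativity constraints into the four displayed families of inequalities, and cite Brualdi--Gibson for the facet count. Your write-up is more explicit than the paper's sketch, and your interior argument (via $J_N/N$ as a strictly positive barycenter of the permutation matrices) is a pleasant alternative to the paper's direct verification that $0$ satisfies all inequalities strictly, but these are cosmetic rather than structural differences.
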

\begin{proof}
We build on the findings on the extreme points and facets of the Birkhoff polytope in \cite{brualdi1977convex}. First, let us point out that, given the $(N-1) \times (N-1)$ shifted truncation of a $N \times N$ bistochastic matrix, there is a unique way to recover the bistochastic matrix, given by the $X \mapsto \tilde X$ mapping defined above. This fact settles the claim about extreme points. Regarding the facets, note that the second and the third type of inequalities above ensure that one can fill the last element in the first $(N-1)$ rows and columns with a non-negative number. The last inequality ensures the non-negativity of the bottom-right entry of $\tilde A$. The fact that $0 \in \operatorname{int} \mathcal B_N$ holds, since $0$ fulfills all inequalities strictly.
\end{proof}

\subsection{Connecting magic squares to matrix convex sets}\label{sec:correspondence}

This subsection contains the main insight of this section: there is an intimate relation between the Birkhoff body $\mathcal B_N$ introduced in Section \ref{sec:Birkhoff} and the (semiclassical property for) magic squares introduced in Section \ref{sec:magic}. 

\begin{thm}\label{thm:min-max-magic-square}
Let $d$. $N \in \mathbb N$ Consider $A \in \mathcal M_d^{\mathrm{sa}}(\mathbb C)^{(N-1)^2}$ and the corresponding matrix $\tilde A \in \mathcal M_N(\mathcal M_d(\mathbb C))$. Then:
\begin{enumerate}
    \item the matrix $\tilde A$ is a magic square if and only if $A \in (\mathcal B_N)_{\max}$;
    \item the matrix $\tilde A$ is a semiclassical magic square if and only if $A \in (\mathcal B_N)_{\min}$.
\end{enumerate}
\end{thm}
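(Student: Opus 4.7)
My strategy is to translate both statements directly from the concrete description of the Birkhoff body given by Proposition \ref{prop:birkhoff-body} and the definitions of $\mathcal{P}_{\max}$, $\mathcal{P}_{\min}$ from Propositions \ref{prop:P-hj} and \ref{prop:operator_compatible_max_min}. The key observation, which I would establish first as a lemma, is that the bistochasticity of $\tilde{A}$ holds automatically by construction: a direct computation using the case definition of $\tilde{X}$ shows that each row and each column of $\tilde{A}$ sums to $I_d$, regardless of which Hermitian tuple $A$ we start from. Hence being a (quantum) magic square reduces to the positivity $\tilde{A}_{ij} \geq 0$ for all $i,j \in [N]$.

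For part (1), I would apply Proposition \ref{prop:P-hj} to the facet description of $\mathcal{B}_N$ in Proposition \ref{prop:birkhoff-body}. The four families of facet inequalities \eqref{eq:BN-face-elt}--\eqref{eq:BN-face-sum} translate, via $\sum_x h_j(x) A_x \leq I_d$, into exactly four families of operator inequalities on the blocks of $A$. Matching each of them against the four cases of the definition of $\tilde{X}$, one sees that $-N A_{ij} \leq I_d$ is the same as $\tilde{A}_{ij} \geq 0$ for the top-left $(N-1)\times(N-1)$ corner, the row-sum facets \eqref{eq:BN-face-row} yield $\tilde{A}_{iN} \geq 0$, the column-sum facets \eqref{eq:BN-face-col} yield $\tilde{A}_{Nj} \geq 0$, and the bottom facet \eqref{eq:BN-face-sum} yields $\tilde{A}_{NN} \geq 0$. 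Combined with the automatic bistochasticity, this establishes the equivalence.

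For part (2), I would use Proposition \ref{prop:operator_compatible_max_min} and the extreme-point description of $\mathcal{B}_N$ from Proposition \ref{prop:birkhoff-body}: $A \in (\mathcal{B}_N)_{\min}$ iff there is a POVM $(Q_\pi)_{\pi \in \mathcal{S}_N}$ such that
\begin{equation*}
A = \sum_{\pi \in \mathcal{S}_N} \left( P_\pi^{(N-1)} - J_{N-1}/N \right) \otimes Q_\pi .
\end{equation*}
Applying the tilde operation to both sides and using $\sum_\pi Q_\pi = I_d$, the $-J_{N-1}/N$ shifts cancel exactly the $I_d/N$ terms in the top-left corner, giving $\tilde{A}_{ij} = \sum_\pi (P_\pi)_{ij} \otimes Q_\pi$ for $i,j \in [N-1]$. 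I would then check the remaining three cases (the last row, last column, and bottom-right entry) by the same bookkeeping, using that $\sum_{k=1}^{N-1}(P_\pi)_{ik} = 1 - (P_\pi)_{iN}$ for each permutation $\pi$. This shows $\tilde{A} = \sum_\pi P_\pi \otimes Q_\pi$, i.e.\ $\tilde{A}$ is semiclassical. For the converse direction, starting from a semiclassical decomposition of $\tilde{A}$, the same computation run backwards on the top-left $(N-1)\times(N-1)$ corner recovers the required $(\mathcal{B}_N)_{\min}$ decomposition of $A$ with the same POVM $(Q_\pi)$.

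The main obstacle is bookkeeping rather than conceptual: one has to keep track of four separate cases in the definition of $\tilde{X}$ and verify that the shifts by $I_d/N$ cancel correctly in each. No single step is deep, but an efficient write-up hinges on isolating the ``automatic bistochasticity'' lemma at the outset so that both directions of both parts reduce to block-by-block positivity or to a single clean calculation in the top-left corner.
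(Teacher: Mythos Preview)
Your proposal is correct and follows essentially the same approach as the paper's own proof: both parts are reduced to the facet and extreme-point descriptions of $\mathcal{B}_N$ from Proposition \ref{prop:birkhoff-body}, with part (2) verified block-by-block using the identity $\sum_{k=1}^{N-1}(P_\pi)_{ik} = 1 - (P_\pi)_{iN}$. Your explicit isolation of the ``automatic bistochasticity'' observation is a nice organizational touch that the paper leaves implicit in its one-line treatment of part (1).
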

\begin{proof}
Requiring that the tuple $A$ satisfies the inequalities from Proposition \ref{prop:birkhoff-body} is easily seen to be equivalent to the fact that $\tilde A$ is a magic square, establishing the first point. Similarly, the second point follows from the form of the extreme points of the Birkhoff body, which are related to permutation matrices. Indeed, for $i,j \in [N-1]$, we have 
$$A_{ij} = \sum_{\pi \in \mathcal S_N} (P_\pi(i,j) - 1/N)Q_\pi$$
for a POVM $Q = (Q_\pi)_{\pi \in \mathcal S_N}$. In turn, we have 
$$\tilde A_{ij} = I/N + \sum_{\pi \in \mathcal S_N} (P_\pi(i,j) - 1/N)Q_\pi = \sum_{\pi \in \mathcal S_N} P_\pi(i,j) Q_\pi.$$
For $j=N$ and $i \in [N-1]$, we have 
\begin{align*}
    \tilde A_{iN} &= I - \sum_{j =1}^{N-1} \tilde A_{ij} = I - \sum_{j =1}^{N-1}\sum_{\pi \in \mathcal S_N} P_\pi(i,j) Q_\pi\\
    &= \sum_{\pi \in \mathcal S_N} \left( 1 - \sum_{j =1}^{N-1}P_\pi(i,j) \right) Q_\pi = \sum_{\pi \in \mathcal S_N} P_\pi(i,N) Q_\pi.
\end{align*}
Similar computations yield the cases $i=N, j \in [N-1]$ and $i=j=N$, finishing the proof. 
\end{proof}

\subsection{Semiclassicality vs.~compatibility}\label{sec:counterexample}

The magic square condition from Definition \ref{def:magic-square} can be equivalently stated that the columns, respectively the rows of the square form quantum measurements: 
$$R^{(i)}:=\{A_{ij}\}_{j \in [N]} \qquad \text{ and } \qquad C^{(j)} := \{A_{ij}\}_{i \in [N]}. $$

We have the following observation, see also \cite{guerini2018joint}. 
\begin{prop}
    Let $N \in \mathbb N$. For $i$, $j \in [N]$, the $2N$ measurements $R^{(i)}$ and $C^{(j)}$ (defined as above) of a \emph{semiclassical} magic square are compatible. 
\end{prop}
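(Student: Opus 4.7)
The plan is to use the semiclassical decomposition directly to exhibit a common parent POVM for all $2N$ measurements, and then apply \Cref{lem:post-processing}. Starting from the semiclassical representation
$$A_{ij} = \sum_{\pi \in \mathcal S_N} P_\pi(i,j)\, Q_\pi = \sum_{\pi \in \mathcal S_N} \delta_{\pi(i),j}\, Q_\pi,$$
with $(Q_\pi)_{\pi \in \mathcal S_N}$ a POVM, I claim that $(Q_\pi)_\pi$ itself can serve as the joint measurement from which all $R^{(i)}$ and $C^{(j)}$ arise by classical post-processing.

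Concretely, for each row index $i \in [N]$ define the conditional distribution $p^{R,i}(j \mid \pi) := \delta_{\pi(i),j}$; this is a probability distribution on $[N]$ for each fixed $\pi$ (all the mass sits on $j = \pi(i)$). Likewise, for each column index $j \in [N]$ define $p^{C,j}(i \mid \pi) := \delta_{\pi(i),j} = \delta_{i,\pi^{-1}(j)}$, again a valid probability distribution on $[N]$. By the semiclassical formula,
$$R^{(i)}_j = A_{ij} = \sum_{\pi \in \mathcal S_N} p^{R,i}(j\mid \pi)\, Q_\pi, \qquad C^{(j)}_i = A_{ij} = \sum_{\pi \in \mathcal S_N} p^{C,j}(i\mid \pi)\, Q_\pi,$$
so every one of the $2N$ measurements is a classical post-processing of the single POVM $(Q_\pi)_{\pi \in \mathcal S_N}$. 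Applying \Cref{lem:post-processing} to the family consisting of all $R^{(i)}$ and all $C^{(j)}$ then yields joint compatibility.

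There is essentially no obstacle here: the semiclassical structure is precisely tailored so that reading off a row or a column of a permutation matrix is a deterministic function of $\pi$, which gives the post-processing for free. If one prefers to exhibit an explicit joint POVM rather than invoke \Cref{lem:post-processing}, one can define $M_{(j_1,\ldots,j_N,\,i_1,\ldots,i_N)} := Q_\pi$ whenever there exists a (necessarily unique) $\pi \in \mathcal S_N$ with $\pi(k) = j_k$ and $\pi^{-1}(\ell) = i_\ell$ for all $k,\ell$, and $0$ otherwise; the marginals of $M$ over all indices except $j_i$ (respectively $i_j$) recover $R^{(i)}$ (respectively $C^{(j)}$), matching \Cref{def:jointPOVM}.
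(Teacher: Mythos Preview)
Your proof is correct and follows essentially the same approach as the paper: both exhibit the POVM $(Q_\pi)_{\pi \in \mathcal S_N}$ as a common parent measurement and verify that each row and column POVM is a deterministic post-processing of it via $p(j\mid\pi)=\delta_{\pi(i),j}$ and $p(i\mid\pi)=\delta_{i,\pi^{-1}(j)}$, then invoke \Cref{lem:post-processing}. Your write-up is in fact slightly more explicit, spelling out both the row and column cases and offering the optional direct joint POVM in the sense of \Cref{def:jointPOVM}.
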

\begin{proof}
    The column and row POVMs are post-processings in the sense of Lemma \ref{lem:post-processing} of the measurement $Q_\pi$ from Definition \ref{def:magic-square}. For example, 
    $$C^{(j)}_i = A_{ij} = \sum_{\pi \in \mathcal S_N} \delta_{\pi(i),j} Q_\pi.$$
    Above, we have used the post-processing
    $$p_\pi(i | C^{(j)}) = \delta_{\pi(i),j} = \delta_{i,\pi^{-1}(j)}.$$
    Similar expressions can be written for the row measurements, finishing the proof. 
\end{proof}

We now show, via an example, that the converse to the proposition above does not hold, i.e., there exist magic squares with compatible measurements $C^{(i)}$ and $R^{(j)}$ without the magic square being semiclassical. Consider the magic square in Table \ref{tab:counterexample}. Here, $e_1$, $e_2$ are the standard basis vectors in $\mathbb C^2$ and $f_1 = 1/\sqrt{2}(e_1 + e_2)$, $f_2 = 1/\sqrt{2}(e_1 - e_2)$.

\begin{table}[htb]
    \centering
    \begin{tabular}{|c|c|c|c|}
         \hline
         $\frac{1}{2}e_1 e_1^\ast$ & $\frac{1}{2}e_2 e_2^\ast$ & $0$ & $\frac{1}{2}I_2$  \\ 
         \hline
         $\frac{1}{2}e_2 e_2^\ast$ & $\frac{1}{2}e_1 e_1^\ast$ & $\frac{1}{2}I_2$ & $0$ \\
         \hline
         $0$ & $\frac{1}{2}I_2$ &    $\frac{1}{2}f_1 f_1^\ast$ & $\frac{1}{2}f_2 f_2^\ast$ \\
         \hline
         $\frac{1}{2}I_2$ & $0$ &    $\frac{1}{2}f_2 f_2^\ast$ & $\frac{1}{2}f_1 f_1^\ast$\\
         \hline
    \end{tabular}
    \caption{Example for $8$ qubit POVMs arranged in a magic square which are compatible, but which do not form a semiclassical magic square.}
    \label{tab:counterexample}
\end{table}
It is easy to see that the measurements in the rows and columns of Table \ref{tab:counterexample} in fact reduce to only two different POVMs (since all the other ones arise only as relabeling of outcomes). These two POVMs are
\begin{equation} \label{eq:counterexample_POVMs}
    \left( \frac{1}{2}e_1 e_1^\ast, \frac{1}{2}e_2 e_2^\ast, \frac{1}{2}I_2, 0\right) \quad \mathrm{and} \quad \left( \frac{1}{2}f_1 f_1^\ast, \frac{1}{2}f_2 f_2^\ast, \frac{1}{2}I_2, 0\right) 
\end{equation}
Moreover, it is straightforward to verify that 
\begin{equation*}
    \left( \frac{1}{2}e_1 e_1^\ast, \frac{1}{2}e_2 e_2^\ast, \frac{1}{2}f_1 f_1^\ast, \frac{1}{2}f_2 f_2^\ast\right)
\end{equation*}
is a joint POVM from which the POVMs in Eq.~\eqref{eq:counterexample_POVMs} arise via classical post-processing. Hence, all the POVMs shown in the rows and columns of Table \ref{tab:counterexample} are compatible.

It is a bit tedious to show that the magic square in Table \ref{tab:counterexample} is not semiclassical: if it was semiclassical, it would have an expression as
\begin{equation*}
    \sum_{\pi \in \mathcal S_4} P_{\pi} \otimes Q_\pi, \qquad Q_\pi \geq 0~\forall \pi  \in \mathcal S_4,\qquad ~\sum_\pi Q_\pi = I_2
\end{equation*}
where $P_\pi = (\delta_{\pi(i),j})_{i, j \in [4]}$ are the permutation matrices representing $\pi$. Let us denote by $A_{i,j}$ the entries of the magic square. Since $A_{1,3} = A_{2,4} = A_{3,1} = A_{4,2}=0$, only the $Q_\pi$ with $\pi(1)\neq3, \pi(2)\neq 4, \pi(3) \neq 1, \pi(4) \neq 2$ can be nonzero. Thus out of the 24 $Q_\pi$, $15$ have to be zero. Since $A_{1,1}, A_{1,2}, A_{2,1}, A_{2,2}, A_{3,3}, A_{3,4}, A_{4,3},, A_{4,4}$ are rank one, it follows that $Q_\pi$ summing to these entries have to be proportional to the corresponding rank one projectors. For example $Q_{(1,2,3,4)}$ has to be proportional to both $e_1e_1^\ast$ and $f_1f_1^\ast$ and therefore must be $0$. This reasoning sets another $8$ of the $Q_\pi$ to $0$. The only non-zero element is thus $Q_{(4,3,2,1)}$, which needs to be $I/2$. However, this choice of $Q_\pi$ does not generate the required magic square and contradicts the requirement of $Q$ being a POVM.

Clearly, the measurements in Table \ref{tab:counterexample} are not linearly independent, so we are not in the setting of \cite[Corollary 5]{guerini2018joint}. Similarly, the post-processing map used here is the ``marginal'' one, which is not symmetric in the sense of \cite[Proposition 3]{guerini2018joint}; these considerations show that the counterexample above do not contradict the results in \cite{guerini2018joint}.

Using our Proposition \ref{prop:factoring}, we can recover the characterization in \cite[Theorem 4]{guerini2018joint} of semiclassical magic squares as compatibility plus symmetric post-processing:
\begin{prop}
    $A \in (\mathcal B_N)_{\min}$ if and only if the tuples $C^{(j)}:=\{\tilde A_{ij}\}_{i \in [N]}$, $R^{(i)}:=\{\tilde A_{ij}\}_{j \in [N]}$ where $i$, $j \in [N]$ are compatible POVMs with the post-processing in Lemma \ref{lem:post-processing} satisfying $p_\lambda(i|R^{(j)}) = p_\lambda(j|C^{(i)})$ for all $i$, $j \in [N]$ and all $\lambda$.
\end{prop}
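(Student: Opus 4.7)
The plan is to prove both directions by bridging Theorem \ref{thm:min-max-magic-square} (which identifies $(\mathcal B_N)_{\min}$ with semiclassical magic squares) with the post-processing characterization of compatibility in Lemma \ref{lem:post-processing}, using Birkhoff--von Neumann as the crucial link.

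For the forward direction, suppose $A \in (\mathcal B_N)_{\min}$. Then Theorem \ref{thm:min-max-magic-square} provides a POVM $(Q_\pi)_{\pi \in \mathcal S_N}$ with $\tilde A_{ij} = \sum_\pi P_\pi(i,j)\, Q_\pi = \sum_\pi \delta_{\pi(i),j}\, Q_\pi$. Reading this as a post-processing of the joint POVM $(Q_\pi)$ (with label set $\Lambda = \mathcal S_N$), the row POVM $R^{(i)}$ arises with probabilities $p_\pi(j\mid R^{(i)}) = \delta_{\pi(i),j}$, while the column POVM $C^{(j)}$ arises with probabilities $p_\pi(i\mid C^{(j)}) = \delta_{\pi(i),j}$. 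These two conditional distributions coincide, so the required symmetry $p_\lambda(i\mid R^{(j)}) = p_\lambda(j \mid C^{(i)})$ holds.

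For the converse, assume the row and column POVMs have a common joint POVM $(M_\lambda)_{\lambda}$ satisfying the symmetry condition, and set $q_\lambda(i,j) := p_\lambda(j \mid R^{(i)}) = p_\lambda(i \mid C^{(j)})$. By Lemma \ref{lem:post-processing} we have $\tilde A_{ij} = \sum_\lambda q_\lambda(i,j)\, M_\lambda$. The row-side interpretation gives $\sum_j q_\lambda(i,j) = 1$ for each $i$, while the column-side interpretation gives $\sum_i q_\lambda(i,j) = 1$ for each $j$. Thus, for every $\lambda$, the nonnegative matrix $q_\lambda$ is \emph{bistochastic}. By the Birkhoff--von Neumann theorem, one can write
\begin{equation*}
    q_\lambda(i,j) = \sum_{\pi \in \mathcal S_N} \alpha_\lambda^{\pi}\, P_\pi(i,j),
\end{equation*}
with $\alpha_\lambda^{\pi}\geq 0$ and $\sum_\pi \alpha_\lambda^{\pi}=1$. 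Defining $Q_\pi := \sum_\lambda \alpha_\lambda^\pi M_\lambda$, each $Q_\pi \geq 0$, and $\sum_\pi Q_\pi = \sum_\lambda M_\lambda = I_d$, so $(Q_\pi)_{\pi \in \mathcal S_N}$ is a POVM. Substituting gives $\tilde A_{ij} = \sum_\pi P_\pi(i,j)\, Q_\pi$, which is exactly the semiclassical decomposition; invoking Theorem \ref{thm:min-max-magic-square} once more yields $A \in (\mathcal B_N)_{\min}$.

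The only subtle step is recognizing that the symmetric post-processing condition is exactly what forces the coefficient matrices $q_\lambda$ to have equal row and column sums (and hence to be bistochastic once normalized), so that Birkhoff--von Neumann can be applied to produce the permutation-indexed POVM. Without the symmetry, the $q_\lambda$ would only be row-stochastic (from the row-POVM viewpoint) or column-stochastic (from the column viewpoint), and one could not assemble a genuine $(Q_\pi)$ indexed by permutations; this is precisely what the counterexample of Table \ref{tab:counterexample} illustrates. The rest of the argument is bookkeeping through Theorem \ref{thm:min-max-magic-square} and Lemma \ref{lem:post-processing}; alternatively, one could phrase the proof via Proposition \ref{prop:factoring}, viewing the symmetric post-processing as a factorization of the magic-square channel through $\Delta_{N!}$ using the Birkhoff--von Neumann decomposition of the stochastic map $\lambda \mapsto q_\lambda$.
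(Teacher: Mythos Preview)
Your proof is correct and follows essentially the same logic as the paper's: both recognize that the symmetric post-processing condition is precisely what makes each coefficient matrix $q_\lambda(i,j)$ bistochastic, which is the key step. The paper phrases this via Proposition~\ref{prop:factoring} (factorization of the magic-square channel through a simplex, with the image points $z_\lambda \in \mathrm{Birk}_N$ playing the role of your $q_\lambda$), whereas you work more concretely through Theorem~\ref{thm:min-max-magic-square} and invoke Birkhoff--von~Neumann explicitly to pass to a permutation-indexed POVM; you even note this alternative at the end, so the two presentations are really the same argument in slightly different clothing.
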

\begin{proof}
    Following the strategy for the proof of Theorem \ref{thm:min-max-magic-square}, using Proposition \ref{prop:factoring}, one can verify that being a semiclassical magic square is equivalent to the map $\tilde{\mathcal A}: \mathcal M_d^{1,+}(\mathbb C) \to \mathrm{Birk}_N$,
    \begin{equation*}
        \tilde{\mathcal A}(X) = \mathrm{Tr}[\tilde A_{ij}X]_{i,j \in [N]},
    \end{equation*}
    factoring through a $k$-simplex. The factorization is equivalent to the existence of a POVM $Q = \{Q_i\}_{i \in [k]}$ and a map $\nu: \Delta_k \to \mathrm{Birk}_N$ such that $\nu(\delta_l) = z_l$, where the $\delta_l$ are the vertices of the simplex for $l \in [k]$. Then, we can identify $p_\lambda(i|C^{(j)}) = \nu(\delta_\lambda)(i,j) = p_\lambda(j|R^{(i)})$ if we see $z_\lambda$ as a bistochastic matrix with entries $z_\lambda(i,j)$. It is easy to check that the $p_\lambda(j|R^{(i)})$, $p_\lambda(i|C^{(j)})$ are conditional probabilities because the $z_\lambda$ are bistochastic matrices and that 
    \begin{equation*}
        A_{ij} = \sum_{\lambda \in [k]} z_\lambda(i,j) Q_\lambda = \sum_{\lambda \in [k]} p_\lambda(i|C^{(j)}) Q_\lambda= \sum_{\lambda \in [k]} p_\lambda(j|R^{(i)}) Q_\lambda.
    \end{equation*}
    Since $C^{(j)}_i = A_{ij} = R^{(i)}_j$ and our identification between conditional probabilities and bistochastic matrices can be reversed, the assertion follows.
\end{proof}

\section{POVMs with common elements} \label{sec:POVM-common-elements}

We discuss in this section a very general way of defining polytopes for which the notions of $\mathcal P$-operators and $\mathcal P$-compatibility have a very clear physical interpretation. Interestingly, the mathematical framework is that of (measurement) hypergraphs, which is precisely the one used in the combinatorial approach to contextuality \cite{acin2015combinatorial}. We would like to point out that, beyond the mathematical correspondence, there is no clear physical relation between contextuality and the notion of compatibility discussed here.

We start by recalling that a hypergraph is a pair $G=(V,E)$, where $V$ is a non-empty set and $E$ is a set of non-empty subsets $\emptyset \neq e \subseteq V$, called \emph{hyperedges}. In this section, we shall assume that our hypergraphs have \emph{no isolated vertices}, i.e.
$$\forall v \in V \quad \exists e \in E \, \text{ s.t. } \, v \in e \iff \bigcup_{e \in E} e= V.$$

\begin{defi}\label{def:probability-hypergraph}
A hypergraph $G$ (with no isolated vertices) is called a \emph{probability hypergraph} if there exists a function $\pi : V \to (0,1]$ such that 
\begin{equation}\label{eq:probability-hypergraph-normalization}
    \forall e \in E, \quad \sum_{v \in e} \pi(v) = 1.
\end{equation}
In other words, given a probability hypergraph, we can associate to each vertex a positive number in such a way that all hyperedges correspond to probability vectors. We denote by $\Pi(G)$ the set of all functions $\pi : V \to [0,1]$ such that Eq. \eqref{eq:probability-hypergraph-normalization} holds. 
\end{defi}

Note that we need a function $\pi : V \to (0,1]$ in the definition of probability hypergraph, whereas the set $\Pi(G)$ consists of functions $\pi : V \to [0,1]$.

We display in Figure \ref{fig:hypergraphs} three hypergraphs. The first two are probability hypergraphs, since assigning weight $\pi(v) = 1/2$ to each vertex of the first graph and $\pi(v) = 1/3$ to each vertex of the second satisfies the condition from Definition \ref{def:probability-hypergraph}. However, the third one is \emph{not} a probability hypergraph: hyperedge $\{1\}$ implies $\pi(1) = 1$, while the other two hyperedges $\{1,2\}$ and $\{1,3\}$ force $\pi(2) = \pi(3) = 0$, contradicting $\pi(v) >0$.

\begin{figure}[htb]
    \centering
    \includegraphics{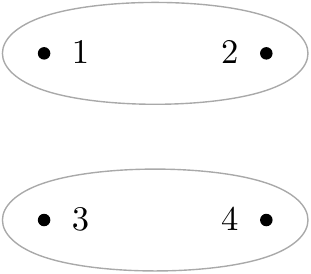} \qquad\qquad
    \includegraphics{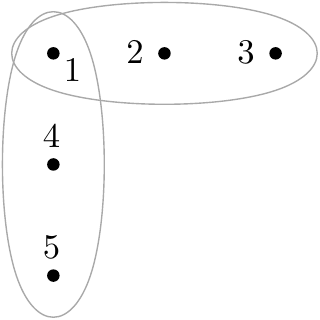} \qquad\qquad
    \includegraphics{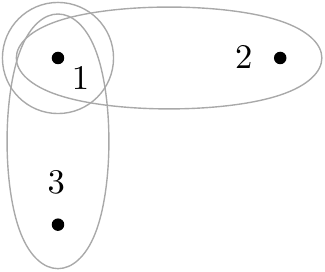} 
    \caption{Three hypergraphs. The first two are probability hypergraphs, while the third one is not.}
    \label{fig:hypergraphs}
\end{figure}

The incidence relation between vertices and hyperedges generate an equivalence relation on $E$, as the closure of the binary relation
$e \cap e' \neq \emptyset \implies e \sim e'.$

The following result establishes a large family of hypergraphs as probability hypergraphs. 

\begin{prop}
    All hypergraphs $G = (V,E)$ having the property that edges in the same incidence equivalence class have the same cardinality
    $$e \sim e' \implies |e| = |e'|$$
    are probability hypergraphs. 
\end{prop}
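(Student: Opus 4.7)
The plan is to construct an explicit weight function $\pi : V \to (0,1]$ directly from the hyperedge cardinalities. The constant-cardinality hypothesis will guarantee that this assignment is well-defined and automatically normalizes on every hyperedge.

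First I would observe that because $G$ has no isolated vertices, every vertex $v \in V$ belongs to at least one hyperedge. Moreover, any two hyperedges $e, e'$ both containing $v$ trivially share the vertex $v$, so $e \sim e'$, and hence they lie in the same incidence equivalence class. Consequently, the equivalence class $[e]$ of a hyperedge containing $v$ depends only on $v$, not on the choice of $e \ni v$.

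Next I would define
\[
\pi(v) := \frac{1}{|e|} \qquad \text{for any } e \in E \text{ with } v \in e.
\]
The constant-cardinality hypothesis is precisely what makes this well-defined: if $e, e' \ni v$, then $e \sim e'$, so by assumption $|e| = |e'|$, and the two candidate values $1/|e|$ and $1/|e'|$ coincide. Since hyperedges are required to be non-empty, we have $|e| \geq 1$, hence $\pi(v) \in (0,1]$ as required by Definition \ref{def:probability-hypergraph}.

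Finally I would verify the normalization on each hyperedge. Fix $e \in E$ and set $k := |e|$. Every vertex $v \in e$ is contained in $e$ itself, so $\pi(v) = 1/k$ by construction, and therefore
\[
\sum_{v \in e} \pi(v) = k \cdot \frac{1}{k} = 1.
\]
This establishes Eq.~\eqref{eq:probability-hypergraph-normalization} and completes the argument. There is no real obstacle here: the whole content of the proposition is that the uniform-within-each-class weighting is forced to exist, and the hypothesis has been tailored so that exactly this uniform weighting is consistent.
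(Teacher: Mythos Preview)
Your proof is correct and follows exactly the same approach as the paper: define $\pi(v) = 1/|e|$ for any hyperedge $e \ni v$, observe that the hypothesis makes this well-defined, and check the normalization. If anything, you have spelled out more of the routine details (well-definedness, the range $(0,1]$, and the sum over a hyperedge) than the paper does.
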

\begin{proof}
    The result follows by setting 
    $$\forall v \in V, \qquad \pi(v) := \frac{1}{|e|} \quad \text{ for } v \in e.$$
    The assignment above is independent of the choice of the hyperedge $e \ni v$ by the hypothesis. Clearly, $\pi \in \Pi(G)$ and $\pi(v)>0$, proving the claim.
\end{proof}

In particular, hypergraphs with disjoint hyperedges ($e \neq e' \implies e \cap e' = \emptyset$) are probability hypergraphs. The first two examples in Figure \ref{fig:hypergraphs} satisfy the assumptions of the proposition above. 

\bigskip 

In what follows, we shall associate to a probability hypergraph $G$ a polytope in an essentially unique manner. To start, consider the set 
$$\Pi^0(G) := \{ \pi : V \to \mathbb R \, : \, \forall e \in E, \, \sum_{v \in e} \pi(v) = 0\}.$$
Clearly, $\Pi^0(G)$ is a vector space, and we set $g := \dim \Pi^0(G)$.

Let $G$ be a probability hypergraph, $\pi_* \in \Pi(G)$, and consider a basis $\pi_1, \ldots, \pi_g$ of the vector space $\Pi^0(G)$. Define the set
    $$\mathcal P := \{a \in \mathbb R^g \, : \, \pi_* + \sum_{x=1}^g a_x \pi_x \in \Pi(G)\},$$
which depends on the choice of the functions $\pi_*, \pi_1, \ldots, \pi_g$.

\begin{prop}
    Let $g \in \mathbb N$. The set $\mathcal P$ as above is a polytope in $\mathbb R^g$, containing $0$ in its interior. One can recover the set $\Pi(G)$ (see Definition \ref{def:probability-hypergraph}) from the polytope $\mathcal P$:
    $$\Pi(G) = \left\{\pi_* + \sum_{x=1}^g a_x \pi_x \, :a \in \mathcal P\right\}.$$
    
    A different choice $\pi'_*, \pi'_1, \ldots, \pi'_g$ yields a polytope $\mathcal P'$ which is an affine transformation of $\mathcal P$.
\end{prop}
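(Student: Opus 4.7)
The plan is to view $\Pi(G)$ as a polytope sitting inside $\mathbb R^V$ — namely the intersection of the hypercube $[0,1]^V$ with the finitely many affine hyperplanes $\sum_{v \in e} \pi(v) = 1$, $e \in E$ — and to interpret the parametrization $\Phi : \mathbb R^g \to \mathbb R^V$, $\Phi(a) := \pi_* + \sum_{x=1}^g a_x \pi_x$, as an affine isomorphism from $\mathbb R^g$ onto the affine subspace $\pi_* + \Pi^0(G)$ spanned by $\Pi(G)$. Once this picture is in place, every part of the proposition reduces to routine manipulations.

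First I would check that $\operatorname{aff}(\Pi(G)) = \pi_* + \Pi^0(G)$: one inclusion is automatic, since any two elements of $\Pi(G)$ differ by an element of $\Pi^0(G)$; the reverse follows from the fact that $\Pi(G)$ is nonempty, so $\Pi^0(G)$ is precisely the linear subspace parallel to $\operatorname{aff}(\Pi(G))$. Because $(\pi_1, \ldots, \pi_g)$ is a basis of $\Pi^0(G)$, the map $\Phi$ is an affine bijection onto $\pi_* + \Pi^0(G)$, and by definition $\mathcal P = \Phi^{-1}(\Pi(G))$ is the preimage of a polytope under an affine isomorphism, hence itself a polytope in $\mathbb R^g$. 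The same identity $\mathcal P = \Phi^{-1}(\Pi(G))$ immediately yields the claimed parametrization $\Pi(G) = \Phi(\mathcal P)$.

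Next I would establish $0 \in \operatorname{int} \mathcal P$, which is the one place where the probability hypergraph hypothesis must be used. Choosing $\pi_*$ with values in $(0,1]$ (as is permitted by Definition \ref{def:probability-hypergraph}), note that any vertex $v$ with $\pi_*(v) = 1$ must belong to some singleton hyperedge $\{v\}$, which forces $\pi_x(v) = 0$ for every $x \in [g]$; consequently, the coordinate at such $v$ is constant along $\operatorname{im} \Phi$, and the corresponding cube constraint holds identically. At every other vertex one has $\pi_*(v) \in (0,1)$, and since the equality constraints defining $\Pi(G)$ are satisfied for every $a$ by construction, continuity guarantees that a small open ball around $0$ lies inside $\mathcal P$.

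Finally, for the affine dependence on the choices, given a second triple $(\pi'_*, \pi'_1, \ldots, \pi'_g)$ with associated map $\Phi'$ and polytope $\mathcal P'$, the composition $T := \Phi^{-1} \circ \Phi'$ is a well-defined affine bijection $\mathbb R^g \to \mathbb R^g$, because $\Phi$ and $\Phi'$ share the common image $\pi_* + \Pi^0(G) = \pi'_* + \Pi^0(G)$; it satisfies $T(\mathcal P') = \Phi^{-1}(\Pi(G)) = \mathcal P$, giving the stated affine equivalence. The main subtlety is the interior statement, which hinges on being allowed to take $\pi_*$ strictly positive on all non-forced vertices; I would flag this choice at the start of the proof so that the rest runs as a clean transport of the polytope $\Pi(G) \subset \mathbb R^V$ along $\Phi$.
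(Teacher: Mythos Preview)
Your proposal is correct and follows essentially the same approach as the paper: both recognize that $\mathcal P$ is the pullback of $\Pi(G)$ under the affine parametrization $a \mapsto \pi_* + \sum_x a_x \pi_x$, that the defining inequalities are the positivity constraints $\pi_*(v) + \sum_x a_x \pi_x(v) \geq 0$, and that changing $(\pi_*,\pi_1,\ldots,\pi_g)$ amounts to an affine change of coordinates. Your treatment is in fact more careful than the paper's on the interior claim: the paper's proof does not explicitly verify $0 \in \operatorname{int}\mathcal P$, whereas you correctly note that this requires taking $\pi_*$ strictly positive (which the probability-hypergraph hypothesis guarantees exists) and handle the forced vertices with $\pi_*(v)=1$ separately.
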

\begin{proof}
    A point $a \in \mathbb R^g$ is an element of $\mathcal P$ if and only if the following conditions are satisfied: 
    $$\forall v \in V, \qquad \pi_*(v) + \sum_{x=1}^g a_x \pi_x(v) \geq 0.$$
    Note that these conditions are affine in $a$ and define the facets of the polytope $\mathcal P$. The normalization condition Eq.~\eqref{eq:probability-hypergraph-normalization} is automatically satisfied:
    $$\forall e \in E \qquad \sum_{v \in e} \pi_*(v) + \sum_{x=1}^g \sum_{v \in e} a_x \pi_x(v) = \sum_{v \in e} \pi_*(v) = 1.$$
For the final claim, changing the base point $\pi_*$ to a different one amounts to a translation of the polytope, while a basis change for the vector space $\Pi^0(G)$ amounts to an invertible linear transformation of $\mathcal P$.
\end{proof}

It is remarkable that a big class of polytopes can be obtained in the way described above, a result due to Shultz \cite{shultz1974characterization}.

\begin{thm}[\cite{shultz1974characterization}]
    Any polytope $\mathcal P$ having vertices with rational coefficients can be obtained from a probability hypergraph $G$.
\end{thm}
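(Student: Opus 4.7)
The plan is to construct, for a given rational polytope $\mathcal P$, a probability hypergraph $G$ whose associated polytope (in the sense of the preceding construction) is affinely equivalent to $\mathcal P$; by the last sentence of the preceding proposition, this is enough to prove the theorem.

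First, by an affine change of coordinates I reduce to the case where $\mathcal P \subset \mathbb R^g$ is full-dimensional, contains $0$ in its interior, and has an integer H-representation
\[
\mathcal P = \{x \in \mathbb R^g : \langle \tilde h_j, x \rangle \leq N_j,\ j=1,\dots,f\},
\]
with $\tilde h_j \in \mathbb Z^g$ and $N_j \in \mathbb Z_{>0}$. For each facet I then form the slack $s_j(x) := N_j - \langle \tilde h_j, x\rangle$, which is non-negative precisely on $\mathcal P$, and the target $\pi$-coordinates of the hypergraph will be $\pi(v_j) := s_j(x)/M$ for a suitably large integer $M$. The positivity part of the definition of $\Pi(G)$ is then automatic.

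The core of the argument is to encode the $f-g$ independent integer linear relations $\sum_j c_j s_j = d$ among the slacks (which arise from the integer relations $\sum_j c_j \tilde h_j = 0$) as edge-sum conditions $\sum_{v \in e}\pi(v)=1$. For a single relation with non-negative integer coefficients, after a common rescaling one can form an edge whose elements are $c_j$ distinct ``copies'' of $v_j$ for each $j$, with equality of copies enforced through the following device: for two copies $v_j^{(a)}$ and $v_j^{(b)}$, adjoin an auxiliary vertex $u$ together with two size-$2$ edges $\{v_j^{(a)},u\}$ and $\{v_j^{(b)},u\}$ both summing to $1$, which forces $\pi(v_j^{(a)})=\pi(v_j^{(b)})$. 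When a relation has mixed signs, I split $c_j = c_j^+ - c_j^-$ and introduce, for each $j$ with $c_j^->0$, the complementary slack $M - s_j \geq 0$ as an auxiliary vertex; moving the negative terms to the right-hand side converts the relation into one with non-negative coefficients, to which the previous construction applies. The final probability assignment on $G$ comes from $x=0 \in \operatorname{int}\mathcal P$, and is strictly positive once $M$ is chosen large enough, which establishes that $G$ is indeed a probability hypergraph. Verifying that $x \mapsto \pi_x$ is an affine bijection $\mathcal P \to \Pi(G)$ then follows from the construction.

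The main obstacle is the combinatorial bookkeeping of the duplication-and-auxiliary-vertex construction: one needs a \emph{single} global scaling $M$ for which every vertex receives a consistent $\pi$-value across all edges containing it, and one must ensure that the final edge system is large enough to cut the affine dimension of $\Pi(G)$ down to exactly $g$ and no further. This is precisely where rationality of the vertices of $\mathcal P$ is essential, as it is what allows all the slack coefficients and linear relations to be simultaneously cleared of denominators and produce a \emph{finite} hypergraph $G$.
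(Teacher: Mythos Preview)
The paper does not prove this theorem at all: it is stated with a bare citation to Shultz and no argument is given. There is therefore no ``paper's own proof'' to compare your proposal against.

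As for the proposal itself, the strategy is sound in outline --- pass to an integer $H$-description, use slacks as the coordinates on $V$, and realise the affine relations among slacks as hyperedge normalisations --- and this is essentially the mechanism behind Shultz's result. However, as written it is a sketch with a real gap at the point you yourself flag. The problem is not merely bookkeeping: with a single global denominator $M$ your relations become $\sum_j c_j\, s_j/M = d/M$, and in general $d/M \neq 1$, so you cannot directly declare ``$c_j$ copies of $v_j$'' to be a hyperedge. You need an explicit way to manufacture constant-value vertices (or filler vertices) so that every encoded relation has edge-sum exactly $1$, and you must do this without forcing any vertex to have value $0$ under your base assignment $\pi_*$ (recall the definition of a probability hypergraph requires a \emph{strictly} positive $\pi$). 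The standard fix is to first encode a vertex $w_0$ with constant value $1/M$ via a single hyperedge of $M$ mutually-equal copies, and then pad each relation-edge with the appropriate number of copies of $w_0$; but this step is missing from your write-up, and without it the construction does not produce a hypergraph whose $\Pi(G)$ is the intended polytope.

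A second point that needs a line of argument: after all the duplications and auxiliary vertices, you must check that $\dim \Pi^0(G)$ is exactly $g$ and not larger --- i.e.\ that every $\pi \in \Pi(G)$ really arises from some $x \in \mathcal P$. This amounts to verifying that the equality-forcing edges you introduce do not themselves create new free parameters, which is true (each auxiliary $u$ is determined by the vertex it is paired with), but should be stated.
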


\bigskip

We consider now a matrix version of the set $\Pi(G)$ from Definition \ref{def:probability-hypergraph}, by requiring that the function $\pi$ is matrix-valued. 

\begin{defi}
    Let $d \in \mathbb N$. Given $G$ a probability hypergraph, define, for all $d \geq 1$
    $$\Pi_d(G) := \left\{A : V \to \mathrm{PSD}_d \, : \, \forall e \in E, \, \sum_{v \in E} A(v) = I_d \right\}.$$
    We refer to the elements of $\Pi_d(G)$ as \emph{$G$-operators}.
\end{defi}

Clearly, $\Pi_d(G)$ is the set of POVMs corresponding to the hyperedges of $G$, where common vertices correspond to common effects. 

\bigskip

The set of semiclassical POVMs can be seen as coming from operators obtained from the extreme points of the polytope $\mathcal P$. Indeed, let $\sigma_1, \ldots, \sigma_k$ be the extreme points of $\Pi(G)$, and $C_1, \ldots, C_k$ be a POVM. The operators 
$$\forall v \in V \qquad A_v := \sum_{i=1}^k \sigma_i(v) C_i$$
are called semiclassical. In other words, operators $\{A_v\}_{v \in V}$ are called \emph{$G$-compatible} if there exists a POVM $\{C_\sigma\}$ indexed by the extreme points $\sigma \in \operatorname{ext} \Pi(G)$ such that
$$\forall v \in V \qquad A_v = \sum_{\sigma \in \operatorname{ext} \Pi(G)} \sigma(v) C_\sigma.$$

One can easily show the following result, connecting the formalism introduced in this section to the one from Section \ref{sec:polytope-compatibility}.

\begin{prop}\label{prop:P-G-equivalent}
Let $G$ be a probability hypergraph. To a $g$-tuple of operators $(B_x)_{x \in [g]}$ associate a tuple of self-adjoint operators $(A_v)_{v \in V}$  defined by
$$\forall v \in V, \qquad A_v := \pi_*(v)I_d + \sum_{x=1}^g \pi_x(v)B_x \in \mathcal M_d^{sa}(\mathbb C),$$ 
where $\pi_*, \pi_1, \ldots, \pi_g$ define the polytope $\mathcal P$ associated to $G$.
Then, we have the following equivalences: 
\begin{itemize}
    \item $(B_x)_{x \in [g]}$ are $\mathcal P$-operators $\iff$ $(A_v)_{v \in V}$ are $G$-operators; 
    \item $(B_x)_{x \in [g]}$ are $\mathcal P$-compatible $\iff$ $(A_v)_{v \in V}$ are $G$-compatible.
\end{itemize}
\end{prop}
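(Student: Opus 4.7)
The plan is to unfold both equivalences by direct computation, using the defining relation $A_v = \pi_*(v)I_d + \sum_x \pi_x(v) B_x$ and the explicit description of $\mathcal P$ in terms of the functions $\pi_*, \pi_1, \ldots, \pi_g$.

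\medskip

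For the first equivalence, I would first observe that the normalization condition $\sum_{v\in e} A_v = I_d$ is automatic: summing over $v\in e$ gives $(\sum_{v\in e}\pi_*(v))I_d + \sum_x(\sum_{v\in e}\pi_x(v)) B_x = I_d + 0$, since $\pi_*$ satisfies Eq.~\eqref{eq:probability-hypergraph-normalization} and each $\pi_x \in \Pi^0(G)$. So being a $G$-operator reduces to $A_v \geq 0$ for every $v \in V$. Next, I would characterize this via states: $A_v \geq 0$ for all $v$ iff $\langle A_v, \rho\rangle = \pi_*(v) + \sum_x \pi_x(v)\langle B_x, \rho\rangle \geq 0$ for every $\rho \in \mathcal M_d^{1,+}(\mathbb C)$ and every $v$. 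By the definition of $\mathcal P$, the latter is exactly the statement $(\langle B_x, \rho\rangle)_{x\in[g]} \in \mathcal P$ for all $\rho$. Invoking Proposition \ref{prop:num_range_P_operators} (which says $B$ are $\mathcal P$-operators iff $\operatorname{conv}\mathcal W(B) \subseteq \mathcal P$) closes the first bullet.

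\medskip

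For the second equivalence, let $v_1, \ldots, v_k$ be the vertices of $\mathcal P$ and let $\sigma_i := \pi_* + \sum_x v_i(x)\pi_x$ be the corresponding extreme points of $\Pi(G)$ (this bijection was established in the preceding subsection). The forward direction is a direct substitution: if $B_x = \sum_i v_i(x) C_i$ for a POVM $C$, then
\[
A_v = \pi_*(v)\sum_i C_i + \sum_x \pi_x(v)\sum_i v_i(x) C_i = \sum_i \Bigl(\pi_*(v) + \sum_x v_i(x)\pi_x(v)\Bigr) C_i = \sum_i \sigma_i(v) C_i,
\]
so $(A_v)$ is $G$-compatible with the same POVM. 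Conversely, given a POVM $C$ realizing $A_v = \sum_i \sigma_i(v) C_i$, define $\tilde B_x := \sum_i v_i(x) C_i$; the computation just performed shows $A_v = \pi_*(v)I_d + \sum_x \pi_x(v)\tilde B_x$, so $\sum_x \pi_x(v)(B_x - \tilde B_x) = 0$ for every $v \in V$.

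\medskip

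The final step, which is the only point requiring a small argument, is to conclude $B_x = \tilde B_x$ from this identity. The key fact is that $\pi_1, \ldots, \pi_g$ form a \emph{basis} of $\Pi^0(G)$, so in particular they are linearly independent as $\mathbb R$-valued functions on $V$; equivalently, the $|V|\times g$ matrix $[\pi_x(v)]_{v,x}$ has rank $g$. Applying $\langle \cdot, \rho\rangle$ for an arbitrary $\rho$ to the operator identity yields $\sum_x \pi_x(v)\langle B_x - \tilde B_x, \rho\rangle = 0$ for all $v$, and linear independence forces $\langle B_x - \tilde B_x, \rho\rangle = 0$ for all $x$; since $\rho$ is arbitrary and $B_x - \tilde B_x$ is self-adjoint, we conclude $B_x = \tilde B_x = \sum_i v_i(x) C_i$, which exhibits $\mathcal P$-compatibility of $B$. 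I do not expect any real obstacle: the whole argument is bookkeeping, with the only subtle ingredient being the use of linear independence of the basis $\{\pi_x\}$ to invert the map $(B_x)_x \mapsto (A_v)_v$.
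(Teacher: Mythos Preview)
Your proof is correct; the paper itself omits the argument entirely, stating only that ``one can easily show'' the result, so there is no published proof to compare against. Your direct computation---reducing the $G$-operator condition to $A_v\ge 0$ via the automatic normalization, translating positivity through states to the joint-numerical-range characterization of $\mathcal P$-operators (Proposition~\ref{prop:num_range_P_operators}), and handling compatibility by transporting the POVM across the affine isomorphism $a\mapsto \pi_*+\sum_x a_x\pi_x$ between $\mathcal P$ and $\Pi(G)$---is exactly the routine verification the authors had in mind, and your use of the linear independence of the basis $\pi_1,\ldots,\pi_g$ to recover $B$ from $A$ is the right (and only) subtle step.
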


We discuss next the relation between the notion of hypergraph compatibility introduced in this section, and the standard notion of compatibility for the quantum measurements associated to the hyperedges. In the spirit of \cite{guerini2018joint}, hypergraph compatibility is equivalent to standard compatibility plus the requirement that there exists a post-processing respecting the structure of the hypergraph. 

\begin{thm}\label{thm:G-symmetric-post-processing}
    Let $G=(V,E)$ be a probability hypergraph, and consider a tuple of $G$-operators $(A_v)_{v \in V} \subseteq \mathcal M_d^{sa}(\mathbb C)$, $d \in \mathbb N$. Consider also the POVMs $\hat A_{\cdot|e} = (A_v)_{v \in e}$ indexed by the hyperedges of $G$. The following assertions are equivalent:
    \begin{itemize}
        \item The tuple $(A_v)_{v \in V}$ is $G$-compatible.
        \item The measurements $\hat A_{\cdot|e}$ are compatible (in the standard sense of quantum mechanics), with the additional constraint that they can be post-processed from a single POVM $B= (B_\lambda)_{\lambda \in \Lambda}$
        $$\forall e \in E,  \forall v \in e, \qquad \hat A_{v|e} = A_v = \sum_{\lambda \in \Lambda} p(v|e, \lambda) B_\lambda$$
        using a post-processing $p$ respecting the symmetry of $G$:
        \begin{equation}\label{eq:G-symmetric-post-processing}
            \forall e,f \in E, \forall v \in e \cap f, \forall \lambda \in \Lambda, \qquad p(v|e, \lambda) = p(v|f, \lambda).
        \end{equation}
    \end{itemize}
\end{thm}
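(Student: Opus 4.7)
The plan is to prove both implications by unpacking the definitions, using the key observation that the symmetry constraint Eq.~\eqref{eq:G-symmetric-post-processing} is precisely what is needed for each post-processing kernel $p(\cdot|\cdot,\lambda)$ to define a single element of $\Pi(G)$ rather than a separate probability vector per hyperedge.

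For the forward direction, assume $(A_v)_{v \in V}$ is $G$-compatible, so there is a POVM $(C_\sigma)_{\sigma \in \operatorname{ext} \Pi(G)}$ with $A_v = \sum_\sigma \sigma(v) C_\sigma$. Take $\Lambda = \operatorname{ext} \Pi(G)$, $B_\sigma := C_\sigma$, and define $p(v|e,\sigma) := \sigma(v)$ for $v \in e$. Since $\sigma \in \Pi(G)$ we have $\sigma(v) \geq 0$ and $\sum_{v \in e}\sigma(v)=1$, so $p$ is a legitimate conditional probability. The definition of $p(v|e,\sigma)$ does not depend on $e$, so the symmetry condition Eq.~\eqref{eq:G-symmetric-post-processing} holds trivially. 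Compatibility in the standard sense follows from Lemma \ref{lem:post-processing}.

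For the converse, suppose $\hat A_{\cdot|e}$ are compatible with joint POVM $(B_\lambda)_{\lambda \in \Lambda}$ and a symmetric post-processing $p$. The symmetry condition allows us to define $q_\lambda : V \to [0,1]$ unambiguously by $q_\lambda(v) := p(v|e,\lambda)$ for any $e \ni v$ (no isolated vertices guarantee such an $e$ exists). For any hyperedge $e$, $\sum_{v \in e} q_\lambda(v) = \sum_{v \in e} p(v|e,\lambda) = 1$, so $q_\lambda \in \Pi(G)$ for every $\lambda$. Decompose each $q_\lambda$ as a convex combination of extreme points, $q_\lambda = \sum_{\sigma \in \operatorname{ext}\Pi(G)} w_{\lambda,\sigma}\,\sigma$, with $w_{\lambda,\sigma}\geq 0$ and $\sum_\sigma w_{\lambda,\sigma}=1$. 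Substituting into $A_v = \sum_\lambda q_\lambda(v) B_\lambda$ and setting $C_\sigma := \sum_\lambda w_{\lambda,\sigma} B_\lambda$ yields
\[
A_v = \sum_{\sigma \in \operatorname{ext}\Pi(G)} \sigma(v)\, C_\sigma.
\]
The operators $C_\sigma$ are positive semidefinite and $\sum_\sigma C_\sigma = \sum_\lambda B_\lambda = I_d$, so $(C_\sigma)$ is a POVM, showing that $(A_v)_{v\in V}$ is $G$-compatible.

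There is no genuine obstacle here: the only subtle point is recognizing that Eq.~\eqref{eq:G-symmetric-post-processing} is exactly the statement that the kernels $\lambda \mapsto p(\cdot|\cdot,\lambda)$ descend to functions $V \to [0,1]$ that normalize to $1$ on every hyperedge, i.e.~land in $\Pi(G)$. Once this identification is made, the rest is bookkeeping via the convex decomposition into extreme points of $\Pi(G)$, which mirrors the argument used in Proposition \ref{prop:operator_compatible_max_min} to reduce an arbitrary $\mathcal P$-decomposition to one over extreme points.
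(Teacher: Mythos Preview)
Your proof is correct and follows essentially the same approach as the paper's own proof: both directions use the identification $p(v|e,\lambda)\leftrightarrow$ an element of $\Pi(G)$ (with $\Lambda=\operatorname{ext}\Pi(G)$ in the forward direction and a convex decomposition into extreme points in the reverse), differing only in notation.
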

\begin{proof}
    The fact that the first point implies the second one is immediate: the tuple $A$ being $G$-compatible implies that there exists a POVM $C$ indexed by the extreme points of $\Pi(G)$ such that
    $$\forall v \in V \qquad A_v = \sum_{\sigma \in \operatorname{ext} \Pi(G)} \sigma(v) C_\sigma.$$
    Since the functions $\sigma$ have the property that 
    $$\forall e \in E, \qquad \sum_{v \in e} \sigma(v) =1,$$
    this yields the conclusion by setting $\Lambda = \operatorname{ext} \Pi(G)$ and $p(v|e, \sigma) = \sigma(v)$.

    For the reverse implication, the symmetry condition Eq.~\eqref{eq:G-symmetric-post-processing} implies that we can unambiguously define
    $$\forall v \in V, \forall \lambda \in \Lambda, \qquad \pi(v|\lambda) := p(v|e,\lambda) \quad \text{ for any } e \ni v.$$
    With this notation, we rewrite the (standard) compatibility of the POVMs $\hat A_{\cdot|e}$ as
    $$\forall v \in V, \qquad A_v = \sum_{\lambda \in \Lambda} \pi(v|\lambda) B_\lambda.$$
    Note that since $p$ is a post-processing, we have, for all $\lambda \in \Lambda$,  
    $$\forall e \in E \qquad \sum_{v \in e} \pi(v|\lambda) =1,$$
    hence $\pi(\cdot|\lambda) \in \Pi(G)$. We now decompose these elements in $\Pi(G)$ in terms of the extreme points:
    $$\forall \lambda \in \Lambda, \qquad \pi(\cdot|\lambda) = \sum_{\sigma \in \operatorname{ext} \Pi(G)} q(\sigma|\lambda) \sigma(\cdot),$$
    for some conditional probabilities $q(\cdot | \cdot)$. Putting everything together, we have 
    $$\forall v \in V, \qquad A_v = \sum_{\sigma \in \operatorname{ext} \Pi(G)} \sigma(v) \underbrace{\sum_{\lambda \in \Lambda} q(\sigma|\lambda) B_\lambda}_{C_\sigma}.$$
    It is easy to check that the $C_\sigma$ defined above form a POVM, concluding the proof.
\end{proof}

\bigskip

Let us now emphasize the considerations above with an example, which corresponds to the middle hypergraph in Figure \ref{fig:hypergraphs}, also reproduced below for convenience. This hypergraph $G$, having 5 vertices and 2 hyperedges, corresponds to two 3-outcome POVMs sharing one effect. The set of probability vectors on $G$ can be easily computed: 
$$\Pi(G) = \{(x,y,1-x-y,z,1-x-z) \, : \, 0 \leq x \leq 1 \text{ and } 0 \leq y \leq 1-x \text{ and } 0 \leq z \leq 1-x\}.$$
The allowed triples $(x,y,z)$ form a pyramid with a square base 
$$\operatorname{conv}\{((1,0,0), (0,0,0), (0,0,1), (0,1,0), (0,1,1)\},$$
depicted below in Figure \ref{fig:pyramid} (right panel). 

\begin{figure}
   \begin{center}
\begin{minipage}{6in}
  \centering
  \raisebox{-0.5\height}{\includegraphics{hperg-2.pdf}}
  \hspace*{2cm}
  \raisebox{-0.5\height}{\includegraphics[scale=0.4]{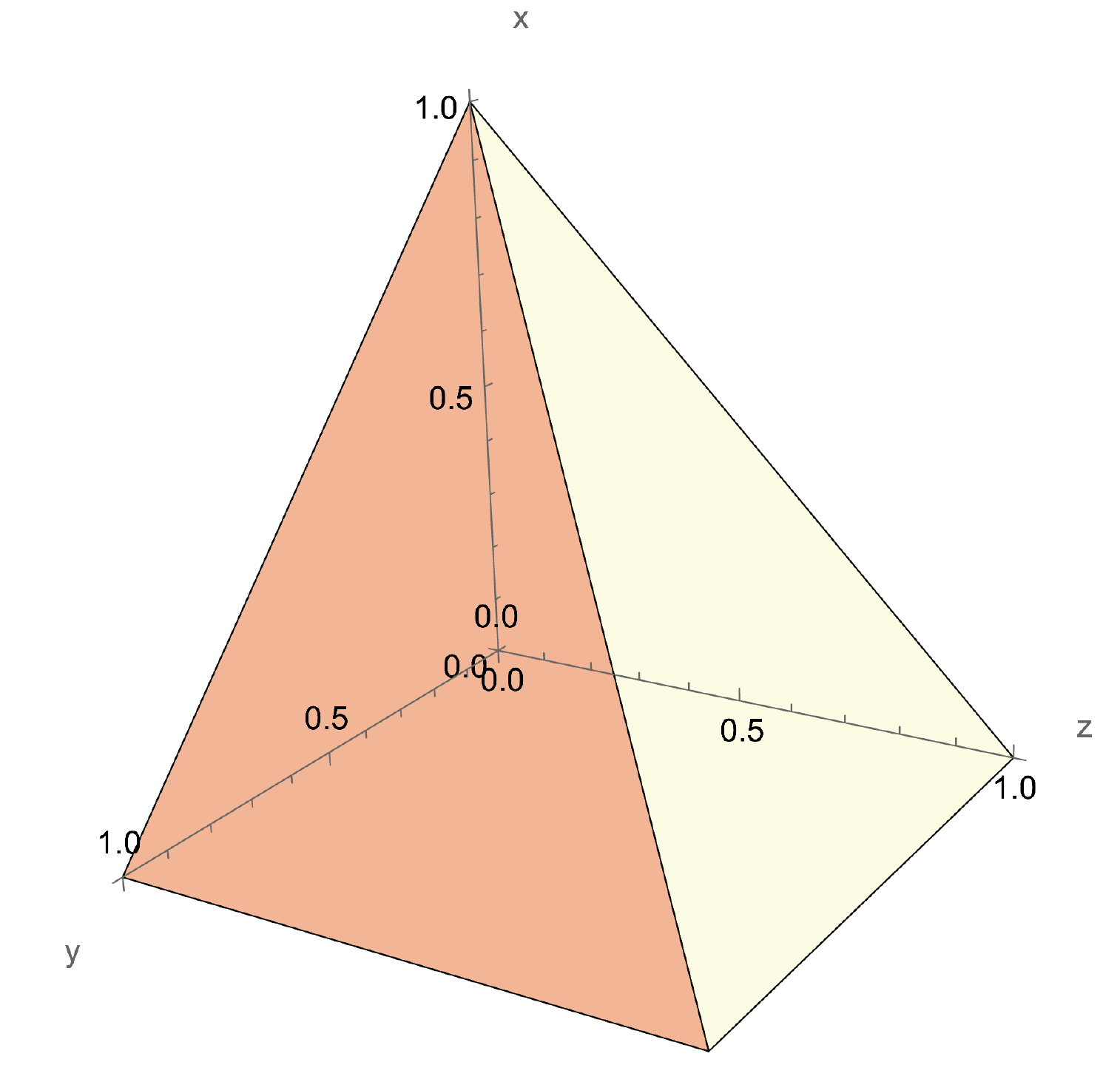}}
\end{minipage}
\end{center}
    \caption{A probability hypergraph and the associated pyramid.}
    \label{fig:pyramid}
\end{figure}

Considering the distinguished element $\pi_*:= (1,1,1,1,1)/3 \in \Pi(G)$ and the basis elements
\begin{align*}
    \pi_1 &= (1,0,-1,0,-1)\\
    \pi_2 &= (0,1,-1,0,0)\\
    \pi_3 &= (0,0,0,1,-1),   
\end{align*}
we obtain the associated polytope
\begin{equation}\label{eq:def-pyramid}
    \mathcal P = (-1/3, -1/3, -1/3) + \operatorname{conv}\{((1,0,0), (0,0,0), (0,0,1), (0,1,0), (0,1,1)\}.
\end{equation}

The following result is an application of Proposition \ref{prop:P-G-equivalent} and Theorem \ref{thm:G-symmetric-post-processing}
\begin{prop}\label{prop:3-outcomes-with-shared-effect}
Consider a triple of self-adjoint matrices $(A,B,C) \in (\mathcal M_d^{sa}(\mathbb C))^3$, $d \in \mathbb N$. Then
    \begin{itemize}
        \item $(A,B,C) \in 1/3(I,I,I) + \mathcal P_{\max}(d)$ if and only if $(A,B,I_d - A - B, C, I_d - A - C)$ are $G$-operators if and only if both triples $(A,B, I_d - A - B)$ and $(A,C, I_d - A - C)$ are POVMs. 
        \item $(A,B,C) \in 1/3(I,I,I) + \mathcal P_{\min}(d)$ if and only if $(A,B,I_d - A - B, C, I_d - A - C)$ are $G$-compatible if and only if the two POVMs $(A,B, I_d - A - B)$ and $(A,C, I_d - A - C)$ are compatible and the post-processing in Lemma \ref{lem:post-processing} satisfies $p_\lambda(1|1) = p_\lambda(1|2)$, where $p_\lambda(\cdot|1)$, $p_\lambda(\cdot|2)$ are conditional probabilities for all $\lambda$.
    \end{itemize}
\end{prop}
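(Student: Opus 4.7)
The plan is to apply directly the hypergraph framework developed in the excerpt. The relevant probability hypergraph $G$ has vertex set $V=[5]$ and hyperedges $E=\{\{1,2,3\},\{1,4,5\}\}$, which is precisely the middle hypergraph of Figure \ref{fig:hypergraphs} reproduced in Figure \ref{fig:pyramid}. The text has already identified the associated polytope as the pyramid $\mathcal P$ of Eq.~\eqref{eq:def-pyramid}, using the base point $\pi_*=(1,1,1,1,1)/3$ and the basis $\pi_1,\pi_2,\pi_3$ displayed there. First I would set $B_1:=A-I_d/3$, $B_2:=B-I_d/3$, $B_3:=C-I_d/3$, so that $(A,B,C)\in \frac{1}{3}(I,I,I)+\mathcal P_{\max}(d)$ (resp.\ $\mathcal P_{\min}(d)$) becomes $(B_1,B_2,B_3)\in \mathcal P_{\max}(d)$ (resp.\ $\mathcal P_{\min}(d)$). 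A one-line computation using the coordinates of $\pi_*, \pi_1, \pi_2, \pi_3$ shows that, under the dictionary of Proposition \ref{prop:P-G-equivalent}, the induced $5$-tuple $(A_v)_{v\in V}$ is exactly
$$(A,\,B,\,I_d-A-B,\,C,\,I_d-A-C).$$

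For the first bullet, Proposition \ref{prop:P-G-equivalent} gives that $(B_1,B_2,B_3)$ are $\mathcal P$-operators if and only if the above $5$-tuple is a collection of $G$-operators. By the definition of $\Pi_d(G)$, the latter amounts to positivity of each $A_v$ together with the normalization $\sum_{v\in e}A_v=I_d$ for each hyperedge $e$. The normalization holds trivially by the way the last entry of each hyperedge has been chosen, so the effective content is precisely that $(A,B,I_d-A-B)$ and $(A,C,I_d-A-C)$ are both POVMs.

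For the second bullet, Proposition \ref{prop:P-G-equivalent} again reduces $\mathcal P$-compatibility of $(B_1,B_2,B_3)$ to $G$-compatibility of the $5$-tuple $(A_v)_{v\in V}$. I would then invoke Theorem \ref{thm:G-symmetric-post-processing}, which says that this is equivalent to the two hyperedge POVMs $(A,B,I_d-A-B)$ and $(A,C,I_d-A-C)$ being compatible via a joint POVM $(R_\lambda)_{\lambda\in\Lambda}$ with a post-processing satisfying the symmetry condition Eq.~\eqref{eq:G-symmetric-post-processing}. Since the two hyperedges share only the vertex $v=1$, the symmetry collapses to the single family of equations $p(1\mid \{1,2,3\},\lambda)=p(1\mid \{1,4,5\},\lambda)$ for all $\lambda\in\Lambda$. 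Relabeling the two hyperedges as $1$ and $2$, this is exactly the condition $p_\lambda(1|1)=p_\lambda(1|2)$ appearing in the statement, and the remaining conditional probabilities $p_\lambda(\cdot|1)$, $p_\lambda(\cdot|2)$ are precisely those provided by the standard compatibility of the two POVMs.

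No genuine obstacle is expected: the argument is essentially bookkeeping, and the only items requiring care are (i) checking that the explicit pyramid in Eq.~\eqref{eq:def-pyramid} is indeed the polytope produced by $G$ with the chosen base point and basis (a direct substitution), and (ii) verifying that the unique shared vertex yields the stated one-parameter family of symmetry constraints (immediate from the combinatorics of $E$).
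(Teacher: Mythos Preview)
Your proposal is correct and follows exactly the route the paper itself indicates: the paper states that Proposition \ref{prop:3-outcomes-with-shared-effect} is an application of Proposition \ref{prop:P-G-equivalent} and Theorem \ref{thm:G-symmetric-post-processing}, and your argument is precisely the bookkeeping needed to carry this out for the particular hypergraph $G$ with hyperedges $\{1,2,3\}$ and $\{1,4,5\}$. The verification that the dictionary yields the $5$-tuple $(A,B,I_d-A-B,C,I_d-A-C)$ and that the symmetry constraint of Theorem \ref{thm:G-symmetric-post-processing} collapses to $p_\lambda(1|1)=p_\lambda(1|2)$ is accurate.
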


Note that the last point above corresponds to the existence of a joint POVM of the form
\begin{center}
\begin{tabular}{cccl}
\cline{1-3}
\multicolumn{1}{|c|}{$Q_1$} & \multicolumn{1}{c|}{0}    & \multicolumn{1}{c|}{0}    & $=A$                       \\ \cline{1-3}
\multicolumn{1}{|c|}{0}    & \multicolumn{1}{c|}{$Q_5$} & \multicolumn{1}{c|}{$Q_4$} & $=B$                       \\ \cline{1-3}
\multicolumn{1}{|c|}{0}    & \multicolumn{1}{c|}{$Q_3$} & \multicolumn{1}{c|}{$Q_2$} & $=I_d-A-B$                \\ \cline{1-3}
\phantom{$I_d$}$=A$\phantom{$-A-$}                         & \phantom{$I_d$}$=C$\phantom{$-A-$}                        & $=I_d-A-C$                 & $\sum=I_d$
\end{tabular}
\end{center}

One can ask a question similar to the one in Section \ref{sec:counterexample}. We present now an example of compatible POVMs which do not admit a joint POVM of the form above.  Again, $e_1$, $e_2$ are the standard basis vectors in $\mathbb C^2$ and $f_1 = 1/\sqrt{2}(e_1 + e_2)$, $f_2 = 1/\sqrt{2}(e_1 - e_2)$. Consider the following two 3-outcome qubit measurements: 
\begin{equation}\label{eq:example-123-145}
\left(\frac 1 2 I_2, \frac 1 2 e_1 e_1^\ast, \frac 1 2 e_2 e_2^\ast\right) \qquad \text{ and } \qquad \left(\frac 1 2 I_2, \frac 1 2 f_1 f_1^\ast, \frac 1 2 f_2 f_2^\ast\right).
\end{equation}

These POVM are compatible, as they are the marginals of the  following joint POVM: 
$$\frac 1 2 \, \cdot \, 
    \begin{tabular}{|c|c|c|}
        \hline
        0 & $ f_1 f_1^\ast$ & $ f_2 f_2^\ast$ \\ \hline
        $ e_1 e_1^\ast$ & 0 & 0 \\ \hline
        $ e_2 e_2^\ast$ & 0 & 0 \\ \hline
    \end{tabular}$$

However, when trying to write
        \begin{align*}
            \frac 1 2 I_2 &=Q_1\\
            \frac 1 2  e_1 e_1^\ast &=Q_4 + Q_5\\
            \frac 1 2  e_2 e_2^\ast &=Q_2 + Q_3\\
            \frac 1 2  f_1 f_1^\ast &=Q_3 + Q_5\\
            \frac 1 2  f_2 f_2^\ast &=Q_2 + Q_4,
        \end{align*}
we infer, for example, $Q_2 \sim e_2 e_2^\ast$ and $Q_2 \sim f_2 f_2^\ast$, hence $Q_2 = 0$. The same reasoning yields $Q_3=Q_4=Q_5 = 0$, implying $Q_1 = I_2$, which contradicts the first equality above. In conclusion, the POVMs from Eq.~\eqref{eq:example-123-145} are compatible but the corresponding effects do not belong to 
$\mathcal P_{\min}(2)$.

\section{Inclusion constants}\label{sec:inclusion-constants}

The main theme in this paper is understanding the relation between the sets of $\mathcal P$-operators and $\mathcal P$-compatible operators for a given polytope $\mathcal P$. Since the inclusion $\mathcal P_{\min}(d) \subseteq \mathcal P_{\max}(d)$ always holds, we would like to have a measure of how restrictive the compatibility condition is. This motivates the following definition, which is a specialization of Definition \ref{def:Delta-MCS} to matrix convex sets defined by polytopes. 

\begin{defi}
    Let $g$, $d \in \mathbb N$. For a polytope $\mathcal P \subseteq \mathbb R^g$ having $0$ in its interior, we define the set of \emph{inclusion constants} of $\mathcal P$ at level $d$ by
    $$\Delta_{\mathcal P}(d) := \{ s \in \mathbb R^g \, : \, s\cdot \mathcal P_{\max}(d) \subseteq \mathcal P_{\min}(d) \}.$$
    In words, $s \in \Delta_{\mathcal P}(d)$ if and only if for \emph{any} $g$-tuple of $\mathcal P$-operators $A \in \mathcal M_d^{\mathrm{sa}}(\mathbb C)^g$, the scaled tuple $(s_1 A_1, s_2 A_2, \ldots, s_g A_g)$ is $\mathcal P$-compatible.
\end{defi}

Computing the set of inclusion constants for a given polytope at a given level is in general a difficult problem, having many applications. For example, in the case of the hypercube, inclusion constants are related to the robustness of incompatibility of quantum effects \cite{bluhm2018joint}. More generally, the case of direct products of simplices is related to compatibility of quantum measurements \cite{bluhm2020compatibility}. The dual case, that of the cross polytope, is related to quantum steering \cite{bluhm2022maximal}.

Clearly, the set of inclusion constants of a polytope is convex. We can also show that it contains $0$ in its interior.
The following lemma is most likely known:
\begin{lem} \label{lem:0-in-int}
Let $\mathcal F$ be a matrix convex set consisting of $g$-tuples, $g \in \mathbb N$. If $0 \in \operatorname{int}\mathcal F_1$, then $0 \in \operatorname{int}\mathcal F_n$ for all $n \in \mathbb N$.
\end{lem}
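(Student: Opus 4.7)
The plan is to reduce the problem to the first level by exploiting both closure properties of matrix convex sets (direct sums and UCP maps). First, since $0 \in \operatorname{int} \mathcal F_1$, one can find $\delta > 0$ with $[-\delta,\delta]^g \subseteq \mathcal F_1$, so in particular every sign-weighted vector $\delta \epsilon$ lies in $\mathcal F_1$ for $\epsilon \in \{\pm 1\}^g$. Applying the direct-sum axiom repeatedly to all $2^g$ such vectors, we obtain a tuple $Y \in \mathcal F_{2^g}$ whose $i$-th entry is the diagonal matrix $Y_i = \delta \operatorname{diag}(\epsilon_1(i), \ldots, \epsilon_{2^g}(i))$.

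Next, given $n \in \mathbb N$ and an arbitrary tuple $X \in \mathcal M_n^{\mathrm{sa}}(\mathbb C)^g$ with $\sum_{i=1}^g \|X_i\|_\infty < \delta$, the idea is to exhibit a UCP map $\Psi : \mathcal M_{2^g}(\mathbb C) \to \mathcal M_n(\mathbb C)$ sending $Y$ to $X$; then $X \in \mathcal F_n$ by the UCP axiom, which shows that the open neighbourhood $\{X : \sum_i \|X_i\|_\infty < \delta\}$ of $0$ in $\mathcal M_n^{\mathrm{sa}}(\mathbb C)^g$ lies in $\mathcal F_n$, giving $0 \in \operatorname{int} \mathcal F_n$.

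The construction of $\Psi$ is the main step. Define
$$R_k := \frac{1}{2^g}\left( I_n + \sum_{i=1}^g \epsilon_k(i)\, \delta^{-1} X_i \right), \qquad k \in [2^g].$$
The hypothesis $\sum_i \|X_i\|_\infty < \delta$ guarantees that $I_n + \sum_i \epsilon_k(i)\delta^{-1} X_i \geq 0$ for every sign pattern, so each $R_k \geq 0$; moreover the cross-sum $\sum_k \epsilon_k(i) = 0$ gives $\sum_k R_k = I_n$, so $(R_k)$ is a POVM. Then the pinching-then-mixing map $\Psi(A) := \sum_k \langle e_k, A e_k\rangle R_k$ is UCP from $\mathcal M_{2^g}(\mathbb C)$ to $\mathcal M_n(\mathbb C)$ (it is unital, and a sum of CP rank-one maps).

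Finally, a direct calculation verifies $\Psi(Y_i) = X_i$: using $Y_i = \delta \sum_k \epsilon_k(i) e_k e_k^*$, we get $\Psi(Y_i) = \delta \sum_k \epsilon_k(i) R_k$, and the orthogonality $\sum_k \epsilon_k(i)\epsilon_k(j) = 2^g \delta_{ij}$ together with $\sum_k \epsilon_k(i) = 0$ collapses this to $X_i$. I expect the only mildly delicate point to be identifying the right POVM $(R_k)$ so that $\Psi(Y) = X$; the orthogonality relations of the sign vectors make this work cleanly, and the positivity estimate on $(R_k)$ is precisely what forces the explicit neighbourhood radius $\delta$.
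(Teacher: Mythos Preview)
Your proof is correct. Both your argument and the paper's use the same two closure properties (direct sums and UCP maps), but they are organized differently. The paper works one coordinate at a time: it takes the axis points $\pm C e_i \in \mathcal F_1$, forms direct sums to get tuples $(0,\ldots,0,\operatorname{diag}(x_1,\ldots,x_n),0,\ldots,0) \in \mathcal F_n$, then uses only \emph{unitary conjugation} to replace the diagonal matrix by an arbitrary $X$ with $\|X\|_\infty \leq C$, and finally appeals to \emph{convexity} of $\mathcal F_n$ to mix the $g$ coordinate slots, landing in the $\ell^\infty$-ball of radius $C/g$. Your route instead takes all $2^g$ sign vectors at once, builds a single diagonal tuple $Y \in \mathcal F_{2^g}$, and constructs one explicit UCP map $\Psi$ (via the POVM $(R_k)$) that maps $Y$ directly to any $X$ in the $\ell^1$-ball $\sum_i \|X_i\|_\infty < \delta$. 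The paper's argument is slightly lighter in that it never needs a nontrivial POVM---unitary conjugation plus convexity suffice---while your construction is more explicit and yields the target tuple in a single UCP step; the resulting neighbourhoods are of comparable size.
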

\begin{proof}
    Let us consider vectors of the form $x e_i$, where $e_i$ is the $i$-th standard basis vector and $x \in \mathbb R$. As $0 \in \operatorname{int}\mathcal F_1$, there exists a $C > 0$ such that $x e_i \in \mathcal F_1$ for all $x$ with $|x|\leq C$ and all $i \in [g]$. As matrix convex sets are closed under direct sums, this implies that $(0, \ldots, 0,\operatorname{diag}[x_1, \ldots, x_n], 0, \ldots, 0) \in \mathcal F_n$ for all $x_1, \ldots, x_n$ such that $|x_j| \leq C$ for all $j \in [n]$, irrespective of the position of the diagonal matrix in the $g$-tuple. As the matrix convex set is closed under UCP maps, it is in particular closed under unitary conjugation, such that $(0, \ldots, 0, X, 0 \ldots, 0) \in \mathcal F_n$ for all $X \in \mathcal M_n^{\mathrm{sa}}(\mathbb C)$ with $\|X\|_\infty \leq C$, where the norm is the operator norm. As every level of the matrix convex set is convex, we infer that $(X_1, \ldots, X_g) \in \mathcal F_n$ if $X_j \in \mathcal M_n^{\mathrm{sa}}(\mathbb C)$ and $\|X_j\|_\infty \leq C/g$ for all $j \in [j]$. This proves the claim.
\end{proof}

\begin{cor}
    Let $\mathcal P$ be a polytope with $0$ in its interior. Then, $0 \in \operatorname{int} \Delta_{\mathcal P}(d)$ for all $d \in \mathbb N$.
\end{cor}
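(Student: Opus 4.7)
The plan is to combine two ingredients: (i) the fact that $\mathcal{P}_{\max}(d)$ is a bounded subset of $\mathcal{M}_d^{\mathrm{sa}}(\mathbb{C})^g$, and (ii) the fact that $0$ lies in the interior of $\mathcal{P}_{\min}(d)$, which I would extract from Lemma \ref{lem:0-in-int}. Once both are in hand, the conclusion is immediate: scaling a bounded set by a small enough factor fits it inside any neighborhood of $0$.

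First, I would apply Lemma \ref{lem:0-in-int} to the matrix convex set $\mathcal{P}_{\min}$, whose first level is $\mathcal{P}_{\min}(1) = \mathcal{P}$ and contains $0$ in its interior by hypothesis. The lemma (together with the quantitative version visible in its proof) furnishes a constant $c = c(d,\mathcal{P}) > 0$ such that any tuple $B \in \mathcal{M}_d^{\mathrm{sa}}(\mathbb{C})^g$ with $\max_{i \in [g]} \|B_i\|_\infty \leq c$ satisfies $B \in \mathcal{P}_{\min}(d)$.

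Second, I would bound $\mathcal{P}_{\max}(d)$ in operator norm using Proposition \ref{prop:num_range_P_operators}: for every $A \in \mathcal{P}_{\max}(d)$, the joint numerical range $\mathcal{W}(A)$ is contained in the bounded polytope $\mathcal{P}$. Setting $M_i := \sup_{x \in \mathcal{P}} |x_i| < \infty$, this means $|\langle v, A_i v\rangle| \leq M_i$ for every unit vector $v$, and self-adjointness of $A_i$ gives $\|A_i\|_\infty \leq M_i$. Thus $M := \max_i M_i$ is a uniform bound on the operator norms of entries of tuples in $\mathcal{P}_{\max}(d)$, and $M > 0$ since $0 \in \operatorname{int}\mathcal{P}$ forces each $M_i > 0$.

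Combining, take any $s \in \mathbb{R}^g$ with $|s_i| \leq c/M$ for all $i$. Then for every $A \in \mathcal{P}_{\max}(d)$, the scaled tuple $s \cdot A = (s_1 A_1, \ldots, s_g A_g)$ satisfies $\|s_i A_i\|_\infty \leq c$ for each $i$, hence $s \cdot A \in \mathcal{P}_{\min}(d)$. This shows $s \in \Delta_{\mathcal{P}}(d)$ and exhibits an explicit neighborhood of $0$ inside $\Delta_{\mathcal{P}}(d)$. I do not anticipate any serious obstacle — the only slight subtlety is making sure Lemma \ref{lem:0-in-int} is invoked with an explicit (not merely topological) interior constant, which is already delivered by its proof.
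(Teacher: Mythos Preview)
Your proof is correct and follows essentially the same approach as the paper: invoke Lemma \ref{lem:0-in-int} to get $0\in\operatorname{int}\mathcal P_{\min}(d)$, observe that $\mathcal P_{\max}(d)$ is bounded (the paper states this directly, you justify it via Proposition \ref{prop:num_range_P_operators}), and combine. Your version is simply more explicit about the constants involved.
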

\begin{proof}
 Using Lemma \ref{lem:0-in-int}, it holds that $0\in \mathcal P_{\min}(d)$. Since $\mathcal P$ is a polytope, it is easy to see that $\mathcal P_{\max}(d)$ is bounded. Hence, there is a $C > 0$ such that $s \cdot \mathcal P_{\max}(d) \subseteq \mathcal P_{\min}(d)$ for all vectors $s$ with $\|s\|_2 \leq C$.
\end{proof}

The sets of inclusion constants are decreasing with the dimension: 
\begin{prop}\label{prop:inclusion-constants-dimension-decreasing}
    Let $\mathcal P\subseteq \mathbb R^g$ be a polytope containing $0$ in their interior, and $d \leq d'$, where $g$, $d$, $d' \in \mathbb N$. We have then 
    $$\Delta_{\mathcal P}(d) \supseteq \Delta_{\mathcal P}(d').$$
\end{prop}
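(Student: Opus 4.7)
The plan is to take $s \in \Delta_{\mathcal P}(d')$ and an arbitrary $A \in \mathcal P_{\max}(d)$, and show that $s \cdot A \in \mathcal P_{\min}(d)$, by first lifting $A$ to a $d'$-dimensional tuple in $\mathcal P_{\max}$ and then compressing back.

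For the lift, I would set $\tilde A := (A_1 \oplus 0_{d'-d}, \ldots, A_g \oplus 0_{d'-d}) \in \mathcal M_{d'}^{\mathrm{sa}}(\mathbb C)^g$. Since $0 \in \operatorname{int} \mathcal P$, the scalar tuple $(0,\ldots,0)$ belongs to $\mathcal P \subseteq \mathcal P_{\max}(d'-d)$, so by closure of $\mathcal P_{\max}$ under direct sums we have $\tilde A \in \mathcal P_{\max}(d')$. The hypothesis $s \in \Delta_{\mathcal P}(d')$ then yields $s \cdot \tilde A \in \mathcal P_{\min}(d')$.

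For the compression, let $V \colon \mathbb C^d \hookrightarrow \mathbb C^{d'}$ be the isometric embedding onto the first $d$ coordinates. The map $\Psi \colon \mathcal M_{d'}(\mathbb C) \to \mathcal M_d(\mathbb C)$, $\Psi(X) := V^\ast X V$, is UCP, and
\begin{equation*}
    \Psi(s_i \tilde A_i) = V^\ast (s_i A_i \oplus 0) V = s_i A_i
\end{equation*}
for every $i \in [g]$. Since $\mathcal P_{\min}$ is closed under UCP maps, applying $\Psi$ componentwise to $s \cdot \tilde A \in \mathcal P_{\min}(d')$ gives $s \cdot A \in \mathcal P_{\min}(d)$, which is what we wanted.

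There is no real obstacle here; the statement is essentially a consequence of the defining closure properties of matrix convex sets (direct sums and UCP maps) combined with the fact that $0 \in \mathcal P$, which allows the trivial padding of $A$ into a higher-dimensional element of $\mathcal P_{\max}$.
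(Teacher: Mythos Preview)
Your proof is correct and follows essentially the same approach as the paper: pad $A$ with zeros to lift it to $\mathcal P_{\max}(d')$, apply the inclusion constant there, and then compress back to dimension $d$. The only cosmetic difference is that you invoke the abstract closure properties of matrix convex sets (direct sums and UCP maps) where the paper verifies the corresponding facts by direct computation (checking the facet inequalities for the padded tuple, and truncating the POVM in the $\mathcal P_{\min}$ decomposition).
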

\begin{proof}
    Let $s \in \Delta_{\mathcal P}(d')$ and $A \in \mathcal P_{\max}(d)$. We claim that $A \oplus 0_{d'-d} \in \mathcal P_{\max}(d')$. To see this, consider some fixed defining hyperplane $h_j$ and compute
    $$\sum_{x=1}^g h_j(x) (A_x \oplus 0_{d'-d})  = \left( \sum_{x=1}^g h_j(x) A_x \right) \oplus 0_{d'-d} \leq I_d \oplus 0_{d'-d} \leq I_{d'}.$$
    Alternatively, we could have used that matrix convex sets are closed under direct sums. From the fact that $s$ is an inclusion constant at level $d'$, it follows that $s \cdot (A \oplus 0_{d'-d}) \in \mathcal P_{\min}(d')$. Hence, there exist elements $v_1, \ldots, v_k \in \mathcal P$ and a POVM $C_1, \ldots, C_k$ of size $d'$ such that, for all $x \in [g]$, 
    $$s_x A_x \oplus 0_{d'-d} = \sum_{i=1}^k v_i(x) C_i.$$
    Defining $\tilde C_i$ to be the  $d \times d$ corner of $C_i$, which is still a POVM, shows that 
    $$s_x A_x = \sum_{i=1}^k v_i(x) \tilde C_i,$$
    finishing the proof.
\end{proof}

 The following result shows that the set of inclusion constants behaves nicely with respect to the Cartesian product and direct sum operations. Note that one can easily generalize the inclusions below to more than two polytopes.

\begin{prop}
    Let $\mathcal P\subseteq \mathbb R^g$ and $\mathcal Q\subseteq \mathbb R^n$ be two polytopes containing $0$ in their interior. Then, for all $d \geq 1$, $\Delta_{\mathcal P \times \mathcal Q}(d) \supseteq \Delta_{\mathcal P}(d) \oplus \Delta_{\mathcal Q}(d)$
\end{prop}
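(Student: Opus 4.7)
The plan is to use convexity of $\Delta_{\mathcal P \times \mathcal Q}(d)$ combined with point (4) of Proposition \ref{prop:easy_consequences}, which allows us to handle one coordinate at a time.

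First, I would observe that the set $\Delta_{\mathcal P \times \mathcal Q}(d)$ is convex: given two inclusion constants $s^{(0)}, s^{(1)}$ and a convex combination parameter $\lambda \in [0,1]$, for any $A \in (\mathcal P \times \mathcal Q)_{\max}(d)$ both $s^{(0)} \cdot A$ and $s^{(1)} \cdot A$ lie in $(\mathcal P \times \mathcal Q)_{\min}(d)$, hence so does $(\lambda s^{(0)} + (1-\lambda) s^{(1)}) \cdot A$ by convexity of the minimal matrix convex set. Since $\Delta_{\mathcal P}(d) \oplus \Delta_{\mathcal Q}(d) = \operatorname{conv}(\Delta_{\mathcal P}(d) \times \{0\} \cup \{0\} \times \Delta_{\mathcal Q}(d))$, it therefore suffices to show the two embeddings
\begin{equation*}
\{(s,0) : s \in \Delta_{\mathcal P}(d)\} \subseteq \Delta_{\mathcal P \times \mathcal Q}(d) \quad \text{and} \quad \{(0,t) : t \in \Delta_{\mathcal Q}(d)\} \subseteq \Delta_{\mathcal P \times \mathcal Q}(d).
\end{equation*}

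For the first inclusion, fix $s \in \Delta_{\mathcal P}(d)$ and let $A = (A_1, A_2) \in (\mathcal P \times \mathcal Q)_{\max}(d)$. By point (1) of Proposition \ref{prop:easy_consequences}, this means $A_1 \in \mathcal P_{\max}(d)$ and $A_2 \in \mathcal Q_{\max}(d)$. By hypothesis $s \cdot A_1 \in \mathcal P_{\min}(d)$, and trivially $0 \in \mathcal Q_{\min}(d)$ (as $0 \in \operatorname{int} \mathcal Q$). Applying point (4) of Proposition \ref{prop:easy_consequences} with $q_1 = 1$, $q_2 = 0$ to the tuples $s \cdot A_1$ and $A_2$ yields
\begin{equation*}
(s, 0) \cdot (A_1, A_2) = (s \cdot A_1, \, 0) \in (\mathcal P \times \mathcal Q)_{\min}(d),
\end{equation*}
so $(s, 0) \in \Delta_{\mathcal P \times \mathcal Q}(d)$. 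The second inclusion is symmetric. Combining with convexity finishes the argument.

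There is essentially no obstacle here: the work was already done in Proposition \ref{prop:easy_consequences}, and the statement is a straightforward repackaging. The only slightly subtle point is recognizing that the direct sum on the right-hand side, being the convex hull of two sets both containing $0$ as an interior point, is captured entirely by the two embedded copies, so convexity of $\Delta_{\mathcal P \times \mathcal Q}(d)$ closes the proof.
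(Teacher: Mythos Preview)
Your proof is correct and follows essentially the same route as the paper: show $(s,0)$ and $(0,t)$ lie in $\Delta_{\mathcal P \times \mathcal Q}(d)$ via Proposition~\ref{prop:easy_consequences}, then conclude by convexity (which the paper states right after Definition~\ref{def:Delta-MCS}). One small wording slip: when you invoke point~(4) ``to the tuples $s\cdot A_1$ and $A_2$'', the hypothesis of~(4) requires the second tuple to lie in $\mathcal Q_{\min}(d)$, which you have not established for $A_2$; since $q_2=0$ you should instead apply~(4) to $s\cdot A_1$ and $0$ (which you already noted is in $\mathcal Q_{\min}(d)$), yielding the same conclusion.
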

\begin{proof}
    Consider $(A,B) \in (\mathcal P \times \mathcal Q)_{\max}(d)$, and $s \in \Delta_{\mathcal P}(d)$. From Proposition \ref{prop:easy_consequences}, we have $A \in \mathcal P_{\max}(d)$, $B \in \mathcal Q_{\max}(d)$, and also $s\cdot A \in \mathcal P_{\min}(d)$. Since $0 \in \mathcal Q_{\min}(d)$, by using Proposition \ref{prop:easy_consequences} again, we have $(s \cdot A, 0) \in (\mathcal P \times \mathcal Q)_{\min}(d)$, proving that $(s,0) \in \Delta_{\mathcal P \times \mathcal Q}(d)$. A similar reasoning shows that 
    $$t \in \Delta_{\mathcal Q}(d) \implies (0, t) \in \Delta_{\mathcal P \times \mathcal Q}(d),$$
    concluding the proof of the claim.
\end{proof}

\begin{cor}\label{cor:polysimplex}
    Let $\mathcal P_1, \ldots, \mathcal P_n$ be polytopes containing 0 in their interior, $n \in \mathbb N$, $s_i \in \Delta_{\mathcal P_i}(d)$ for all $i \in [n]$, and $\lambda = (\lambda_1, \ldots, \lambda_n)$ a probability vector. Then 
    $$(\lambda_1 s_1, \lambda_2 s_2, \ldots, \lambda_n s_n) \in \Delta_{\times_{i=1}^n \mathcal P_i}(d).$$
    In particular, if the $\mathcal P_i$ are simplices, then, for all $d \geq 1$,
    $$n^{-1}(1,1, \ldots, 1) \in \Delta_{\times_{i=1}^n \mathcal P_i}(d).$$
\end{cor}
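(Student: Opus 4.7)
The plan is to reduce the claim to Proposition \ref{prop:easy_consequences}, specifically to item (4) generalized inductively to $n$ polytopes (the generalization is flagged in the remark following that proposition). Concretely, I would take an arbitrary tuple $(A^{(1)}, \ldots, A^{(n)}) \in (\times_{i=1}^n \mathcal P_i)_{\max}(d)$ and first invoke item (1) of Proposition \ref{prop:easy_consequences} (again applied inductively) to conclude that $A^{(i)} \in (\mathcal P_i)_{\max}(d)$ for each $i \in [n]$. Since $s_i \in \Delta_{\mathcal P_i}(d)$ by assumption, it follows that $s_i \cdot A^{(i)} \in (\mathcal P_i)_{\min}(d)$ for every $i \in [n]$. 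Because $(\lambda_1, \ldots, \lambda_n)$ is a probability vector, we have $\lambda_i \geq 0$ and $\sum_i \lambda_i = 1 \leq 1$, so item (4) of Proposition \ref{prop:easy_consequences} (generalized to $n$ polytopes) yields
$$(\lambda_1 s_1 \cdot A^{(1)}, \ldots, \lambda_n s_n \cdot A^{(n)}) \in \left(\times_{i=1}^n \mathcal P_i\right)_{\min}(d).$$
Since $(A^{(1)}, \ldots, A^{(n)})$ was arbitrary in the maximal matrix convex set, this is exactly the statement $(\lambda_1 s_1, \ldots, \lambda_n s_n) \in \Delta_{\times_{i=1}^n \mathcal P_i}(d)$.

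For the simplex special case, I would combine the first part with Proposition \ref{prop:min-max-simplex}. If each $\mathcal P_i$ is a simplex, then $(\mathcal P_i)_{\min}(d) = (\mathcal P_i)_{\max}(d)$, so the all-ones vector $s_i = (1,1,\ldots,1)$ trivially belongs to $\Delta_{\mathcal P_i}(d)$. Choosing $\lambda_i = 1/n$ for all $i \in [n]$ and applying the first part of the corollary then produces $n^{-1}(1, 1, \ldots, 1) \in \Delta_{\times_{i=1}^n \mathcal P_i}(d)$.

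I do not anticipate any real obstacle: the substantive content has already been packaged in Proposition \ref{prop:easy_consequences}. The only subtle point is the iteration from two polytopes to $n$, which proceeds by a transparent induction (grouping $\mathcal P_1 \times \cdots \times \mathcal P_{n-1}$ as a single factor and applying the two-polytope version), and care with the probability-vector condition $\sum_i \lambda_i \leq 1$ that feeds into item (4).
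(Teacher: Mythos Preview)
Your proposal is correct and follows essentially the same route as the paper. The paper derives the first part as an immediate consequence of the preceding proposition $\Delta_{\mathcal P \times \mathcal Q}(d) \supseteq \Delta_{\mathcal P}(d) \oplus \Delta_{\mathcal Q}(d)$, whose proof is precisely the argument you sketch via items (1) and (4) of Proposition~\ref{prop:easy_consequences}, and handles the simplex case exactly as you do through Proposition~\ref{prop:min-max-simplex}.
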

\begin{proof} 
The claim about the polysimplex follows from the fact that for any simplex $\mathcal P$, $\mathcal P_{\min} = \mathcal P_{\max}$ (see Proposition \ref{prop:min-max-simplex}). This fact corresponds to the fact that mixing quantum measurements with white noise using parameter $1/n$ yields compatible measurements \cite{Heinosaari2016}.
\end{proof}

We derive in the remaining part of this section some  general bounds for the set of inclusion constants of polytopes, applying them to the case of the Birkhoff body $\mathcal B_N$ (and the pyramid polytope discussed at the end of Section \ref{sec:POVM-common-elements} in the end of the paper). We shall discuss three different methods: 
\begin{itemize}
\item comparing with other polytopes
\item symmetrization
\item linear programming bounds
\end{itemize}

\subsection{Relating different polytopes} \label{sec:different-polytopes}

We start with the first method, where one relates the inclusion constants for two polytopes. 

\begin{prop}\label{prop:comparing-Deltas}
Let $g \in \mathbb N$. Give two polytopes $\mathcal P, \mathcal Q \subseteq \mathbb R^g$ containing $0$ in their interior, define the sets
\begin{align*}
\mathcal I_{\mathcal P \to  \mathcal Q} &:= \{s \in \mathbb R^g \, : \, s\cdot \mathcal P \subseteq \mathcal Q \}\\
\mathcal I_{\mathcal Q \to \mathcal P} &:= \{u \in \mathbb R^g \, : \, u\cdot \mathcal Q \subseteq \mathcal P \}.
\end{align*}
Then, for all $d \geq 1$, $d \in \mathbb N$,
$$\mathcal I_{\mathcal P \to  \mathcal Q} \cdot \Delta_{\mathcal Q}(d) \cdot \mathcal I_{\mathcal Q \to  \mathcal P} \subseteq \Delta_{\mathcal P}(d).$$
\end{prop}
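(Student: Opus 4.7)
The plan is to chain together three elementary inclusions. Start with $s \in \mathcal I_{\mathcal P \to \mathcal Q}$, $t \in \Delta_{\mathcal Q}(d)$, $u \in \mathcal I_{\mathcal Q \to \mathcal P}$, and an arbitrary $A \in \mathcal P_{\max}(d)$. The goal is to show that the entry-wise scaled tuple $(s \cdot t \cdot u) \cdot A$ lies in $\mathcal P_{\min}(d)$, which will establish $s \cdot t \cdot u \in \Delta_{\mathcal P}(d)$ by definition.

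The first step is a lemma at the level of maximal matrix convex sets: if $s \cdot \mathcal P \subseteq \mathcal Q$, then $s \cdot \mathcal P_{\max}(d) \subseteq \mathcal Q_{\max}(d)$. This follows directly from Proposition \ref{prop:num_range_P_operators}, since $A \in \mathcal P_{\max}(d)$ means $\operatorname{conv}\mathcal W(A) \subseteq \mathcal P$, hence for any density matrix $\rho$,
\begin{equation*}
(s_1 \langle A_1, \rho\rangle, \ldots, s_g \langle A_g, \rho\rangle) = s \cdot (\langle A_x, \rho\rangle)_{x \in [g]} \in s \cdot \mathcal P \subseteq \mathcal Q,
\end{equation*}
so $s \cdot A \in \mathcal Q_{\max}(d)$. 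Applying this to the input tuple gives $s \cdot A \in \mathcal Q_{\max}(d)$.

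The second step invokes $t \in \Delta_{\mathcal Q}(d)$ directly: by definition, $t \cdot (s \cdot A) = (t \cdot s) \cdot A \in \mathcal Q_{\min}(d)$. The third step is the dual lemma at the level of minimal matrix convex sets: if $u \cdot \mathcal Q \subseteq \mathcal P$, then $u \cdot \mathcal Q_{\min}(d) \subseteq \mathcal P_{\min}(d)$. This follows from the extreme-point description in Eq.~\eqref{eq:definition-min}: any $B \in \mathcal Q_{\min}(d)$ decomposes as $B = \sum_j z_j \otimes C_j$ with $z_j \in \mathcal Q$ and $C$ a POVM, so
\begin{equation*}
u \cdot B = \sum_j (u \cdot z_j) \otimes C_j,
\end{equation*}
and each $u \cdot z_j \in u \cdot \mathcal Q \subseteq \mathcal P$, yielding $u \cdot B \in \mathcal P_{\min}(d)$. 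Applied to $B = (t \cdot s) \cdot A$, this gives $(u \cdot t \cdot s) \cdot A \in \mathcal P_{\min}(d)$, which is the desired conclusion.

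There is no real obstacle here; the proof is essentially a bookkeeping exercise showing that the scaling operation is compatible with both the hyperplane description (used for $\mathcal P_{\max}$) and the extreme point description (used for $\mathcal P_{\min}$) when the corresponding scaling inclusion holds at the polytope level. The only subtlety worth highlighting is that the two auxiliary lemmas are dual in character: the $\mathcal I_{\mathcal P \to \mathcal Q}$ factor acts on the ``facet'' side and propagates through $\mathcal P_{\max}$, while the $\mathcal I_{\mathcal Q \to \mathcal P}$ factor acts on the ``vertex'' side and propagates through $\mathcal P_{\min}$, reflecting the two dual pictures of polytopes emphasized in the introduction.
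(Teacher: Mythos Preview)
Your proof is correct and follows essentially the same approach as the paper's own proof: both use Proposition~\ref{prop:num_range_P_operators} to push the scaling $s$ through $\mathcal P_{\max}$ into $\mathcal Q_{\max}$, apply $t \in \Delta_{\mathcal Q}(d)$, and then use the POVM decomposition from Eq.~\eqref{eq:definition-min} to push the scaling $u$ through $\mathcal Q_{\min}$ into $\mathcal P_{\min}$. The only difference is that you package the first and third steps as separate lemmas, whereas the paper argues them inline.
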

\begin{proof}
Consider a tuple $A \in \mathcal M_d^{\mathrm{sa}}(\mathbb C)^g$ of $\mathcal P$ operators, and scaling vectors $s \in \mathcal I_{\mathcal P \to  \mathcal Q}$, $t \in \Delta_{\mathcal Q}(d)$, $u \in \mathcal I_{\mathcal Q \to \mathcal P}$. Since $A$ are $\mathcal P$-operators, we have 
$$\forall \rho \in \mathcal M_d^{1,+}(\mathbb C) \quad (\langle A_x, \rho \rangle)_{x \in [g]} \in \mathcal P \implies (\langle s_x A_x, \rho \rangle)_{x \in [g]} \in \mathcal Q,$$
that is $s \cdot A \in \mathcal Q_{\max}(d)$. In turn, using the fact that $t$ is an inclusion constant for $\mathcal Q$, we have that $t \cdot s \cdot A \in \mathcal Q_{\min}(d)$. In particular, this means that there exists a POVM $C_1, \ldots, C_k$ such that 
$$t \cdot s \cdot A = \sum_{i=1}^k w_i \otimes C_i,$$
where $w_1, \ldots, w_k \in \mathbb R^g$ are elements (e.g.~the extreme points) of $Q$. Scaling by $u$ gives
$$u \cdot t \cdot s \cdot A = \sum_{i=1}^k (u\cdot w_i) \otimes C_i,$$
with $u \cdot w_i \in \mathcal P$ now, proving the claim.
\end{proof}

We shall now apply the result above to the case of the Birkhoff body $\mathcal B_N$ and a \emph{polysimplex}, i.e.~a Cartesian product of simplices \cite[Section III]{jencova2018incompatible}, \cite[Section 5]{Bluhm2022GPT}. Note that this method has been used previously to derive inclusion constants for free spectrahedra by comparing them to more symmetric matrix convex sets for which the set of inclusion constants was better known \cite[Section 7]{bluhm2020compatibility}. Starting from this point, we are focusing on ``flat'' inclusion constants of the form $s(1,1,\ldots, 1)$, so we are going to simply identify the scalar $s$ with the corresponding flat vector.

Consider the polysimplex 
$$\mathcal Q:=\bigtimes_{i=1}^{N-1} \mathcal P_N \subseteq \mathbb R^{(N-1)^2} \cong \mathcal M_{N-1}(\mathbb R),$$
where $\mathcal P_N \subseteq R^{N-1}$ is the simplex defined in Eq.~\eqref{eq:def-simplex}. Notice that $\mathcal B_N \subseteq \mathcal Q$, since the extreme points of $\mathcal B_N$ are a subset of those of $\mathcal Q$; this shows that $1 \in \mathcal I_{\mathcal B_N \to \mathcal Q}$. In order to find the largest scalar $u \geq 0$ such that $u \in \mathcal I_{\mathcal Q \to \mathcal B_N}$, note that the extreme points of $\mathcal Q$ are of the form
$$q_f := -\frac{1}{N} J_{N-1} + \begin{bmatrix} e_{f(1)} \\ e_{f(2)} \\ \vdots \\ e_{f(N-1)} \end{bmatrix},$$
where $f:[N-1] \to [N]$ is an arbitrary function, and where we set $e_N = 0$. We have to determine by how much we have to scale these extreme points in order for them to satisfy the hyperplane equations Eqs.~\eqref{eq:BN-face-elt}-\eqref{eq:BN-face-sum}. Eqs.~\eqref{eq:BN-face-elt} and \eqref{eq:BN-face-row} are satisfied without scaling ($u \leq 1$). For Eq.~\eqref{eq:BN-face-col}, fix $j \in [N-1]$ and compute
$$\sum_{i=1}^{N-1} q_f(i,j) = -\frac{N-1}{N} + \left|\{i \in [N-1] \, : \, f(i) = j\}\right| \leq -\frac{N-1}{N} + N-1 = \frac{(N-1)^2}{N}.$$
Hence, for $u \cdot q_f \in \mathcal B_N$ to hold, we need, in the worst case, $u \leq 1/(N-1)^2$. We leave to the reader the case of the condition Eq.~\eqref{eq:BN-face-sum}, which yields the same inequality, $u \leq 1/(N-1)^2$. We conclude that 
$$\frac{1}{(N-1)^2} \in \mathcal I_{\mathcal Q \to \mathcal B_N}.$$
From Corollary \ref{cor:polysimplex} we have, for all $d \geq 1$,  $1/(N-1) \in \Delta_{\mathcal Q}(d)$. Putting all these facts into Proposition \ref{prop:comparing-Deltas}, we arrive at the following result. 

\begin{cor}\label{cor:BN-inclusion-conparison}
For any $N \geq 2$ and $d \geq 1$, 
$$\frac{1}{(N-1)^3} \in \Delta_{\mathcal B_N}(d).$$
\end{cor}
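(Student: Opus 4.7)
The proof will be a direct application of Proposition \ref{prop:comparing-Deltas} with $\mathcal P = \mathcal B_N$ and $\mathcal Q$ the polysimplex $\bigtimes_{i=1}^{N-1} \mathcal P_N$ introduced just above. All three ingredients needed to feed the proposition have in fact been assembled in the discussion leading up to the corollary, so the plan is mainly a matter of collecting them and multiplying.

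First I will observe that $1 \in \mathcal I_{\mathcal B_N \to \mathcal Q}$: this is because the extreme points of $\mathcal B_N$ (the shifted truncations $P_\pi^{(N-1)} - J_{N-1}/N$ of permutation matrices) form a subset of the extreme points of $\mathcal Q$, so $\mathcal B_N \subseteq \mathcal Q$. Second, I will recall that by Corollary \ref{cor:polysimplex} applied to the $N-1$ simplices $\mathcal P_N$, the flat vector with all entries equal to $1/(N-1)$ lies in $\Delta_{\mathcal Q}(d)$ for all $d \geq 1$. Third, I will invoke the computation carried out in the preceding paragraphs showing that scaling any extreme point $q_f$ of $\mathcal Q$ by $1/(N-1)^2$ lands it inside $\mathcal B_N$: the binding constraints are those in Eqs.~\eqref{eq:BN-face-col} and \eqref{eq:BN-face-sum}, each of which gives at worst $u \leq 1/(N-1)^2$, while Eqs.~\eqref{eq:BN-face-elt}--\eqref{eq:BN-face-row} are satisfied for any $u \leq 1$. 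Since scaling preserves convex hulls, this yields $1/(N-1)^2 \in \mathcal I_{\mathcal Q \to \mathcal B_N}$.

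Finally, Proposition \ref{prop:comparing-Deltas} immediately gives
\begin{equation*}
    \frac{1}{(N-1)^3} = 1 \cdot \frac{1}{N-1} \cdot \frac{1}{(N-1)^2} \in \mathcal I_{\mathcal B_N \to \mathcal Q} \cdot \Delta_{\mathcal Q}(d) \cdot \mathcal I_{\mathcal Q \to \mathcal B_N} \subseteq \Delta_{\mathcal B_N}(d),
\end{equation*}
which is the claim. There is no real obstacle here since the corollary is advertised as a direct consequence of the preceding machinery; the only mild subtlety is to remember that Proposition \ref{prop:comparing-Deltas} is stated for scaling \emph{vectors} and that flat scalars are to be understood as the constant vector with the corresponding value in every coordinate (so that the pointwise products make sense). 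Provided this convention is stated once, the argument is a one-line multiplication.
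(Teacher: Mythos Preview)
Your proposal is correct and follows essentially the same approach as the paper: both assemble the three ingredients $1 \in \mathcal I_{\mathcal B_N \to \mathcal Q}$, $1/(N-1) \in \Delta_{\mathcal Q}(d)$ from Corollary~\ref{cor:polysimplex}, and $1/(N-1)^2 \in \mathcal I_{\mathcal Q \to \mathcal B_N}$ from the explicit computation on the extreme points $q_f$, then apply Proposition~\ref{prop:comparing-Deltas}. Your remark about the flat-scalar convention is also explicitly addressed in the paper just before the corollary.
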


\subsection{Symmetrization} \label{sec:symmetrization}

In this section we show that matrix convex sets with \emph{symmetric} first level admit a non-trivial, dimension-dependent, inclusion constant vector. This vector does not depend on the length of the tuple. We start with a general result regarding matrix convex sets, specialize to polytope inclusion constants, and then derive constants for the Birkhoff body $\mathcal B_N$.

The following result is inspired by \cite[Proposition 8.1]{helton2019dilations} for free spectrahedra, which are a special class of matrix convex sets (see also \cite[Proposition 7.2]{bluhm2018joint}).
\begin{prop}\label{prop:symmetric-inclusion}
     Let $d \in \mathbb N$ and let $\mathcal F, \mathcal G$ be matrix convex sets such that $\pm \mathcal F(1) \subseteq \mathcal G(1)$. Then,
    \begin{align*}
        \frac{1}{2d-1} \mathcal F(d) \subseteq  \mathcal G(d) & \mathrm{\quad if~} d \mathrm{~is~even}, \\
        \frac{1}{2d+1} \mathcal F(d) \subseteq  \mathcal G(d) & \mathrm{\quad if~} d \mathrm{~is~odd}.
    \end{align*} 
\end{prop}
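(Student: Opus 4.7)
The approach is a dilation argument that reduces the inclusion to the easier observation that commuting tuples with joint spectrum in the first level of $\mathcal G$ belong to $\mathcal G$ at every level. First, I would record the structural consequence of the hypothesis: if $N = (N_1, \ldots, N_g)$ is a tuple of mutually commuting Hermitian operators on $\mathbb C^m$ whose joint spectrum satisfies $\sigma(N) \subseteq \pm \mathcal F(1) \subseteq \mathcal G(1)$, then $N \in \mathcal G(m)$. Indeed, simultaneous diagonalization gives $N = U\operatorname{diag}(\lambda^{(1)}, \ldots, \lambda^{(m)}) U^*$ with each joint eigenvalue $\lambda^{(j)} \in \mathcal G(1)$; since $\mathcal G$ is closed under direct sums (giving $\bigoplus_j \lambda^{(j)} I_1 \in \mathcal G(m)$) and under UCP conjugations by unitaries, one concludes $N \in \mathcal G(m)$.

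With this in hand, the main task reduces to constructing, for every Hermitian tuple $X \in \mathcal F(d)$, a finite-dimensional space $\mathcal K \supseteq \mathbb C^d$ and a commuting Hermitian tuple $N$ on $\mathcal K$ with $\sigma(N) \subseteq \pm \mathcal F(1)$, such that $X/c$ is the compression $V^* N V$ for an isometry $V : \mathbb C^d \hookrightarrow \mathcal K$, with $c = 2d-1$ (even $d$) or $c = 2d+1$ (odd $d$). Since compression by an isometry is a UCP map, the first paragraph then yields $X/c \in \mathcal G(d)$.

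For even $d$, I would build this dilation explicitly following the strategy of \cite[Proposition 7.2]{bluhm2018joint} and \cite[Proposition 8.1]{helton2019dilations}. The idea is to decompose $X$ using $2d-1$ "signed" configurations indexed by carefully chosen sign patterns on the spectral structure of the tuple, and to assemble these into a commuting block dilation whose joint spectral values come from $\pm \mathcal F(1)$; averaging over the $2d-1$ blocks reproduces $X/(2d-1)$ in the top $d \times d$ corner. For odd $d$, I would reduce to the even case by embedding $X \oplus 0 \in \mathcal F(d+1)$ (using that $0 \in \mathcal F(1)$ in the intended polytope setting where the origin lies in the interior, which gives $0 \in \mathcal F(n)$ for all $n$ via direct sums and UCP maps) and invoking the even-dimensional result in dimension $d+1$ to obtain the constant $1/(2(d+1)-1) = 1/(2d+1)$, then restricting back to the original $d$-dimensional corner.

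The main obstacle is the explicit construction of the commuting dilation in the even case, and in particular verifying that the precise factor $2d-1$ is achievable. The sharp constant reflects a subtle interplay between the spectral geometry of the Hermitian tuple and the combinatorics of sign patterns; proving that $2d-1$ blocks suffice (rather than $2d$ or more) requires a careful counting argument that exploits parity of the dimension. Adapting the free-spectrahedron construction to the present setting of general matrix convex sets requires ensuring that only the hypothesis $\pm \mathcal F(1) \subseteq \mathcal G(1)$ and the closure properties of matrix convex sets are used, which should be possible since the spectrum of the dilation is a level-$1$ object.
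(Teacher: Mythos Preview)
Your high-level outline matches the paper's: reduce odd $d$ to even $d+1$ by padding, and in the even case produce elements of $\mathcal G(d)$ coming from level-$1$ data, then average. However, the proposal is not a proof --- the decisive step is precisely the one you flag as ``the main obstacle'' and leave undone. Saying you would ``decompose $X$ using $2d-1$ signed configurations'' and then citing \cite{bluhm2018joint,helton2019dilations} does not supply the construction, and those references obtain only the constant $1/(2d)$; the improvement to $1/(2d-1)$ is exactly what needs to be argued here. There is also a small inaccuracy in your dilation picture: the scalar entries one actually uses are $\operatorname{Re}(Y_i)_{s,t}$, $\operatorname{Im}(Y_i)_{s,t}$, and $(Y_i)_{s,s}$, which lie in $\mathcal G(1)$ (as midpoints of points in $\pm\mathcal F(1)$) but need not lie in $\pm\mathcal F(1)$ itself, so ``$\sigma(N)\subseteq\pm\mathcal F(1)$'' should read ``$\sigma(N)\subseteq\mathcal G(1)$''.

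The paper's proof supplies the missing combinatorial idea. One writes each $Y_\ell$ as its diagonal part plus the real and imaginary off-diagonal parts, and observes via the vectors $e_s$, $e^{\pm}_{s,t}$, $\phi^{\pm}_{s,t}$ that each individual entry datum lies in $\mathcal G(1)$ (with both signs). The key step is to group the off-diagonal positions $(s,t)$, $s<t$, into $d-1$ perfect matchings of $[d]$ --- this is possible precisely when $d$ is even (a $1$-factorization of $K_d$). For each matching one builds, via direct sum and a single unitary conjugation in dimension $d$, one element of $\mathcal G(d)$ carrying the real off-diagonal contribution of that matching, and a second carrying the imaginary contribution. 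Together with the single diagonal term this gives $2(d-1)+1=2d-1$ elements of $\mathcal G(d)$ whose uniform convex combination is $Y/(2d-1)$. This matching argument is the content that your proposal is missing; without it one only recovers weaker constants such as $1/(2d)$ or $1/d^2$. For the odd case, note that you do not actually need $0\in\mathcal F(1)$: padding by any $x\in\mathcal F(1)$ via $X\oplus x\in\mathcal F(d+1)$ and then compressing back already gives $\frac{1}{2d+1}X\in\mathcal G(d)$.
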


\begin{proof}
    Let $e_i$, $i \in [d]$, be the standard basis. We define for $s \neq t$
    \begin{align*}
        e^{\pm}_{s,t} &= \frac{1}{\sqrt{2}} (e_s \pm e_t),\\
        \phi^{\pm}_{s,t} &= \frac{1}{\sqrt{2}} (e_s \pm i e_t).
    \end{align*}
    It is easy to see that $X \mapsto (e^{\pm}_{s,t})^\ast X e^{\pm}_{s,t}$, $X \mapsto e_{t}^\ast X e_t$ and $X \mapsto (\phi^{\pm}_{s,t})^\ast X \phi^{\pm}_{s,t}$ are UCP maps from $\mathcal M_d(\mathbb C)$ to $\mathbb C$. 

Given $(Y_1, \ldots, Y_g) \in \mathcal F(d)$, it follows that, for $s \neq t \in [d]$,
    \begin{align*}
       \pm \left[e_s^\ast Y_i e_s\right]_{i \in [g]} &\in \pm \mathcal F(1) \subseteq \mathcal G(1)\\    
       \pm \left[(e^{+}_{s,t})^\ast Y_i e^{+}_{s,t}\right]_{i \in [g]} &\in \pm \mathcal F(1) \subseteq \mathcal G(1)\\
       \pm \left[(e^{-}_{s,t})^\ast Y_i e^{-}_{s,t}\right]_{i \in [g]} &\in \pm \mathcal F(1) \subseteq \mathcal G(1)\\
       \pm \left[(\phi^{+}_{s,t})^\ast Y_i \phi^{+}_{s,t}\right]_{i \in [g]} &\in \pm \mathcal F(1) \subseteq \mathcal G(1)\\
       \pm \left[(\phi^{-}_{s,t})^\ast Y_i \phi^{-}_{s,t}\right]_{i \in [g]} &\in \pm \mathcal F(1) \subseteq \mathcal G(1).  
   \end{align*}
   Note that we can recover the real and imaginary parts of the entries of an arbitrary matrix $X$ from the quantities above:
   $$X_{ss} = e_s^* X e_s, \quad \operatorname{Re}X_{st} = \frac 1 2 \left[ (e^{+}_{s,t})^\ast X e^{+}_{s,t} - (e^{-}_{s,t})^\ast X e^{-}_{s,t}\right], \quad \operatorname{Im}X_{st} = \frac 1 2 \left[ (\phi^{-}_{s,t})^\ast X \phi^{-}_{s,t} - (\phi^{+}_{s,t})^\ast X \phi^{+}_{s,t}\right].$$
   We conclude that, for all $s$, $t \in [d]$,    
\begin{align*}
    &\pm(\operatorname{Re}(Y_1)_{s,t}, \ldots, \operatorname{Re}(Y_g)_{s,t}) \in \mathcal G(1), \\
    &\pm(\operatorname{Im}(Y_1)_{s,t}, \ldots, \operatorname{Im}(Y_g)_{s,t}) \in \mathcal G(1).
\end{align*}
Let us consider the set 
\begin{equation*}
    \mathcal I := \{(s,t): s, t \in [d], s<t\}.
\end{equation*}
Without loss of generality, we can assume that the dimension $d$ is even by going to $d+1$ if necessary and using Proposition \ref{prop:inclusion-constants-dimension-decreasing}. Then, we can partition $\mathcal I$ into $d-1$ subsets $\mathcal J_k$ of $d/2$ tuples each such that $\cup_{(s,t) \in \mathcal J_k}\{s,t\} = [d]$ for all $k \in [d-1]$, i.e., no index appears twice. This partitioning is possible, since it is equivalent to an edge-coloring of the complete graph $K_d$ such that the edges of each color form a perfect matching. The edges of each color $k$ then correspond to $\mathcal J_k$. Such a coloring exists for even $d$, e.g., by Baranyai's theorem (see, e.g., \cite{van2001course}). 

Let us fix $k \in [d-1]$ for now and consider $\mathcal J = \{(s_1, t_1), \ldots, (s_{d/2}, t_{d/2})\}$. As matrix convex sets are closed under direct sums, it follows that
\begin{align}
    (\operatorname{Re}(Y_i)_{s_1,t_1}\oplus \ldots \oplus \operatorname{Re}(Y_i)_{s_{d/2},t_{d/2}} \oplus-\operatorname{Re}(Y_i)_{s_1,t_1}\oplus \ldots \oplus - \operatorname{Re}(Y_i)_{s_{d/2},t_{d/2}})_{i \in [g]} &\in \mathcal G(d), \label{eq:real-vector}\\
    (-\operatorname{Im}(Y_i)_{s_1,t_1}\oplus \ldots \oplus -\operatorname{Im}(Y_i)_{s_{d/2},t_{d/2}} \oplus\operatorname{Im}(Y_i)_{s_1,t_1}\oplus \ldots \oplus  \operatorname{Im}(Y_i)_{s_{d/2},t_{d/2}})_{i \in [g]} &\in \mathcal G(d) \label{eq:imaginary-vector}\\
    ((Y_i)_{1,1}\oplus \ldots \oplus (Y_i)_{d,d})_{i \in [g]} &\in \mathcal G(d). \nonumber
\end{align}
Furthermore, matrix convex sets are closed under unitary conjugation. Let us define a unitary $U$ that maps 
\begin{align*}
    e_j \mapsto e^+_{s_j, t_j} \quad \forall j \in [d/2] \\
     e_{d/2+j} \mapsto e^-_{s_j, t_j} \quad \forall j \in [d/2] 
\end{align*}
and a unitary $V$ that maps 
\begin{align*}
    e_j \mapsto \phi^+_{s_j, t_j} \quad \forall j \in [d/2] \\
     e_{d/2+j} \mapsto \phi^-_{s_j, t_j} \quad \forall j \in [d/2]
\end{align*}
Thus, conjugating Eq.~\eqref{eq:real-vector} by $U$ and Eq.~\eqref{eq:imaginary-vector} by $V$, we obtain that 
\begin{align*}
    &\left(\sum_{j \in [d/2]}\operatorname{Re}(Y_\ell)_{s_j,t_j} (e_{s_j} e_{t_j}^\ast + e_{t_j}e_{s_j}^\ast)\right)_{\ell \in [g]} \in \mathcal G(d), \\
    &\left(\sum_{j \in [d/2]}i\operatorname{Im}(Y_\ell)_{s_j,t_j} (e_{s_j} e_{t_j}^\ast - e_{t_j}e_{s_j}^\ast)\right)_{\ell \in [g]} \in \mathcal G(d),
\end{align*}
because
$e^{+}_{s,t}(e^{+}_{s,t})^* - e^{-}_{s,t}(e^{-}_{s,t})^* = e_s e_t^\ast + e_t e_s^\ast$ and $\phi^{+}_{s,t}(\phi^{+}_{s,t})^* - \phi^{-}_{s,t}(\phi^{-}_{s,t})^* = i(-e_s e_t^\ast + e_t e_s^\ast)$. This construction can be repeated for all $\mathcal J_k$, $k \in [d-1]$.
Taking uniform convex combinations of these elements, we infer that for even $d$
\begin{equation*}
   \frac{1}{2d-1} (Y_1, \ldots, Y_g) \in \mathcal G(d).
\end{equation*}
As $(Y_1, \ldots, Y_g) \in \mathcal F(d)$ was arbitrary, the assertion follows. 
\end{proof}

Note that the inclusion constant obtained for free spectrahedra in \cite{bluhm2018joint} is $2d$, which is slightly worse than our result for even $d$. Remark 7.5 in \cite{bluhm2018joint} indicates that our result for even $d$ is optimal.

\begin{cor}\label{cor:inclusion-symmetric}
    Let $\mathcal P \in \mathbb R^g$ be a \emph{symmetric} polytope, i.e.~$\mathcal P = -\mathcal P$. Then, for all $d \geq 2$, 
    $$\frac{1}{\delta}\Big(\underbrace{1, 1, \ldots, 1}_{g \text{ times}} \Big) \in \Delta_{\mathcal P}(d),$$
    where
    \begin{equation}\label{eq:def-delta}
        \delta = \begin{cases}
        2d-1 &\qquad \text{ if $d$ is even}\\
        2d+1 &\qquad \text{ if $d$ is odd}.
    \end{cases}
    \end{equation}
    
\end{cor}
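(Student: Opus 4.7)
The plan is to obtain the corollary as a direct specialization of Proposition \ref{prop:symmetric-inclusion}, applied to the matrix convex sets $\mathcal F = \mathcal P_{\max}$ and $\mathcal G = \mathcal P_{\min}$. Both are matrix convex sets (by construction) and both have first level equal to $\mathcal P$, since $\mathcal P_{\max}(1) = \mathcal P_{\min}(1) = \mathcal P$ by the definitions in Section~\ref{sec:preliminaries}. So the only hypothesis to check is that $\pm \mathcal F(1) \subseteq \mathcal G(1)$, which here reads $\pm \mathcal P \subseteq \mathcal P$. This is immediate from the assumed symmetry $\mathcal P = -\mathcal P$.

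With the hypothesis in hand, Proposition \ref{prop:symmetric-inclusion} yields $\frac{1}{\delta} \mathcal P_{\max}(d) \subseteq \mathcal P_{\min}(d)$, where $\delta$ is defined as in Eq.~\eqref{eq:def-delta}. Since the scaling on the left-hand side is isotropic, this inclusion is precisely the statement that the flat vector $s = \frac{1}{\delta}(1, 1, \ldots, 1) \in \mathbb R^g$ satisfies $s \cdot \mathcal P_{\max}(d) \subseteq \mathcal P_{\min}(d)$, i.e.\ $s \in \Delta_{\mathcal P}(d)$, as desired.

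I do not expect any real obstacle here: the proof is essentially a one-line application of the already-established Proposition \ref{prop:symmetric-inclusion}, with the symmetry assumption on $\mathcal P$ providing exactly the input it requires. The only thing worth being a bit careful about is that Proposition \ref{prop:symmetric-inclusion} was stated for matrix convex sets in general (and not specifically for those of the form $\mathcal P_{\max}$ and $\mathcal P_{\min}$), so one should briefly point out that both $\mathcal P_{\max}$ and $\mathcal P_{\min}$ are indeed matrix convex and share first level $\mathcal P$, before invoking it.
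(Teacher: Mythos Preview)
Your proposal is correct and matches the paper's approach: the corollary is stated immediately after Proposition~\ref{prop:symmetric-inclusion} without a separate proof, precisely because it is the specialization $\mathcal F = \mathcal P_{\max}$, $\mathcal G = \mathcal P_{\min}$ that you describe. Your verification that both are matrix convex sets with first level $\mathcal P$ and that the symmetry hypothesis $\pm\mathcal P \subseteq \mathcal P$ holds is exactly what is needed.
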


In order to apply the corollary above to the Birkhoff body $\mathcal B_N$, we first have to symmetrize it: we seek the best (i.e.~largest) constant $s \in [0,1]$ such that 
\begin{equation}\label{eq:s-BN}
    s (-\mathcal B_N) \subseteq \mathcal B_N \iff s \underbrace{\operatorname{conv}(-\mathcal B_N \cup \mathcal B_N)}_{=: \mathcal P} \subseteq \mathcal B_N.
\end{equation}

In this way, we have, using Propositions  \ref{prop:symmetric-inclusion} and \ref{prop:polytope-inclusion}: 
$$X \in (\mathcal B_N)_{\max}(d) \implies X \in \mathcal P_{\max}(d) \implies  \frac 1 \delta \cdot X \in \mathcal P_{\min}(d) \implies \frac s \delta \cdot X \in (\mathcal B_N)_{\min}(d).$$

Let us now compute the best constant $s$ in Eq.~\eqref{eq:s-BN}. To do this, we have to find the largest value the facet inequalities from Proposition \ref{prop:birkhoff-body} attain on the negative of the extremal points of $\mathcal B_N$. The facets of $\mathcal B_N$ from Eqs.~\eqref{eq:BN-face-elt}-\eqref{eq:BN-face-sum} correspond to the matrices
$$h_{ij} = \begin{cases}
    -N e_i e_j^* &\qquad \text{ if } i,j\in [N-1]\\
    N \sum_{k=1}^{N-1} e_i e_k^* &\qquad \text{ if } i\in [N-1], \, j=N\\
    N \sum_{k=1}^{N-1} e_k e_j^* &\qquad \text{ if } j\in [N-1], \, i = N\\
    -N \sum_{i,j=1}^{N-1} e_i e_j^* &\qquad \text{ if } i,j = N.
\end{cases}$$
It is easy to see that the maximum value of the quantities
$$\left\langle h_{ij} , -\left(P_\sigma^{(N-1)} - \frac J N\right) \right \rangle$$
is $N-1$, for all $i,j \in [N]$ and $\sigma$ a permutation of $N$ elements. Hence, the largest $s$ for which Eq.~\eqref{eq:s-BN} holds is $s=1/(N-1)$. We have thus the following corollary regarding the Birkhoff body. 

\begin{cor}\label{cor:BN-inclusion-symmetrization}
For any $N \geq 2$ and $d \geq 1$, 
$$\frac{1}{(N-1)\delta} \in \Delta_{\mathcal B_N}(d),$$
where $\delta$ is the dimension dependent constant from Eq.~\eqref{eq:def-delta}.
\end{cor}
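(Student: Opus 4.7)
The plan is to reduce Corollary \ref{cor:BN-inclusion-symmetrization} to the symmetric case handled by Corollary \ref{cor:inclusion-symmetric}, by sandwiching $\mathcal B_N$ between itself and a suitable scaling of its symmetrization $\mathcal P := \operatorname{conv}(\mathcal B_N \cup -\mathcal B_N)$, then invoking Proposition \ref{prop:comparing-Deltas} to transport inclusion constants back to $\mathcal B_N$.

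First I would observe that $\mathcal P$ is a symmetric polytope ($\mathcal P = -\mathcal P$ by construction) containing $0$ in its interior (inherited from $\mathcal B_N$, which has $0$ in its interior by Proposition \ref{prop:birkhoff-body}). Corollary \ref{cor:inclusion-symmetric} then yields the flat inclusion constant $1/\delta \in \Delta_{\mathcal P}(d)$ for all $d \geq 2$; the case $d=1$ is trivial since $(\mathcal B_N)_{\max}(1) = \mathcal B_N = (\mathcal B_N)_{\min}(1)$.

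Next, I would compute the largest $s \in (0,1]$ with $s \mathcal P \subseteq \mathcal B_N$. Since $\mathcal B_N \subseteq \mathcal P$ is tautological, this is equivalent to $s(-\mathcal B_N) \subseteq \mathcal B_N$, so by extreme point / facet duality it reduces to maximizing $\langle h_{ij}, -(P_\sigma^{(N-1)} - J_{N-1}/N)\rangle$ over $\sigma \in \mathcal S_N$ and the four classes of facet normals of $\mathcal B_N$ from Proposition \ref{prop:birkhoff-body}. A short case analysis across the entry, row-sum, column-sum, and total-sum facets shows the maximum equals $N-1$ in each class — achieved respectively by any $\sigma$ with $\sigma(i) = j$, any $\sigma$ with $\sigma(i) = N$, any $\sigma$ with $\sigma^{-1}(j) = N$, and any $\sigma$ fixing $N$. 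Hence $s = 1/(N-1)$ is optimal, i.e., $1/(N-1) \in \mathcal I_{\mathcal P \to \mathcal B_N}$.

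Finally, I would apply Proposition \ref{prop:comparing-Deltas} with the two polytopes being $\mathcal B_N$ and $\mathcal P$: the trivial inclusion $\mathcal B_N \subseteq \mathcal P$ gives $1 \in \mathcal I_{\mathcal B_N \to \mathcal P}$, the computation above gives $1/(N-1) \in \mathcal I_{\mathcal P \to \mathcal B_N}$, and the symmetrization step gives $1/\delta \in \Delta_{\mathcal P}(d)$; multiplying these yields $\frac{1}{(N-1)\delta} \in \Delta_{\mathcal B_N}(d)$, as claimed. The main obstacle is the four-way case analysis for $s$: it is elementary but requires care to identify the worst-case permutation for each facet type; the pleasant coincidence is that all four worst cases give the same value $N-1$, so no facet type dominates the bound.
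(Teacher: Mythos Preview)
Your proposal is correct and follows essentially the same approach as the paper: define the symmetrization $\mathcal P = \operatorname{conv}(\mathcal B_N \cup -\mathcal B_N)$, use Corollary~\ref{cor:inclusion-symmetric} to obtain $1/\delta \in \Delta_{\mathcal P}(d)$, compute the optimal $s = 1/(N-1)$ for $s(-\mathcal B_N) \subseteq \mathcal B_N$ via the facet/vertex pairing, and then transport back to $\mathcal B_N$. The only cosmetic difference is that you invoke Proposition~\ref{prop:comparing-Deltas} as a packaged statement, whereas the paper writes the chain $(\mathcal B_N)_{\max}(d) \subseteq \mathcal P_{\max}(d) \to \mathcal P_{\min}(d) \to (\mathcal B_N)_{\min}(d)$ directly using Propositions~\ref{prop:symmetric-inclusion} and~\ref{prop:polytope-inclusion}.
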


\subsection{Inclusion constants from linear programming} \label{sec:linear-programming}

Recall that a polytope $\mathcal P$ has two equivalent representations: one in terms of vertices (the ``V'' representation) and one in terms of supporting hyperplanes (the ``H'' representation). To the $k$ extreme points $v_1, \ldots, v_k \in \mathbb R^g$ of a polytope $\mathcal P$ we associate the matrix 
$$V := \sum_{i=1}^k v_i e_i^* \in \mathcal M_{g \times k}(\mathbb R)$$
having the $v_i$ as columns. Similarly, if $\{x \, : \, \langle h_j, x \rangle \leq 1\}_{j \in [r]}$ are the halfspaces defining $\mathcal P$ (recall that $\mathcal P$ contains 0 in its interior), we introduce the matrix 
$$H := \sum_{j=1}^r e_j^* h_j \in \mathcal M_{r \times g}(\mathbb R)$$
having the $h_j$ as columns. We extend these matrices by appending ones (we denote by $1_n \in \mathbb R^n$ the all-1 vector): 
$$\hat V := \begin{bmatrix} V \\ 1_k^\top \end{bmatrix} \in \mathcal M_{(g+1) \times k}(\mathbb R)$$
and 
$$\hat H := \begin{bmatrix} -H & 1_r\end{bmatrix} \in \mathcal M_{r \times (g+1)}(\mathbb R).$$
The matrices $\hat V$ and $\hat H$ of the polytope $\mathcal P$ are associated to the \emph{slack matrix} of $\mathcal P$ \cite{yannakakis1991expressing}: $S_{\mathcal P} = \hat H \hat V$.

In the same vein, to a $g$-tuple $A$ of self-adjoint operators, we associate the $(g+1)$-tuple $$\hat A := \begin{bmatrix} A \\ I_d \end{bmatrix} \in \mathbb R^{g+1} \otimes \mathcal M^{sa}_d(\mathbb C).$$

\begin{lem}\label{lem:P-operators-hat-H}
Let $g$, $r\in \mathbb N$. A $g$-tuple of self-adjoint operators $A$ are $\mathcal P$-operators iff $\hat H \hat A$ is entrywise positive semidefinite. In particular, for a vector $x \in \mathbb R^g$, $x \in \mathcal P \iff \hat H \begin{bsmallmatrix}x\\1\end{bsmallmatrix} \in \mathbb R_+^r$.
\end{lem}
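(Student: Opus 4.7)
The plan is to observe that both statements reduce to a direct entrywise calculation of $\hat H \hat A$ (respectively $\hat H \begin{bsmallmatrix}x\\1\end{bsmallmatrix}$), followed by an appeal to Proposition \ref{prop:P-hj} (respectively the halfspace description of $\mathcal P$).

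First I would unpack the block structure. Since $\hat H = \begin{bmatrix} -H & 1_r \end{bmatrix}$ with the $j$-th row of $H$ being $h_j^\top$, and since $\hat A$ stacks the $g$-tuple $A$ on top of $I_d$, the $j$-th entry of the column vector $\hat H \hat A \in \mathbb R^r \otimes \mathcal M_d^{\mathrm{sa}}(\mathbb C)$ is
\[
(\hat H \hat A)_j \;=\; -\sum_{x=1}^g h_j(x)\, A_x \;+\; I_d \;=\; I_d - \sum_{x=1}^g h_j(x)\, A_x.
\]
Thus $\hat H \hat A$ is entrywise positive semidefinite if and only if $\sum_{x=1}^g h_j(x) A_x \leq I_d$ for every $j \in [r]$. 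By Proposition \ref{prop:P-hj}, this is precisely the condition that $A \in \mathcal P_{\max}(d)$, which by Proposition \ref{prop:operator_compatible_max_min} is the same as $A$ being $\mathcal P$-operators. This proves the first assertion.

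For the ``in particular'' part, the same computation in the scalar case $d=1$ gives
\[
\hat H \begin{bmatrix} x \\ 1 \end{bmatrix}_j \;=\; 1 - \langle h_j, x \rangle,
\]
so $\hat H \begin{bsmallmatrix}x\\1\end{bsmallmatrix} \in \mathbb R_+^r$ if and only if $\langle h_j, x \rangle \leq 1$ for all $j \in [r]$, which is exactly the halfspace description of $\mathcal P$ used to define the rows of $H$. There is no real obstacle here; the lemma is essentially a restatement of Proposition \ref{prop:P-hj} in the compact matrix notation just introduced, and the slight care needed is only to keep track of the block form of $\hat H$ and $\hat A$ and to identify $I_d$ (the ``$+1_r$'' column contribution) as the right-hand side of the facet inequalities.
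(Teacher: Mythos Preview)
Your proof is correct and follows essentially the same approach as the paper: both compute the $j$-th block of $\hat H \hat A$ as $I_d - \sum_{x=1}^g h_j(x) A_x$ and invoke Proposition \ref{prop:P-hj}. Your version is slightly more explicit in spelling out the scalar ``in particular'' case and in citing Proposition \ref{prop:operator_compatible_max_min} for the equivalence between $\mathcal P$-operators and membership in $\mathcal P_{\max}(d)$, but these are cosmetic differences.
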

\begin{proof}
    Requiring that the $r$ blocks of $\hat H \hat A$ are positive semidefinite is equivalent to: 
    $$\forall j \in [r] \qquad \sum_{x=1}^g \hat H_{j,x}A_x + 1 \cdot I_d = -\sum_{x=1}^g h_j(x)A_x + I_d \geq 0,$$
which is precisely the condition that $A \in \mathcal P_{\max}(d)$ from Proposition \ref{prop:P-hj}.    
    \end{proof}

\begin{lem}\label{lem:P-compatible-hat-V}
Let $g$, $k\in \mathbb N$. A $g$-tuple of self-adjoint operators $A$ are $P$-compatible iff there exists an entrywise positive $k$-tuple $C$ such that  $\hat V C = \hat A$. 
\end{lem}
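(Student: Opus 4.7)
The plan is to prove this by direct unpacking of the definitions, running in parallel with the proof of Lemma \ref{lem:P-operators-hat-H}. First I would observe that $\hat V$ is a $(g+1) \times k$ matrix whose last row consists entirely of ones, and that $\hat A$ is the $(g+1)$-tuple whose last entry is $I_d$. Given a $k$-tuple $C = (C_1, \ldots, C_k)$ which is entrywise positive (meaning $C_i \geq 0$ for every $i$), the equation $\hat V C = \hat A$ then decouples, row by row, into the $g$ operator identities
\begin{equation*}
    \sum_{i=1}^k v_i(x) C_i = A_x, \qquad x \in [g],
\end{equation*}
coming from the first $g$ rows of $\hat V$, together with the single normalization
\begin{equation*}
    \sum_{i=1}^k C_i = I_d
\end{equation*}
coming from the all-ones bottom row.

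Next I would match these two conditions against Definition \ref{defi:P-operators-P-compatible}. The first $g$ identities are precisely the decomposition $A = \sum_{i=1}^k v_i \otimes C_i$ in terms of the extreme points of $\mathcal P$, while the normalization, combined with the entrywise positivity $C_i \geq 0$, is exactly the statement that $(C_1, \ldots, C_k)$ is a POVM. This gives the ``if'' direction at once, and the ``only if'' direction is obtained by reading the same chain of equivalences backward from any POVM $C$ that certifies $\mathcal P$-compatibility.

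There is really no obstacle here; the content of the lemma is a notational recasting of $\mathcal P$-compatibility as a linear-algebraic feasibility condition on the slack-matrix data $\hat V$. The reason to write it this way, paired with Lemma \ref{lem:P-operators-hat-H}, is to prepare the stage for Theorem \ref{thm:bound-from-LP}: combining the two lemmas, $s \in \Delta_{\mathcal P}(d)$ follows whenever one can produce an entrywise non-negative $T$ satisfying $\operatorname{diag}(s_1,\ldots, s_g, 1) = \hat V T \hat H$, because then $T \hat H \hat A$ gives the required POVM decomposition of $s \cdot A$ from an arbitrary $\mathcal P$-operator tuple $A$.
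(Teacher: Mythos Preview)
Your proof is correct and follows essentially the same approach as the paper: you unpack $\hat V C = \hat A$ row by row into the $g$ decomposition identities $A_x = \sum_i v_i(x) C_i$ plus the normalization $\sum_i C_i = I_d$, which together with $C_i \geq 0$ is precisely the POVM witnessing $\mathcal P$-compatibility in Definition~\ref{defi:P-operators-P-compatible}. The additional paragraph explaining the role of the lemma in Theorem~\ref{thm:bound-from-LP} is accurate and helpful context, though not part of the proof itself.
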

\begin{proof}
    The equation for the extended vectors is equivalent to 
    $$VC = A \quad \text{ and } \quad 1_k^\top C = I_d.$$
    While the latter equation is equivalent to the normalization condition $\sum_{i=1}^k C_i = I_d$, the former is equivalent to 
    $$ \sum_{i=1}^k v_i  \otimes C_i  = \sum_{x=1}^g  e_x \otimes A_x,$$
    where we recall that $v_1, \ldots, v_k \in \mathbb R^g$ are the extreme points of $\mathcal P$. This, in turn, gives
    $$\forall x \in [g] \qquad A_x = \sum_{i=1}^k v_i(x) C_i$$
    which is precisely the condition that $A \in \mathcal P_{\min}(d)$ from Definition \ref{defi:P-operators-P-compatible}.
\end{proof}

Note that polytopes $P \subseteq \mathbb R^g$ containing zero in their interior have at least $g+1$ extreme points.

\begin{thm}\label{thm:bound-from-LP}
    Let $d$, $g$, $k$, $r \in \mathbb N$. Given $s \in \mathbb R^g$, if there exists an entrywise non-negative matrix $T \in \mathcal M_{k \times r}(\mathbb R_+)$ such that 
    $$\operatorname{diag}(s_1, s_2, \ldots, s_g, 1)=:\hat D_s =\hat V T \hat H ,$$
    then $s \in \Delta_{\mathcal P}(d)$, for all $d \geq 1$.
\end{thm}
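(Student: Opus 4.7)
The plan is to combine Lemmas \ref{lem:P-operators-hat-H} and \ref{lem:P-compatible-hat-V} with the factorization hypothesis on $\hat D_s$. Given an arbitrary $A \in \mathcal P_{\max}(d)$, I want to exhibit an entrywise positive $k$-tuple $C$ of operators witnessing that $s \cdot A \in \mathcal P_{\min}(d)$ via Lemma \ref{lem:P-compatible-hat-V}, i.e., such that $\hat V C = \widehat{s \cdot A} = \hat D_s \hat A$.

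First, I observe that by Lemma \ref{lem:P-operators-hat-H}, the $r$-tuple $\hat H \hat A$ consists of positive semidefinite $d \times d$ matrices, one for each facet of $\mathcal P$. The candidate I propose is
$$C := T \, \hat H \hat A,$$
interpreted as a $k$-tuple of $d \times d$ matrices obtained by taking non-negative linear combinations of the PSD blocks of $\hat H \hat A$ with coefficients from the entrywise non-negative matrix $T$. Explicitly, $C_i = \sum_{j=1}^r T_{ij} (\hat H \hat A)_j$ for each $i \in [k]$. Since the cone of PSD matrices is closed under non-negative combinations, each $C_i$ is PSD, so $C$ is entrywise positive in the sense required by Lemma \ref{lem:P-compatible-hat-V}.

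Next, I verify the matrix equation: using the hypothesis $\hat V T \hat H = \hat D_s$, one computes
$$\hat V C = \hat V T \hat H \hat A = \hat D_s \hat A = \begin{bmatrix} s_1 A_1 \\ \vdots \\ s_g A_g \\ I_d \end{bmatrix} = \widehat{s \cdot A}.$$
By Lemma \ref{lem:P-compatible-hat-V}, this establishes that $s \cdot A$ is $\mathcal P$-compatible. Since $A \in \mathcal P_{\max}(d)$ was arbitrary and $d \in \mathbb N$ was arbitrary, this shows $s \in \Delta_{\mathcal P}(d)$ for all $d \geq 1$, as claimed.

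There is no hard step here: the proof is essentially bookkeeping once the two preceding lemmas have pinned down the meaning of $\mathcal P_{\max}$ and $\mathcal P_{\min}$ in terms of the slack-matrix-like equation $S_{\mathcal P} = \hat H \hat V$. The only subtlety to watch is that ``entrywise non-negative'' for $T$ is the scalar notion while ``entrywise positive'' for the matrix tuples $\hat H \hat A$ and $C$ means PSD block by block; the compatibility of these two notions under matrix multiplication (scalars times PSD yields PSD, and sums of PSD are PSD) is what makes the factorization $\hat D_s = \hat V T \hat H$ convert $\mathcal P$-operator data into $\mathcal P$-compatibility data.
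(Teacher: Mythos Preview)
Your proof is correct and follows essentially the same approach as the paper: define $C := T\hat H\hat A$, use Lemma~\ref{lem:P-operators-hat-H} and the non-negativity of $T$ to see that $C$ is entrywise PSD, and then apply Lemma~\ref{lem:P-compatible-hat-V} to the identity $\hat V C = \hat D_s \hat A = \widehat{s\cdot A}$. The paper's proof is identical in structure, just more terse.
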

\begin{proof}
    Consider a scaling vector $s$ satisfying the hypotheses of the statement, and let $A \in \mathbb R^g \otimes \mathcal M^{sa}_d(\mathbb C)$ a $g$-tuple of $P$-operators. From the hypothesis, we have 
    $$\hat V T \hat H = \hat D_s \implies \hat V \underbrace{T \hat H \hat A}_{C} = \hat D_s \hat A = \widehat{s \cdot A}.$$
    Since $A$ are $\mathcal P$-operators, it follows from Lemma \ref{lem:P-operators-hat-H} that $\hat H \hat A$ is entrywise positive semidefinite, hence so is $C$. We conclude that $s \cdot A$ are $\mathcal P$-compatible by applying Lemma \ref{lem:P-compatible-hat-V}.
\end{proof}
Note that the existence of the non-negative matrix $T$ can be formulated as a \emph{linear program}, so the theorem above provides a computationally tractable way to produce elements of the inclusion constants set
$$\bigcap_{d \geq 1} \Delta_{\mathcal P}(d).$$

Let us now apply this to the Birkhoff body $\mathcal B_N$. Recall that $\mathcal B_N$ has $k=N!$ extreme points and $r=N^2$ facets.

\begin{prop}\label{prop:LP-birkhoff}
Let $s$ be the constant vector with entries $1/(N-1)$ and $T = [\mathds{1}_{\langle h_j, v_i \rangle \neq 1} / (N \cdot N!)]_{ij}$. Then $\hat V T \hat H = \hat D_s$, proving that $1/(N-1) \in \Delta_{\mathcal B_N}(d)$ for all $d \geq 1$.
\end{prop}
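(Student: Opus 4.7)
The plan is to verify the matrix identity $\hat V T \hat H = \hat D_s$ directly; entrywise non-negativity of $T$ is automatic from its indicator definition, so Theorem \ref{thm:bound-from-LP} will then immediately conclude. I will focus on $N \geq 3$, since for $N=2$ the polytope $\mathcal B_2 = [-1/2,1/2]$ is a simplex and the inclusion is trivial by Proposition \ref{prop:min-max-simplex}.

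The first step is to rewrite $T$ in a closed form. Introduce the auxiliary vectors $\tilde e_a := e_a$ for $a \in [N-1]$ and $\tilde e_N := -\mathbf 1 \in \mathbb R^{N-1}$. Direct inspection of the four facet families in Proposition \ref{prop:birkhoff-body} collapses them uniformly into $h_{ab} = -N\,\tilde e_a \tilde e_b^\top$, viewed as a matrix in $\mathcal M_{N-1}(\mathbb R)$. Using this together with $v_\pi = P_\pi^{(N-1)} - J_{N-1}/N$, a case analysis shows that $\langle h_{ab}, v_\pi\rangle$ only ever attains the values $1$ and $1-N$, and equals $1-N$ precisely when $\pi(a)=b$. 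Therefore
\[
T_{\pi,(a,b)} = \frac{(P_\pi)_{ab}}{N\cdot N!}.
\]

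Next, I would decompose $\hat V T \hat H$ into four blocks of sizes $(N-1)^2 \times (N-1)^2$, $(N-1)^2 \times 1$, $1 \times (N-1)^2$, and $1 \times 1$, and verify each against the corresponding block of $\hat D_s$. The $1\times 1$ block evaluates to $(N\cdot N!)^{-1}\sum_{\pi,(a,b)}(P_\pi)_{ab} = 1$. The $(N-1)^2 \times 1$ block equals $(N!)^{-1}\sum_\pi v_\pi$, which vanishes since $\sum_\pi P_\pi^{(N-1)} = (N-1)!\,J_{N-1}$. The $1 \times (N-1)^2$ block reduces to $-N^{-2}\sum_{(a,b)} h_{ab}(c,d)$ at any $(c,d)\in [N-1]^2$, and a direct enumeration of the four facet families produces $-N+N+N-N=0$.

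The substantive part is the top-left $(N-1)^2 \times (N-1)^2$ block, which must equal $(N-1)^{-1}\,I_{(N-1)^2}$. Assembling the matrix $\tilde E := [e_1, \ldots, e_{N-1}, -\mathbf 1] \in \mathcal M_{(N-1)\times N}(\mathbb R)$, the factorization $h_{ab} = -N\tilde e_a\tilde e_b^\top$ gives $\sum_{(a,b)}(P_\pi)_{ab}h_{ab} = -N\,\tilde E P_\pi \tilde E^\top$, so the entry at $((c,d),(e,f))$ becomes
\[
\frac{1}{N!} \sum_{\alpha,\beta=1}^N \tilde E(e,\alpha)\,\tilde E(f,\beta)\sum_\pi\!\Bigl(P_\pi(c,d)-\tfrac{1}{N}\Bigr)(P_\pi)_{\alpha\beta}.
\]
The inner sum over $\pi$ is a standard counting identity equal to $(N-2)!\bigl[N\delta_{\alpha c}\delta_{\beta d} + 1/N - \delta_{\alpha c} - \delta_{\beta d}\bigr]$. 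The crux is the observation that the row sums of $\tilde E$ vanish, $\sum_\alpha \tilde E(e,\alpha) = 0$ for every $e \in [N-1]$, which annihilates every term except the one proportional to $\delta_{\alpha c}\delta_{\beta d}$. What remains is $(N-2)!\cdot N\cdot \delta_{ec}\delta_{fd}$, and dividing by $N!$ yields exactly $(N-1)^{-1}\delta_{(c,d),(e,f)}$. The main obstacle is managing the bookkeeping in this final cancellation, but the uniform factorization of $h_{ab}$ together with the vanishing row sums of $\tilde E$ are precisely what make almost every cross term collapse.
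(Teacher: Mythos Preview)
Your proof is correct and follows essentially the same route as the paper: both rewrite $T$ via the identity $\langle h_{ij}, v_\pi\rangle \in \{1,1-N\}$ with the value $1-N$ occurring exactly when $\pi(i)=j$, decompose $\hat V T \hat H = \hat D_s$ into the four block conditions, and dispatch the off-diagonal blocks using that the row/column sums of $T$ are constant together with $\sum_\pi v_\pi = 0$. Your handling of the top-left block is in fact more informative than the paper's, which stops at ``a simple (but tedious) analysis of the expression''; your uniform parametrization $h_{ab} = -N\,\tilde e_a\tilde e_b^\top$ and the vanishing row sums of $\tilde E$ make the cancellation mechanism explicit rather than leaving it to the reader.
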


\begin{proof}
    The proof follows from direct calculation. First, note that the condition $\hat V T \hat H = \hat D_s$ is equivalent to 
    $$ -V T  H = \operatorname{diag}(s)=:D_s, \quad 1_k^\top TH = 0, \quad VT 1_r = 0, \quad \langle 1_k, T1_r \rangle = 1.$$
    In other words, $T$ is a bistochastic matrix, with marginals in the kernels of $V$, resp.~$H^\top$, which satisfies $-VTH  = D_s$. The columns of $V$ are the extreme points of $\mathcal B_N$, and correspond to shifted truncated permutation matrices. They have coordinates
    $$\forall \pi \in \mathcal S_N, \, \forall x,y \in [N-1], \qquad v_\pi(x,y) = -\frac{1}{N} + \mathds 1_{\pi(x) = y}.$$

    The rows of $H$ correspond to the inequalities \eqref{eq:BN-face-elt}-\eqref{eq:BN-face-sum} defining the Birkhoff body, and they can be indexed by $[N]^2$; they have coordinates
    \begin{align*}
    \forall i,j \in [N-1] &\qquad h_{ij}(x,y) = -N \mathds 1_{i=x, j=y}\\
    j=N, \, \forall i \in [N-1] &\qquad h_{ij}(x,y) = N \mathds 1_{i=x}\\
    i=N, \, \forall j \in [N-1] &\qquad h_{ij}(x,y) = N \mathds 1_{j=y}\\
    i,j =N &\qquad h_{ij}(x,y) = -N .
    \end{align*}

    In particular, we have the crucial relation 
    $$\langle h_{ij}, v_\pi \rangle = \begin{cases}
        1 &\qquad \text{ if } \pi(i) \neq j\\
        1-N &\qquad \text{ if } \pi(i) = j.\end{cases}$$
        
    It follows that a given extreme point $v_\pi$ belongs to all the facets $\{x \in \mathbb R^g \, : \, \langle h_{ij},x \rangle=1\}$ such that $\pi(i) \neq j$; there are $N^2-N$ such facets. In other words, the $\pi$-th row of the matrix $T$ has precisely $N$ non-zero entries. It follows that the row sums of $T$ are all equal to $1/N!$. The condition 
    $$VT1_r = \frac{1}{N!}V 1_{N^2} = 0$$
    follows from the fact that the average of the extreme points $v_\pi$ is $0$; equivalently, this can be seen to follow from the fact that the average of the permutation matrices is the matrix $J/N$. A similar reasoning yields the condition $1_{N!}^\top TH = 0$. Finally, the main condition $-VTH=I/(N-1)$ follows from a simple (but tedious) analysis of the expression:
    $$(VTH)_{(x_1, y_1), (x_2, y_2)} = \sum_{\pi \in \mathcal S_N} \sum_{i,j=1}^N \frac{\mathds 1_{\pi(i)=j}}{N\cdot N!} v_{\pi}(x_1,y_1) h_{ij}(x_2,y_2).$$
\end{proof}
\begin{remark}
    Note how the Birkhoff body inclusion constant $1/(N-1)$ derived in this section is better than the ones obtained using the comparison technique (Corollary \ref{cor:BN-inclusion-conparison}) and the symmetrization technique (Corollary \ref{cor:BN-inclusion-symmetrization}). 
\end{remark}
\begin{cor}\label{cor:BN-semiclassical}
    Let $N$, $d \in \mathbb N$. For any magic square $A \in \mathcal M_N(\mathcal M_d^{\textrm{sa}}(\mathbb C))$, the convex combination
    \begin{equation}\label{eq:BN-convex}
        B:=\frac{1}{N-1} A + \frac{N-2}{N-1} \cdot  \frac{J_N}{N}
    \end{equation}
    is a semiclassical magic square. 
\end{cor}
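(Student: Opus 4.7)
The plan is to combine the inclusion constant from Proposition \ref{prop:LP-birkhoff} with the correspondence between magic squares and the Birkhoff body established in Theorem \ref{thm:min-max-magic-square}. The convex combination in Eq.~\eqref{eq:BN-convex} is exactly what one obtains after applying the scaling $\frac{1}{N-1}$ to the Birkhoff-body tuple associated with $A$ and then ``un-shifting'' back via the map $X \mapsto \tilde X$.

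\textbf{Step 1: Encode $A$ as a tuple in $(\mathcal B_N)_{\max}(d)$.} Define $X \in \mathcal M_{N-1}(\mathcal M_d^{\mathrm{sa}}(\mathbb C))$ by
$$X_{ij} := A_{ij} - \frac{I_d}{N}, \qquad i,j \in [N-1].$$
A direct verification, using that $A$ is a magic square (rows and columns sum to $I_d$), shows that $\tilde X = A$. By the first point of Theorem \ref{thm:min-max-magic-square}, the tuple $X$ lies in $(\mathcal B_N)_{\max}(d)$.

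\textbf{Step 2: Apply the inclusion constant.} Proposition \ref{prop:LP-birkhoff} states that $\frac{1}{N-1} \in \Delta_{\mathcal B_N}(d)$, so
$$Y := \frac{1}{N-1} X \in (\mathcal B_N)_{\min}(d).$$
By the second point of Theorem \ref{thm:min-max-magic-square}, the block matrix $\tilde Y$ is a semiclassical magic square.

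\textbf{Step 3: Identify $\tilde Y$ with $B$.} It remains to check that $\tilde Y = B$, where $B$ is defined in Eq.~\eqref{eq:BN-convex}. For $i,j \in [N-1]$, using the definition of $\tilde{\cdot}$ and the shift $I_d/N$, one computes
$$\tilde Y_{ij} = \frac{I_d}{N} + \frac{1}{N-1}\left(A_{ij} - \frac{I_d}{N}\right) = \frac{1}{N-1} A_{ij} + \frac{N-2}{N(N-1)} I_d = B_{ij}.$$
For $j=N$ and $i \in [N-1]$, the magic-square identity $\sum_{k=1}^{N-1} A_{ik} = I_d - A_{iN}$ yields
$$\tilde Y_{iN} = \frac{I_d}{N} - \frac{1}{N-1}\sum_{k=1}^{N-1}\left(A_{ik} - \frac{I_d}{N}\right) = \frac{1}{N-1} A_{iN} + \frac{N-2}{N(N-1)} I_d = B_{iN},$$
and the case $i=N$, $j \in [N-1]$ is symmetric. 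Finally, using $\sum_{k,l=1}^{N-1} A_{kl} = (N-2)I_d + A_{NN}$ (again from magic-square normalization), the bottom-right entry matches as well. Thus $\tilde Y = B$, and the corollary follows.

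The calculations in Step 3 are purely bookkeeping; the conceptual work is entirely done by Proposition \ref{prop:LP-birkhoff} and Theorem \ref{thm:min-max-magic-square}. There is no real obstacle beyond checking that the affine shift $J_N/N \otimes I_d$ interacts correctly with the scaling, which is exactly what the definition of $\tilde X$ is engineered to handle.
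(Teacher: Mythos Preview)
Your proof is correct and follows essentially the same approach as the paper's: both use Proposition \ref{prop:LP-birkhoff} together with Theorem \ref{thm:min-max-magic-square}, and both reduce to checking that the Birkhoff-body tuple associated to $B$ is $\frac{1}{N-1}$ times that of $A$. The paper compresses your Step~3 into the single identity $B^{(N-1)} - J_{N-1}/N = \frac{1}{N-1}\bigl(A^{(N-1)} - J_{N-1}/N\bigr)$, while you unpack the verification entry by entry; the substance is identical.
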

\begin{proof}
    This follows from Proposition \ref{prop:LP-birkhoff} above and from the identity
    $${B}^{(N-1)} - \frac{J_{N-1}}{N} = \frac{1}{N-1}\left( {A}^{(N-1)} - \frac{J_{N-1}}{N} \right).$$
\end{proof}

In \cite[Theorem 12-(ii)]{cuevas2020quantum}, the authors have shown that a magic square $A$ having the property that
\begin{equation}\label{eq:condition-magic-square-semiclassical}
    \forall \pi \in \mathcal S_N \qquad \sum_{k=1}^N A_{k,\pi(k)} \geq \frac{N-2}{N-1}I_d
\end{equation}
is semiclassical. This fact can obtained from the following rewriting of $A$:
\begin{align*}
    A &= \sum_{i,j=1}^N e_ie_j^* \otimes A_{ij} \\
    &= \sum_{\pi \in S_N} \pi \otimes \frac{1}{(N-2)!N}\left[ \sum_{k=1}^N A_{k,\pi(k)} - \frac{N-2}{N-1}I_d\right].
\end{align*}
Indeed, if the condition Eq.~\eqref{eq:condition-magic-square-semiclassical} is satisfied, then one can take 
$$C_\pi := \frac{1}{(N-2)!N}\left[ \sum_{k=1}^N A_{k,\pi(k)} - \frac{N-2}{N-1}I_d\right] \geq 0$$
as the POVM certifying semiclassicality in Definition \ref{def:magic-square}. We would like to end this section by emphasizing the close relation between this condition and the one from Corollary \ref{cor:BN-semiclassical}. A magic square $B$ can we written as in Eq.~\eqref{eq:BN-convex} if and only if $B_{ij} \geq (N-2)/(N(N-1))I_d$ for all $i,j \in [N]$. This condition implies the one in Eq.~\eqref{eq:condition-magic-square-semiclassical}, but the converse is not true. Indeed, consider the bistochastic matrix 
$$\begin{bmatrix}
     1 & 0 & 0 \\
    0 & \frac 1 2 & \frac 1 2 \\
    0 & \frac 1 2 & \frac 1 2
\end{bmatrix}$$
which clearly satisfies Eq.~\eqref{eq:condition-magic-square-semiclassical} but which cannot be written as in Eq.~\eqref{eq:BN-convex} because of its $0$ entries.

\bigskip 

Finally, let us apply Theorem \ref{thm:bound-from-LP} to the case of the pyramid polytope from Eq.~\eqref{eq:def-pyramid} corresponding to two POVMs with three outcomes sharing one effect. Recall from Section \ref{sec:POVM-common-elements} that this polytope $\mathcal P$ is a pyramid with a square basis, having defining matrices $\hat V$ and $\hat H$ given respectively by
$$\hat V = \begin{bmatrix}
    2/3 & -1/3 & -1/3 & -1/3 & -1/3 \\
    -1/3 & -1/3 & -1/3 & 2/3 & 2/3 \\
    -1/3 & -1/3 & 2/3 & -1/3 & 2/3 \\
    1 & 1 & 1 & 1 & 1
\end{bmatrix} \quad \text{ and } \quad \hat H = \begin{bmatrix}
    3 & 0 & 0 & 1 \\
    0 & 3 & 0 & 1 \\
    0 & 0 & 3 & 1 \\    
    -3 & -3 & 0 & 1 \\    
    -3 & 0 & -3 & 1
\end{bmatrix}.$$

A simple calculation shows that taking 
$$T:=\frac{1}{30}\begin{bmatrix}
    6 & 1 & 1 & 1 & 1 \\
    1 & 0 & 0 & 2 & 2 \\
    1 & 0 & 2 & 2 & 0 \\
    1 & 2 & 0 & 0 & 2 \\
    1 & 2 & 2 & 0 & 0 
\end{bmatrix}$$
yields $\hat V T \hat H = \operatorname{diag}(2/5, 2/5, 2/5, 1)$, showing that $(2/5, 2/5, 2/5) \in \Delta_{\mathcal P}(d)$ for all dimensions $d$. This implies that the two POVMs $(\frac{2}{5} A + \frac{1}{5} I_d, \frac{2}{5} B + \frac{1}{5} I_d, \frac{3}{5}I_d - \frac{2}{5}A - \frac{2}{5}B)$ and $(\frac{2}{5} A + \frac{1}{5} I_d, \frac{2}{5} C + \frac{1}{5} I_d, \frac{3}{5}I_d - \frac{2}{5}A - \frac{2}{5}C)$ are compatible. In addition, the proof of Theorem \ref{thm:bound-from-LP} shows that the joint measurement has the form 
\begin{center}
\begin{tabular}{cccl}
\cline{1-3}
\multicolumn{1}{|c|}{$Q_1$} & \multicolumn{1}{c|}{0}    & \multicolumn{1}{c|}{0}    & $=\frac{2}{5} A + \frac{1}{5} I_d$                       \\ \cline{1-3}
\multicolumn{1}{|c|}{0}    & \multicolumn{1}{c|}{$Q_5$} & \multicolumn{1}{c|}{$Q_4$} & $=\frac{2}{5} B + \frac{1}{5} I_d$                       \\ \cline{1-3}
\multicolumn{1}{|c|}{0}    & \multicolumn{1}{c|}{$Q_3$} & \multicolumn{1}{c|}{$Q_2$} & $=\frac{3}{5}I_d - \frac{2}{5}A - \frac{2}{5}B$                \\ \cline{1-3}
\phantom{$I_d$}$=\frac{2}{5} A + \frac{1}{5} I_d$\phantom{$-A-$}                         & \phantom{$I_d$}$=\frac{2}{5} C + \frac{1}{5} I_d$\phantom{$-A-$}                        & $= \frac{3}{5}I_d - \frac{2}{5}A - \frac{2}{5}C$                 & 
\end{tabular}
\end{center}
with elements 
\begin{align*}
    Q_1 &= \frac{2}{5} A + \frac{1}{3}I_d,\\
    Q_2 &= -\frac{3}{10} A -\frac{1}{5} B -\frac{1}{5} C + \frac{2}{5}I_d,\\
    Q_3 &= -\frac{1}{10} A -\frac{1}{5} B +\frac{1}{5} C + \frac{1}{5}I_d,\\
    Q_4 &= -\frac{1}{10} A + \frac{1}{5} B - \frac{1}{5} C + \frac{1}{5}I_d,\\
    Q_5 &= \frac{1}{10} A + \frac{1}{5} B + \frac{1}{5} C.
\end{align*}

For the example in Eq.~\eqref{eq:example-123-145}, this means that after adding sufficient noise, the joint POVM has elements
\begin{center}
\begin{tabular}{cccl}
\cline{1-3}
\multicolumn{1}{|c|}{$Q_1$} & \multicolumn{1}{c|}{0}    & \multicolumn{1}{c|}{0}    & $=\frac{2}{5}I_2$                       \\ \cline{1-3}
\multicolumn{1}{|c|}{0}    & \multicolumn{1}{c|}{$Q_5$} & \multicolumn{1}{c|}{$Q_4$} & $=\frac{1}{5} e_1 e_1^\ast + \frac{1}{5}I_2$                       \\ \cline{1-3}
\multicolumn{1}{|c|}{0}    & \multicolumn{1}{c|}{$Q_3$} & \multicolumn{1}{c|}{$Q_2$} & $=\frac{1}{5} e_2 e_2^\ast + \frac{1}{5}I_2$                \\ \cline{1-3}
\phantom{$I_d$}$=\frac{2}{5}I_2$\phantom{$-A-$}                         & \phantom{$I_d$}$=\frac{1}{5} f_1 f_1^\ast + \frac{1}{5}I_2$\phantom{$-A-$}                        & $=\frac{1}{5} f_2 f_2^\ast + \frac{1}{5}I_2$                 & 
\end{tabular}
\end{center}
where 
\begin{align*}
    Q_1 &= \frac{2}{5}I_2,\\
    Q_2 &= \frac{1}{10}\left( \frac{1}{2} I_2 + f_2 f_2^\ast + e_2 e_2^\ast\right),\\
    Q_3 &= \frac{1}{10}\left( \frac{1}{2} I_2 + f_1 f_1^\ast + e_2 e_2^\ast\right),\\
    Q_4 &= \frac{1}{10}\left( \frac{1}{2} I_2 + f_2 f_2^\ast + e_1 e_1^\ast\right),\\
    Q_5 &= \frac{1}{10}\left( \frac{1}{2} I_2 + f_1 f_1^\ast + e_1 e_1^\ast\right).
\end{align*}
Again, we have written $e_1$, $e_2$ for the standard basis vectors in $\mathbb C^2$ and $f_1 = 1/\sqrt{2}(e_1 + e_2)$, $f_2 = 1/\sqrt{2}(e_1 - e_2)$. In this case, it is easy to see that the $Q_i$ indeed form a POVM. However, it can be checked with a semidefinite program inspired by \cite{wolf2009measurements} that we could have taken $s = 3/4> 2/5 $ in this case. Hence the question remains open if $s=2/5$ is optimal in this setup. Note that if $s$ is an inclusion constant for the polytope with square basis, $s \leq \frac{1}{\sqrt{2}}$ is a necessary requirement by the results in \cite{bluhm2018joint}, since for $A=0$, the problem reduces to the compatibility of two dichotomic POVMs.

\bigskip

\noindent\textbf{Acknowledgements.} 

The authors would like to thank Eric Evert for discussions concerning Lemma \ref{lem:arveson-EP-min}. Moreover, the authors would like to thank the anonymous referee for Lemma \ref{lem:equivalent-claim} and Theorem \ref{thm:A-almost-commuting}, hereby answering a question in a previous draft of this article. I.N.~was supported by the ANR projects \href{https://esquisses.math.cnrs.fr/}{ESQuisses}, grant number ANR-20-CE47-0014-01 and \href{https://www.math.univ-toulouse.fr/~gcebron/STARS.php}{STARS}, grant number ANR-20-CE40-0008, and by the PHC program \emph{Star} (Applications of random matrix theory and abstract harmonic analysis to quantum information theory). S.S. has received funding from the European Union's Horizon 2020 research and innovation programme under the Marie Sklodowska-Curie grant agreement No. 101030346. 

\bibliographystyle{unsrt}
\bibliography{spectralit}

\end{document}